\documentclass{article}
\usepackage[preprint]{neurips_2026}
\usepackage[utf8]{inputenc}
\usepackage[T1]{fontenc}
\usepackage{hyperref}
\usepackage{url}
\usepackage{booktabs}
\usepackage{amsfonts}
\usepackage{amssymb}
\usepackage{amsmath}
\usepackage{amsthm}
\usepackage{mathtools}
\usepackage{microtype}
\usepackage{xcolor}
\usepackage{enumitem}
\usepackage{multirow}
\usepackage{array}
\usepackage{graphicx}
\usepackage{subcaption}
\usepackage{bm}
\usepackage{algorithm}
\usepackage{algpseudocode}

\title{Why Do Aligned LLMs Remain Jailbreakable:\\Refusal-Escape Directions, Operator-Level Sources, and Safety--Utility Trade-off}

\author{%
Yu Chen\textsuperscript{1,2} \quad
Yuanhao Liu\textsuperscript{1,2} \quad
Qi Cao\textsuperscript{1}\thanks{Corresponding author.}
\\
\textsuperscript{1}Institute of Computing Technology, Chinese Academy of Sciences\\
\textsuperscript{2}University of Chinese Academy of Sciences, Beijing, China\\
\texttt{\{chenyu24s,caoqi\}@ict.ac.cn}
}

\newtheorem{theorem}{Theorem}
\newtheorem{proposition}{Proposition}

\newtheorem{definition}{Definition}

\newtheorem{assumption}{Assumption}

\newcommand{\R}{\mathbb{R}}
\newcommand{\inner}[2]{\left\langle #1, #2 \right\rangle_F}
\newcommand{\innersmall}[2]{\langle #1, #2 \rangle_F}

\newcommand{\normsmall}[1]{\lVert #1 \rVert_F}

\setlist[enumerate]{leftmargin=1.5em,labelsep=0.4em,itemsep=0.1em,topsep=0.15em}
\setlist[itemize]{leftmargin=1.25em,labelsep=0.4em,itemsep=0.08em,topsep=0.12em}

\begin{document}
\maketitle

\begin{abstract}
Aligned large language models (LLMs) remain vulnerable to jailbreak attacks. Recent mechanistic studies have identified latent features and representation shifts associated with jailbreak success, but they leave a more fundamental question open: \emph{why do aligned LLMs remain jailbreakable, and what structural vulnerabilities in the model make this possible?} We study this question through a continuous input-transformation view. Our theoretical finding is that aligned models can still exhibit \emph{Refusal-Escape Directions} (RED): local perturbation directions around a harmful input that shift the model's behavior from refusal to answering while preserving the model's harmful-semantics interpretation. From this perspective, a jailbreak is not only a successful discrete prompt construction, but can also be understood as a refusal-to-answer behavior transition induced by continuously perturbing a harmful input along RED. We then prove that RED can be exactly decomposed into contributions from \emph{operator-level sources} across the model's operator structure, and identify normalization, residual-wiring, and terminal sources as analytically constrained operator-level sources. To eliminate RED, the shared expressive modules---self-attention and MLP---must eliminate the contributions from these analytically constrained sources while preserving the mechanisms that support benign responses. These competing requirements give rise to a conditional \emph{safety--utility trade-off}. Experiments across multiple models and attack methods empirically analyze RED from two complementary perspectives and show that added token dimensions can expose RED, while successful jailbreaks exhibit refusal-to-answer shifts largely aligned with terminal-source contributions.
\end{abstract}

\section{Introduction}
\label{sec:intro}

Safety alignment methods such as reinforcement learning from human feedback are designed to make large language models helpful while refusing harmful requests \citep{christiano2017deep,ouyang2022training}. Yet aligned LLMs remain vulnerable to jailbreak attacks, including adversarial suffixes, prompt rewriting, and automated search over jailbreak templates \citep{zou2023universal,ding2024renellm,yu2023gptfuzzer,liu2023autodan}. These attacks can bypass alignment safeguards and induce models to produce harmful answers that they should otherwise refuse, creating serious safety risks. Understanding why this remains possible is therefore a central problem in AI safety.

Recent work has provided initial analyses of refusal behavior and jailbreak success in aligned LLMs. Refusal behavior has been associated with low-dimensional directions or sparse latent features \citep{arditi2024refusal,yeo2025sae}, and safety alignment can associate harmfulness recognition with refusal behavior \citep{zhou2024alignment}. Empirical analyses of jailbreaks suggest that successful attacks can suppress refusal-related signals and shift harmful prompts toward seemingly harmless regions in representation space \citep{ball2024latent,lin2024representation}. These studies provide valuable insight into how jailbreak attacks work, with additional related work discussed in Appendix~\ref{app:related_work}. However, they remain largely phenomenological and leave a more fundamental theoretical question unanswered: \emph{Why do aligned LLMs remain jailbreakable, and what structural vulnerabilities in the model make this possible?} This question motivates our study.

To address this question, we first seek to formalize how jailbreaks work. More precisely, we study how input modifications that turn a harmful prompt into its jailbreak counterpart propagate through an LLM and induce a harmful answer. We formalize these modifications as a continuous input transformation. This perspective motivates our central theoretical object---\emph{Refusal-Escape Directions} (RED): local perturbation directions around a harmful input that shift the model from refusal to answering, while preserving the model's harmful-semantics interpretation. By continuously perturbing a harmful input along RED, an attacker can accumulate these local behavior shifts into a full refusal-to-answer transition, eventually inducing the model to answer in accordance with the harmful semantics of the input.

To understand the structural origin of RED, we decompose RED into contributions from \emph{operator-level sources} across the model's operator structure. Among these sources, normalization, residual-wiring, and terminal sources play a special role as analytically constrained operator-level sources: eliminating these sources requires analytic vanishing conditions; otherwise, the source remains nonzero on a dense open subset. Thus, to eliminate RED, the shared expressive modules---self-attention and MLP---must eliminate contributions from these analytically constrained sources while preserving the mechanisms that support benign responses. These competing requirements give rise to a conditional \emph{safety--utility trade-off}. Our experiments across multiple models and attack methods empirically analyze RED from two perspectives and show that added token dimensions can expose RED, while successful jailbreaks exhibit refusal-to-answer shifts largely aligned with terminal-source contributions.

Our contributions are fourfold:
\begin{itemize}
\item We introduce \emph{Refusal-Escape Directions} (RED) as a theoretical object for studying jailbreakability under a continuous input-transformation view.

\item We prove that RED can be exactly decomposed into contributions from \emph{operator-level sources} according to the model's operator structure.

\item We identify normalization, residual-wiring, and terminal sources as analytically constrained operator-level sources, showing that eliminating RED gives rise to a conditional \emph{safety--utility trade-off}.

\item We empirically analyze RED from two complementary perspectives, showing that added token dimensions can expose RED, while successful jailbreaks exhibit refusal-to-answer shifts largely aligned with terminal-source contributions.
\end{itemize}

\section{A continuous input-transformation view of jailbreaks}
\label{sec:jailbreak}

We begin with a basic question: when a harmful prompt is modified into a jailbreak prompt, why does the model's behavior shift from refusal to answer? More precisely, how do input modifications propagate through the model and induce a harmful answer? To study jailbreaks from this perspective, we adopt a continuous view of the input modifications. First, we represent the harmful prompt and its jailbreak counterpart in a common input space of matched dimension, and formalize the modifications as a continuous transformation in this space, represented by an absolutely continuous curve
\[
X(\eta):[\eta_0,\eta_\star]\to \R^{n_0\times d_0},
\]
where \(X(\eta_0)\) and \(X(\eta_\star)\) are the input embedding matrices of the harmful prompt and its jailbreak counterpart, respectively; \(n_0\) is the matched input length, and \(d_0\) is the embedding dimension.

To induce a harmful answer, the transformation must preserve the underlying harmful semantics of the input. Let \(\mathcal P(X(\eta))\subseteq \R^{n_0\times d_0}\) denote the first-order perturbation directions at \(X(\eta)\) that preserve the model's harmful-semantics interpretation, which we formalize in the next section. We therefore restrict attention to transformations whose tangent direction satisfies
\[
X'(\eta)\in \mathcal P(X(\eta))
\qquad
\text{for almost every } \eta\in[\eta_0,\eta_\star].
\]

To study changes in model behavior along the input transformation, we adopt a nonzero target-behavior subspace \(\mathcal Y\subseteq\R^{n_T\times d_T}\) in the model's hidden-state space. This subspace is a first-order proxy for directions relevant to the model's answer-versus-refusal behavior, such as logit-difference directions, linear-probe directions, or latent features \citep{arditi2024refusal, he2024jailbreaklens}.

To analyze how the input transformation affects model behavior, we model the forward computation of an LLM as a differentiable chain
\begin{equation}
H^{(0)}(X)=X, 
\qquad
H^{(t+1)}(X)=F_t\!\left(H^{(t)}(X)\right),
\qquad
t=0,\dots,T-1,
\label{eq:state_chain}
\end{equation}
where \(F_t\) denotes the \(t\)-th model block, \(H^{(t)}(X)\) is the hidden-state matrix before block \(t\), and \(H^{(T)}(X)\in\R^{n_T\times d_T}\) lies in the hidden-state space where the target-behavior subspace \(\mathcal Y\) is defined. Let \(k=\dim(\mathcal Y)\ge 1\), and let \(\{Y_i\}_{i=1}^{k}\)  be any Frobenius-orthonormal basis of \(\mathcal Y\). We define the target-behavior signal by
\begin{equation}
\Phi_{\mathcal Y}(X)
:=
P_{\mathcal Y}H^{(T)}(X)
=
\sum_{i=1}^k
\inner{H^{(T)}(X)}{Y_i}Y_i ,
\label{eq:target_signal_def}
\end{equation}
where \(P_{\mathcal Y}\) is the Frobenius-orthogonal projection onto \(\mathcal Y\).

Along the transformation, \(\Phi_{\mathcal Y}(X(\eta))\) is an absolutely continuous curve in \(\mathcal Y\), with derivative
\[
\frac{d}{d\eta}\Phi_{\mathcal Y}(X(\eta))
=
P_{\mathcal Y}J_{0:T}(X(\eta))X'(\eta),
\]
where \(J_{0:T}(X):=D H^{(T)}(X)\) is the Jacobian from \(X\) to \(H^{(T)}(X)\). Accordingly, the cumulative target-subspace displacement along the transformation is
\begin{equation}
\Delta_{\mathcal Y}
:=
\int_{\eta_0}^{\eta_\star}
P_{\mathcal Y}J_{0:T}(X(\eta))X'(\eta)\,d\eta
=
P_{\mathcal Y}\!\left(
H^{(T)}(X(\eta_\star))-H^{(T)}(X(\eta_0))
\right).
\label{eq:cumulative_target_subspace_displacement}
\end{equation}

By continuously perturbing a harmful input along directions that preserve the model's harmful-semantics interpretation, a jailbreak succeeds if the cumulative displacement is sufficient to move the model's answer-versus-refusal target-behavior signal from a refusal-associated region toward an answer-associated region. The key objects to be characterized are therefore the perturbation directions \(X'(\eta)\) that lie in \(\mathcal P(X(\eta))\) and can change the target-behavior signal \(\Phi_{\mathcal Y}(X(\eta))\).

\section{Refusal-escape directions and operator-level decomposition}
\label{sec:refusal_escape_directions}

\subsection{Harmful-semantics-sensitive subspace field and refusal-escape directions}

To formalize the harmful-semantics-preserving set \(\mathcal P(X)\), we adopt a local subspace-field assumption. The use of subspaces is motivated by mechanistic-interpretability work that studies behaviorally relevant concepts through latent features, directions, or low-dimensional subspaces in the model's representation space \citep{bereska2024mechreview,somvanshi2025bridging}. Since the model computation is highly nonlinear, the input-side directions that affect the model's harmful-semantics interpretation may vary with the input. We therefore use a subspace field rather than a single global subspace.

\begin{assumption}[Harmful-semantics-sensitive subspace field]
\label{ass:u_star}
There exist an open region \(\mathcal X\subseteq\R^{n_0\times d_0}\), with the model's harmful-semantics interpretations on \(\mathcal X\) consistent with the attacker's intended harmful objective, and a nonzero subspace field \(X\mapsto\mathcal U(X)\), with \(\mathcal U(X)\subseteq\R^{n_0\times d_0}\), such that, for every \(X\in\mathcal X\), the model's harmful-semantics interpretation is locally first-order sensitive only to perturbation directions in \(\mathcal U(X)\).  Equivalently, the first-order harmful-semantics-preserving perturbation set is
\begin{equation}
\mathcal P(X):=\mathcal U(X)^\perp
=
\{\Delta X:\inner{U}{\Delta X}=0\text{ for every }U\in\mathcal U(X)\}.
\label{eq:PX_def}
\end{equation}
\end{assumption}

Under Assumption~\ref{ass:u_star}, let \(r_X:=\dim(\mathcal U(X))\ge 1\), and let \(\{U_i(X)\}_{i=1}^{r_X}\) be any Frobenius-orthonormal basis of \(\mathcal U(X)\). We define the projections onto \(\mathcal U(X)\) and \(\mathcal U(X)^\perp\) by
\begin{equation}
P_{\mathcal U(X)}(V)
:=
\sum_{i=1}^{r_X}\inner{V}{U_i(X)}U_i(X),
\qquad
P_{\mathcal U(X)^\perp}(V)
:=
V-\sum_{i=1}^{r_X}\inner{V}{U_i(X)}U_i(X).
\label{eq:subspace_projection}
\end{equation}
For a subspace \(\mathcal S\), write \(P_{\mathcal U(X)}\mathcal S:=\{P_{\mathcal U(X)}W:W\in\mathcal S\}\) and \(P_{\mathcal U(X)^\perp}\mathcal S:=\{P_{\mathcal U(X)^\perp}W:W\in\mathcal S\}\).

Within the harmful-semantics-preserving set \(\mathcal P(X)\), perturbation directions relevant to jailbreak are those that can change the answer-versus-refusal target-behavior signal \(\Phi_{\mathcal Y}(X)\). We single out these directions and call them the Refusal-Escape Directions.

\begin{definition}[Refusal-Escape Directions (RED)]
For a target-behavior subspace \(\mathcal Y\) and \(X\in\mathcal X\), define the Refusal-Escape Directions by
\begin{equation}
\mathcal R_{\mathcal Y}(X)
:=
P_{\mathcal U(X)^\perp}
\bigl(J_{0:T}(X)^*\mathcal Y\bigr)
=
\left\{
P_{\mathcal U(X)^\perp}
\bigl(J_{0:T}(X)^*Y\bigr)
:
Y\in\mathcal Y
\right\},
\label{eq:red_def}
\end{equation}
where \(J_{0:T}(X)^*\) denotes the Frobenius adjoint of \(J_{0:T}(X)\). For any \(Y\in\mathcal Y\), we denote the refusal-escape direction associated with \(Y\) by
\begin{equation}
R_Y(X)
:=
P_{\mathcal U(X)^\perp}
\bigl(J_{0:T}(X)^*Y\bigr)
\in \mathcal R_{\mathcal Y}(X).
\label{eq:red_direction_def}
\end{equation}
\end{definition}

\begin{proposition}[Basic properties of RED]
\label{prop:r0_basic}
For any \(X\in\mathcal X\), \(\mathcal R_{\mathcal Y}(X)\) satisfies:
\begin{enumerate}
    \item \textbf{Harmful-semantics preservation:}
    \(\mathcal R_{\mathcal Y}(X)\subseteq\mathcal P(X)\).

    \item \textbf{Directions orthogonal to \(\mathcal R_{\mathcal Y}(X)\) within \(\mathcal P(X)\) have no first-order effect on the target-behavior signal:}
    for every \(\Delta X\in\mathcal P(X)\), the first-order variation of the target-behavior signal satisfies
    \begin{equation}
    D\Phi_{\mathcal Y}(X)[\Delta X]
    =
    P_{\mathcal Y}J_{0:T}(X)\Delta X
    =
    P_{\mathcal Y}J_{0:T}(X)
    \bigl[
    P_{\mathcal R_{\mathcal Y}(X)}\Delta X
    \bigr].
    \label{eq:red_first_order_effect}
    \end{equation}
    In particular, if \(\Delta X\perp \mathcal R_{\mathcal Y}(X)\), then \(D\Phi_{\mathcal Y}(X)[\Delta X]=0\).
\end{enumerate}
A detailed proof is given in Appendix~\ref{app:proof_r0_basic}.
\end{proposition}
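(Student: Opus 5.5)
The plan is to reduce both claims to elementary facts about Frobenius-orthogonal projections and adjoints on the finite-dimensional spaces \(\R^{n_0\times d_0}\) and \(\R^{n_T\times d_T}\). Throughout, write \(J:=J_{0:T}(X)\) and \(Q:=P_{\mathcal U(X)^\perp}\). We read \(P_{\mathcal R_{\mathcal Y}(X)}\) as the Frobenius-orthogonal projection onto the subspace \(\mathcal R_{\mathcal Y}(X)\). First we check that \(\mathcal R_{\mathcal Y}(X)\) is indeed a linear subspace. It is the image of the linear subspace \(\mathcal Y\) under the linear map \(Q J^*\), so this follows immediately.

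\textbf{Part 1.} This part is immediate. By \eqref{eq:subspace_projection}, \(Q V\) is \(V\) minus its component along the orthonormal basis \(\{U_i(X)\}\). Hence \(\inner{QV}{U_j(X)}=0\) for every \(j\), and by linearity \(\inner{QV}{U}=0\) for every \(U\in\mathcal U(X)\). Applying this with \(V=J^*Y\) shows that every element of \(\mathcal R_{\mathcal Y}(X)\) lies in \(\mathcal U(X)^\perp=\mathcal P(X)\), by \eqref{eq:PX_def}.

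\textbf{Part 2.} The first equality \(D\Phi_{\mathcal Y}(X)[\Delta X]=P_{\mathcal Y}J\Delta X\) is the chain rule applied to \eqref{eq:target_signal_def}, since \(P_{\mathcal Y}\) is linear. For the second equality, set \(\Delta X=A+B\) with \(A:=P_{\mathcal R_{\mathcal Y}(X)}\Delta X\) and \(B:=\Delta X-A\perp\mathcal R_{\mathcal Y}(X)\). It then suffices to show \(P_{\mathcal Y}JB=0\), i.e., \(\inner{JB}{Y}=0\) for all \(Y\in\mathcal Y\), because \(P_{\mathcal Y}Z=\sum_i\innersmall{Z}{Y_i}Y_i\).

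The key computation uses the adjoint:
\[
\inner{JB}{Y}=\inner{B}{J^*Y}=\inner{B}{QJ^*Y}+\inner{B}{(I-Q)J^*Y}.
\]
The first term vanishes because \(QJ^*Y=R_Y(X)\in\mathcal R_{\mathcal Y}(X)\) and \(B\perp\mathcal R_{\mathcal Y}(X)\). For the second term, note that \((I-Q)J^*Y=P_{\mathcal U(X)}J^*Y\in\mathcal U(X)\), so it suffices to show \(B\in\mathcal P(X)=\mathcal U(X)^\perp\).

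This is the only step requiring care, and it is where the hypothesis \(\Delta X\in\mathcal P(X)\) is used. We have \(\Delta X\in\mathcal P(X)\) by hypothesis, and \(A\in\mathcal R_{\mathcal Y}(X)\subseteq\mathcal P(X)\) by Part 1. Since \(\mathcal P(X)\) is a subspace, \(B=\Delta X-A\in\mathcal P(X)\). Equivalently, \(Q\) is self-adjoint, so \(\inner{B}{QJ^*Y}=\inner{QB}{J^*Y}\), and \(QB=B\).

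Hence \(P_{\mathcal Y}JB=0\), which gives \eqref{eq:red_first_order_effect} by linearity. The ``in particular'' statement is the special case \(A=0\). The main obstacle is not technical. It is to be precise that \(P_{\mathcal R_{\mathcal Y}(X)}\) denotes the orthogonal projection onto the subspace \(\mathcal R_{\mathcal Y}(X)\), and to note that the restriction \(\Delta X\in\mathcal P(X)\) is genuinely needed. For a general \(\Delta X\), the \(\mathcal U(X)\)-component of \(J^*Y\) could still pair nontrivially with \(B\).
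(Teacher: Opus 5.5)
Your proof is correct and follows essentially the same route as the paper. Both arguments rest on four ingredients:
\begin{enumerate}
\item the adjoint identity \(\innersmall{J_{0:T}(X)\Delta X}{Y}=\innersmall{\Delta X}{J_{0:T}(X)^*Y}\);
\item the splitting of \(J_{0:T}(X)^*Y\) into its \(\mathcal U(X)\) and \(\mathcal U(X)^\perp\) components;
\item the splitting of \(\Delta X\) along \(\mathcal R_{\mathcal Y}(X)\) and its orthogonal complement;
\item Part~1, used to place \(P_{\mathcal R_{\mathcal Y}(X)}\Delta X\) in \(\mathcal U(X)^\perp\).
\end{enumerate}

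The only difference is bookkeeping. You show that the residual \(\Delta X-P_{\mathcal R_{\mathcal Y}(X)}\Delta X\) is annihilated by \(P_{\mathcal Y}J_{0:T}(X)\), while the paper chains inner-product equalities directly. Your explicit check that \(\mathcal R_{\mathcal Y}(X)\) is a linear subspace, needed for \(P_{\mathcal R_{\mathcal Y}(X)}\) to make sense, makes precise a point the paper leaves implicit.
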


Proposition~\ref{prop:r0_basic} gives the operational meaning of RED. Within the harmful-semantics-preserving set \(\mathcal P(X)\), a perturbation \(\Delta X\) can change the answer-versus-refusal target-behavior signal \(\Phi_{\mathcal Y}(X)\) at first order only through its projection onto \(\mathcal R_{\mathcal Y}(X)\). Ideally, within the harmful region \(\mathcal X\), the safety alignment mechanism would make answer-versus-refusal behavior depend only on the harmful semantics itself: once the harmful semantics are preserved, the model's refusal behavior should not be locally steerable toward answering. A nontrivial RED breaks this ideal relation. It provides semantics-preserving local degrees of freedom that can still modulate the model's answer-versus-refusal behavior, and is therefore the structural vulnerability that we use to study jailbreakability.

\subsection{Operator-level decomposition of refusal-escape directions}
\label{sec:decomposition}

To understand how RED arises, we trace it through the model's blocks. Fix \(X\in\mathcal X\), and let \(J_t(X)=D F_t(H^{(t)}(X))\) denote the Jacobian of block \(t\) at \(H^{(t)}(X)\). We first push forward the harmful-semantics-sensitive subspace through the model:
\begin{equation}
\mathcal U_0(X):=\mathcal U(X),
\qquad
\mathcal U_{t+1}(X):=J_t(X)\mathcal U_t(X),
\qquad
t=0,\ldots,T-1 .
\label{eq:forward_chain}
\end{equation}
Here \(\mathcal U_t(X)\) is the harmful-semantics-sensitive hidden-state subspace at step \(t\).

Let \(J_t(X)^*\) denote the Frobenius adjoint of \(J_t(X)\). For any \(Y\in\mathcal Y\), we pull back the target-behavior direction \(Y\) through the model by
\begin{equation}
G_T(Y;X):=Y,
\qquad
G_t(Y;X):=J_t(X)^*G_{t+1}(Y;X),
\qquad
t=T-1,\ldots,0 .
\label{eq:backward_direction_chain}
\end{equation}
Thus \(G_t(Y;X)\) is the target-behavior sensitivity direction at step \(t\).

For brevity, we write \(\mathcal U_t:=\mathcal U_t(X)\) and \(G_t:=G_t(Y;X)\). We then decompose \(G_t\) at each step into its harmful-semantics-aligned component and its remaining component:
\begin{equation}
A_t:=P_{\mathcal U_t}G_t,
\qquad
R_t:=P_{\mathcal U_t^\perp}G_t,
\qquad
G_t=A_t+R_t .
\label{eq:aligned_orthogonal_direction}
\end{equation}
The direction \(A_t\) is the component of the target-behavior sensitivity direction that lies in the harmful-semantics-sensitive hidden-state subspace \(\mathcal U_t\). It represents how the model's harmful-semantics interpretation influences the answer-versus-refusal behavior at step \(t\). In contrast, \(R_t\) is the remaining component after removing the \(\mathcal U_t\)-aligned part. By the chain rule, \(G_0=J_{0:T}(X)^*Y\). Since \(\mathcal U_0=\mathcal U(X)\), we have \(R_0=P_{\mathcal U(X)^\perp}(J_{0:T}(X)^*Y)=R_Y(X)\), so \(R_0\) is exactly the refusal-escape direction associated with \(Y\).

Two source types are important for understanding how \(R_0\) arises. The first is the \emph{leakage source}:
\begin{equation}
B_t
:=
P_{\mathcal U_t^\perp}
\bigl(J_t(X)^*A_{t+1}\bigr),
\qquad
t=0,\ldots,T-1.
\label{eq:Bt}
\end{equation}
It measures how the target-behavior sensitivity carried by the harmful-semantics-aligned component \(A_{t+1}\) leaks into the remaining component at step \(t\). The second source is the \emph{terminal source}:
\begin{equation}
R_T
=
P_{\mathcal U_T^\perp}Y .
\label{eq:RT}
\end{equation}
It measures the terminal mismatch between the target-behavior direction \(Y\) and the propagated harmful-semantics-sensitive subspace \(\mathcal U_T\), capturing the part of the model's answer-versus-refusal behavior that cannot be explained by its harmful-semantics interpretation.

We can now decompose RED into the contributions induced by these sources.

\begin{proposition}[Exact decomposition of RED]
\label{prop:exact_decomposition}
Fix \(X\in\mathcal X\). For every \(Y\in\mathcal Y\) and every \(t=0,\ldots,T-1\),
\begin{equation}
R_t
=
B_t
+
J_t(X)^*R_{t+1}.
\label{eq:recurrence}
\end{equation}
Consequently, writing
\begin{equation}
S_t(Y;X)
:=
\bigl(J_0(X)^*\circ\cdots\circ J_{t-1}(X)^*\bigr)B_t,
\qquad
S_T(Y;X)
:=
\bigl(J_0(X)^*\circ\cdots\circ J_{T-1}(X)^*\bigr)R_T,
\label{eq:generic_contributions}
\end{equation}
where the empty composition is understood as the identity, we have
\begin{equation}
R_Y(X)
=
R_0
=
\sum_{t=0}^{T-1}S_t(Y;X)+S_T(Y;X).
\label{eq:exact_sum_compact}
\end{equation}
For the whole target-behavior subspace \(\mathcal Y\), RED is obtained by collecting these decompositions over all \(Y\in\mathcal Y\):
\begin{equation}
\mathcal R_{\mathcal Y}(X)
=
\left\{
\sum_{t=0}^{T-1}S_t(Y;X)+S_T(Y;X)
:
Y\in\mathcal Y
\right\}.
\label{eq:red_set_of_direction_decomp}
\end{equation}
A detailed proof is given in Appendix~\ref{app:proof_decomp}.
\end{proposition}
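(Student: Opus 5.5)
The plan is to prove the one-step recurrence \eqref{eq:recurrence} and then unroll it by induction. The key fact is a linear-algebra lemma about adjoints and pushed-forward subspaces. Since \(\mathcal U_{t+1}=J_t(X)\mathcal U_t\), for any \(W\) we have
\[
W\perp\mathcal U_{t+1}\iff \innersmall{W}{J_t(X)U}=0\ \forall U\in\mathcal U_t \iff \innersmall{J_t(X)^*W}{U}=0\ \forall U\in\mathcal U_t .
\]
Hence \(J_t(X)^*\) maps \(\mathcal U_{t+1}^\perp\) into \(\mathcal U_t^\perp\). In particular \(J_t(X)^*R_{t+1}\in\mathcal U_t^\perp\), so \(P_{\mathcal U_t^\perp}(J_t(X)^*R_{t+1})=J_t(X)^*R_{t+1}\). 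Only this inclusion direction is needed, so no injectivity of \(J_t\) is assumed.

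\textbf{One-step recurrence.} I will start from \(R_t=P_{\mathcal U_t^\perp}G_t\) and use \(G_t=J_t(X)^*G_{t+1}=J_t(X)^*A_{t+1}+J_t(X)^*R_{t+1}\), which comes from \eqref{eq:backward_direction_chain} and \eqref{eq:aligned_orthogonal_direction}. Linearity of the projection then gives
\[
R_t=P_{\mathcal U_t^\perp}\bigl(J_t(X)^*A_{t+1}\bigr)+P_{\mathcal U_t^\perp}\bigl(J_t(X)^*R_{t+1}\bigr)=B_t+J_t(X)^*R_{t+1},
\]
where the first term is \(B_t\) by \eqref{eq:Bt} and the second simplifies by the lemma. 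The projections are well defined because each \(\mathcal U_t\) is a finite-dimensional subspace: it is the linear image of a subspace. The basis-independent form of \(P_{\mathcal U_t}\) is used here, and it may have lower dimension than \(\mathcal U\).

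\textbf{Unrolling.} Next I will show by downward induction on \(s\) that
\[
R_s=\sum_{t=s}^{T-1}\bigl(J_s^*\circ\cdots\circ J_{t-1}^*\bigr)B_t+\bigl(J_s^*\circ\cdots\circ J_{T-1}^*\bigr)R_T .
\]
The base case \(s=T\) is trivial. For the inductive step, I substitute the formula for \(R_{s+1}\) into the recurrence and distribute the linear map \(J_s^*\). Taking \(s=0\) yields \eqref{eq:exact_sum_compact} with the \(S_t\) of \eqref{eq:generic_contributions}. The identification \(R_0=R_Y(X)\) uses the chain rule \(J_{0:T}=J_{T-1}\cdots J_0\), so that \(J_{0:T}^*=J_0^*\cdots J_{T-1}^*\) and \(G_0=J_{0:T}(X)^*Y\), together with \(\mathcal U_0=\mathcal U(X)\).

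\textbf{Set form.} Finally, \eqref{eq:red_set_of_direction_decomp} follows directly from the definition \eqref{eq:red_def}, since \(\mathcal R_{\mathcal Y}(X)=\{R_Y(X):Y\in\mathcal Y\}\).

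The only step with real content is the adjoint-orthogonality lemma. Without it, the projection \(P_{\mathcal U_t^\perp}\) would not pass through the \(J_t^*R_{t+1}\) term and the recurrence would pick up an extra correction term. Everything after that is bookkeeping with linear maps.
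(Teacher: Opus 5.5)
Your proposal is correct and follows essentially the same route as the paper. It uses the same adjoint-orthogonality fact (\(J_t(X)^*\) maps \(\mathcal U_{t+1}^\perp\) into \(\mathcal U_t^\perp\)) to drop the projection on the \(J_t(X)^*R_{t+1}\) term, unrolls the recurrence by induction, and reads the set form off the definition of \(\mathcal R_{\mathcal Y}(X)\); the only cosmetic difference is that you induct downward on a closed form for \(R_s\), while the paper inducts forward on the partial expansion of \(R_0\) truncated at \(R_{s+1}\).
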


Proposition~\ref{prop:exact_decomposition} gives an exact decomposition of each refusal-escape direction into input-side contributions obtained by transporting leakage sources and the terminal source back through adjoint-Jacobian transport channels. \(S_t(Y;X)\) is the contribution induced by the \(t\)-th leakage source. It captures a channel, mediated by the harmful-semantics-aligned component \(A_{t+1}\), that can modulate how the model's harmful-semantics interpretation influences the answer-versus-refusal behavior. In contrast, \(S_T(Y;X)\) is the contribution induced by the terminal source. It captures a side-channel outside the model's harmful-semantics interpretation that can directly affect the answer-versus-refusal behavior.

We next specialize the block-level decomposition to the operator structure of modern pre-norm residual LLMs, such as LLaMA-family architectures \citep{grattafiori2024llama3}. This yields an exact decomposition of RED into contributions from \emph{operator-level sources}:
\begin{equation}
\mathcal R_{\mathcal Y}(X)
=
\left\{
S_{\mathrm{norm}}^\Sigma(Y;X)
+
S_{\mathrm{attn}}^\Sigma(Y;X)
+
S_{\mathrm{mlp}}^\Sigma(Y;X)
+
S_{\mathrm{res}}^\Sigma(Y;X)
+
S_T(Y;X)
:
Y\in\mathcal Y
\right\}.
\label{eq:operator_level_red_set}
\end{equation}
Here, for each target-behavior direction \(Y\in\mathcal Y\), \(S_{\mathrm{norm}}^\Sigma(Y;X)\), \(S_{\mathrm{attn}}^\Sigma(Y;X)\), \(S_{\mathrm{mlp}}^\Sigma(Y;X)\), and \(S_{\mathrm{res}}^\Sigma(Y;X)\) respectively denote the sums, across all blocks, of transported input-side contributions induced by normalization, self-attention, MLP, and residual-wiring leakage sources; \(S_T(Y;X)\) is the transported input-side contribution induced by the terminal source. The explicit derivation is given in Appendix~\ref{app:operators}.

\section{Safety--utility trade-off for eliminating the refusal-escape directions}
\label{sec:tradeoff}

We now analyze under what conditions RED can be eliminated in an LLM, meaning \(\mathcal R_{\mathcal Y}(X)=\{0\}\) for all \(X\in\mathcal X\), and how this elimination condition gives rise to a conditional safety--utility trade-off. We begin by formalizing the real-analytic regularity assumptions used in this section.

\begin{assumption}[Real-analytic regularity on a harmful region]
\label{ass:analytic}
There exists a connected open region \(\Omega\supseteq\mathcal X\) such that the Jacobians \(J_t\) and their adjoints, the projection fields \(P_{\mathcal U_t}\) and \(P_{\mathcal U_t^\perp}\) associated with the propagated harmful-semantics-sensitive subspace fields \(X\mapsto\mathcal U_t(X)\), and the action fields \(\mathcal E_\Theta^\Sigma\), \(\mathcal E_h^\Sigma\), and \(\mathcal E_b^\Sigma\) defined in this section are well-defined and real-analytic on \(\Omega\). It follows that the operator-level sources and transported input-side contributions computed from these Jacobians, adjoints, and projection fields are also real-analytic on \(\Omega\).
\end{assumption}

This assumption is natural on local regions where the model's discrete structure is fixed. For architectures with piecewise-smooth activations or routing-based modules, this means choosing \(\Omega\) inside a region with a fixed ReLU sign pattern, MoE route, or attention mask. Within such a region, the relevant LLM computations are compositions of real-analytic operators, so the quantities in Assumption~\ref{ass:analytic} are real-analytic whenever they are well-defined.

\subsection{Analytically constrained operator-level sources of refusal-escape directions}

\begin{definition}[Analytically constrained operator-level source]
\label{def:rigid_source}
Under Assumption~\ref{ass:analytic}, an operator-level source \(q\) is \emph{analytically constrained} if eliminating it on \(\mathcal X\) is equivalent to an analytic vanishing condition \(\mathfrak C_q\) holding on \(\Omega\). For any leakage source \(B\), elimination means \(B(X)=0\) on \(\mathcal X\); for the terminal source, it means \(R_T(X)=0\) on \(\mathcal X\). If \(\mathfrak C_q\) fails, the operator-level source is nonzero on a dense open subset of \(\Omega\).
\end{definition}

For pre-norm residual LLMs, the exact decomposition of RED contains three classes of analytically constrained operator-level sources: normalization sources, residual-wiring sources, and the terminal source. Eliminating these sources requires normalization-consistency, residual-branch-consistency, or target-subspace-consistency conditions determined by the surrounding input--output geometry, rather than direct adjustment of the sources themselves. Detailed proofs are given in Appendix~\ref{app:rigid_sources}.

\subsection{Exact elimination fields and the safety--utility trade-off}

By Definition~\ref{def:rigid_source}, analytically constrained sources are generically nonzero on a dense open subset of \(\Omega\) unless their analytic vanishing conditions are satisfied. Consequently, eliminating RED on \(\mathcal X\) requires eliminating contributions from these analytically constrained sources. In a pre-norm residual LLM, we model this elimination burden as falling on the shared expressive modules \(\mathcal M_c\), namely the self-attention and MLP modules, which provide high expressive capacity through attention mixing and nonlinear feed-forward transformations \citep{kajitsuka2024selfattention,hornik1989multilayer}.

Under the exact decomposition in Proposition~\ref{prop:exact_decomposition}, a module in \(\mathcal M_c\) can affect RED in three ways: first, by altering the surrounding input--output geometry of an analytically constrained source \(q\) so that the analytic vanishing condition \(\mathfrak C_q\) becomes satisfied; second, by generating its own leakage sources to offset the transported contributions from analytically constrained sources; and third, by changing its adjoint-Jacobian map, thereby attenuating or blocking the adjoint-Jacobian transport channels that carry downstream sources back to the input side. We unify these mechanisms under the notion of the module's \emph{RED-elimination action}. Accordingly, for each \(g\in\mathcal M_c\), we define a RED-elimination action field \(X\mapsto\mathcal E_\Theta^{g}(X)\) on \(\Omega\), induced by module \(g\), including geometry adjustment, leakage-source offset, and transport-channel modification. Then
\begin{equation}
\mathcal E_\Theta^\Sigma(X):=\sum_{g\in\mathcal M_c}\mathcal E_\Theta^{g}(X)
\label{eq:red_elimination_action_field}
\end{equation}
is the total RED-elimination action field of the shared expressive modules.

Once the analytically constrained sources and the remaining part of the network are fixed, exact elimination of RED on \(\mathcal X\) requires the shared expressive modules to realize a specific analytic field on \(\mathcal X\), denoted by \(\mathcal E_h^\Sigma(X)\). This is the \emph{harmful-region RED-elimination field}: the total action required to eliminate RED contributions from analytically constrained sources on \(\mathcal X\). Thus, exact elimination requires
\begin{equation}
\mathcal E_\Theta^\Sigma(X)=\mathcal E_h^\Sigma(X),
\qquad \forall X\in\mathcal X.
\label{eq:harmful_requirement}
\end{equation}

Now let \(\mathcal X_b\subseteq\Omega\) be a nonempty open benign region, and let \(\mathcal E_b^\Sigma(X)\) denote the \emph{benign-region behavioral field} that the same shared expressive modules must realize there in order to preserve the intended benign-response behavior. We next show that exact harmful-region RED elimination can be incompatible with this benign-region requirement.

\begin{theorem}[Conditional incompatibility of exact RED elimination]
\label{thm:tradeoff}
Under Assumption~\ref{ass:analytic}, suppose that the shared expressive modules \(\mathcal M_c\) are required to satisfy the harmful-region RED-elimination constraint
\[
\mathcal E_\Theta^\Sigma(X)=\mathcal E_h^\Sigma(X),
\qquad \forall X\in\mathcal X,
\]
and simultaneously the benign-region behavioral constraint
\[
\mathcal E_\Theta^\Sigma(X)=\mathcal E_b^\Sigma(X),
\qquad \forall X\in\mathcal X_b.
\]
If
\[
\mathcal E_h^\Sigma\not\equiv \mathcal E_b^\Sigma
\qquad \text{on }\Omega,
\]
then no single parameter setting \(\Theta\) can satisfy both exact requirements. A detailed proof is given in Appendix~\ref{app:proof_tradeoff}.
\end{theorem}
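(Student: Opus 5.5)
The argument is by contradiction, using the identity theorem for real-analytic functions on the connected open set \(\Omega\). Suppose some parameter setting \(\Theta\) satisfies both constraints. By Assumption~\ref{ass:analytic}, the three fields \(\mathcal E_\Theta^\Sigma\), \(\mathcal E_h^\Sigma\) and \(\mathcal E_b^\Sigma\) are real-analytic on \(\Omega\). Hence the two differences
\[
D_h:=\mathcal E_\Theta^\Sigma-\mathcal E_h^\Sigma,
\qquad
D_b:=\mathcal E_\Theta^\Sigma-\mathcal E_b^\Sigma
\]
are real-analytic on \(\Omega\). Both take values in a fixed finite-dimensional real vector space, so I will argue coordinatewise in a fixed basis and apply the scalar identity theorem to each coordinate.

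\emph{Step 1 (harmful region).} The harmful-region constraint says \(D_h\equiv 0\) on \(\mathcal X\). The set \(\mathcal X\) is nonempty, open, and contained in the connected open set \(\Omega\). By the identity theorem, each coordinate of \(D_h\) therefore vanishes on all of \(\Omega\). So \(\mathcal E_\Theta^\Sigma\equiv\mathcal E_h^\Sigma\) on \(\Omega\).

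\emph{Step 2 (benign region).} The same argument applies to \(D_b\), which vanishes on the nonempty open set \(\mathcal X_b\subseteq\Omega\). This gives \(\mathcal E_\Theta^\Sigma\equiv\mathcal E_b^\Sigma\) on \(\Omega\).

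\emph{Step 3 (conclusion).} Combining Steps 1 and 2 gives \(\mathcal E_h^\Sigma\equiv\mathcal E_b^\Sigma\) on \(\Omega\). This contradicts the hypothesis \(\mathcal E_h^\Sigma\not\equiv\mathcal E_b^\Sigma\) on \(\Omega\), so no such \(\Theta\) exists.

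The main obstacle is making Step 1 legitimate rather than doing any calculation. Two things must hold. First, \(\mathcal E_\Theta^\Sigma\) must be a single analytic field on all of \(\Omega\) for the fixed \(\Theta\): if the fixed discrete structure (ReLU pattern, MoE route, attention mask) changed between \(\mathcal X\) and \(\mathcal X_b\), analytic continuation would break. Assumption~\ref{ass:analytic} supplies exactly this by requiring analyticity on the connected set \(\Omega\), which contains both regions. Second, \(\mathcal E_h^\Sigma\) and \(\mathcal E_b^\Sigma\) must be defined as analytic fields on all of \(\Omega\), not only on \(\mathcal X\) and \(\mathcal X_b\), so that the hypothesis \(\not\equiv\) on \(\Omega\) makes sense. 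This too is part of Assumption~\ref{ass:analytic}, and I would state it explicitly at the start of the proof.

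It is also worth noting what the proof shows. The hypothesis \(\not\equiv\) on \(\Omega\) is equivalent to \(\mathcal E_h^\Sigma\) and \(\mathcal E_b^\Sigma\) disagreeing somewhere on \(\Omega\), and by analyticity this happens on a dense open subset. This is why the conclusion is conditional: the theorem turns an analytic mismatch between the harmful and benign requirements into the impossibility of meeting both exactly.
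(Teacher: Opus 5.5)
Your proposal is correct and is essentially the paper's argument: a contradiction obtained from the analytic identity theorem on the connected region \(\Omega\), using that \(\mathcal X\) and \(\mathcal X_b\) are nonempty open subsets of it. The only difference is the order of steps: you continue both constraints separately to all of \(\Omega\) and compare the fields there, whereas the paper continues the harmful constraint, restricts it to \(\mathcal X_b\) to get \(\mathcal E_h^\Sigma=\mathcal E_b^\Sigma\) on \(\mathcal X_b\), and then continues \(\mathcal E_h^\Sigma-\mathcal E_b^\Sigma\) from \(\mathcal X_b\).
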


Theorem~\ref{thm:tradeoff} formalizes the conditional safety--utility trade-off. Exact elimination of RED on \(\mathcal X\) pins the total RED-elimination action field of the shared expressive modules \(\mathcal E_\Theta^\Sigma\) to the harmful-region RED-elimination field \(\mathcal E_h^\Sigma\). Since these fields are real-analytic on the connected region \(\Omega\), this equality extends from \(\mathcal X\) to all of \(\Omega\). Therefore, the same modules cannot simultaneously realize a different benign-region behavioral field \(\mathcal E_b^\Sigma\), unless the harmful and benign fields are analytically identical.

\section{Experiments}
\label{sec:experiments}

To make the theory operational, we instantiate reference choices for the input transformation \(X(\eta)\), the target-behavior subspace \(\mathcal Y\), and the harmful-semantics-sensitive subspace \(\mathcal U(X)\). These choices provide one concrete reference setting for empirically analyzing RED and jailbreak behavior under the continuous input-transformation view. Under this reference setting, we empirically analyze RED from two complementary perspectives. In Section~\ref{model_analysis}, we study harmful inputs directly and examine how added token dimensions affect RED. In Section~\ref{jailbreak_analysis}, we analyze successful jailbreak cases across different attack methods and show that their refusal-to-answer shifts are often largely aligned with terminal-source contributions.

\subsection{Experimental settings}

\paragraph{Models and jailbreak samples.}
We study five aligned pre-norm residual LLMs: Qwen3-4B/14B \citep{yang2025qwen3}, Llama-3.1-8B-Instruct \citep{grattafiori2024llama3}, and Gemma-3-4B/12B-IT \citep{gemmateam2025gemma3}. We analyze five jailbreak attacks: GCG \citep{zou2023universal}, AutoDAN \citep{liu2023autodan}, GPTFuzzer \citep{yu2023gptfuzzer}, TAP \citep{mehrotra2024tree}, and ReNeLLM \citep{ding2024renellm}. For each model--attack pair, we collect 100 harmful--jailbreak prompt pairs where the harmful prompt is refused and the jailbreak prompt is answered. We use Qwen3Guard-Gen-8B to evaluate refusal and jailbreak success \citep{zhao2025qwen3guard}.

We analyze each harmful--jailbreak pair separately and use pair-specific reference choices for the input transformation, target-behavior subspace, and harmful-semantics-sensitive subspace.

\paragraph{Reference input transformation.}
To construct a continuous input transformation from a harmful prompt to its jailbreak counterpart, we align the token-embedding matrices of the two prompts into a common input space of matched dimension using a recursive alignment algorithm. The algorithm first matches the longest common token blocks and, when no shared block remains, aligns the most similar token pair in embedding space. Unmatched spans are padded with all-zero token embeddings, which serve as placeholders without explicit lexical meaning. This yields the boundary points \(X(\eta_0)\) and \(X(\eta_\star)\), while preserving shared harmful semantic fragments as much as possible. We then take the linear interpolation \(X(\eta)=X(\eta_0)+\frac{\eta-\eta_0}{\eta_\star-\eta_0}\bigl(X(\eta_\star)-X(\eta_0)\bigr)\) as the reference input transformation.

\paragraph{Reference target-behavior subspace.}
We use the final-token logit-difference direction between the answer-inducing jailbreak prompt and the refusal-inducing harmful prompt as a reference proxy for the model's answer-versus-refusal behavior. We pull this logit-difference direction back to the final hidden-state space through the output head, and use the one-dimensional subspace spanned by the resulting hidden-state direction as the reference target-behavior subspace \(\mathcal Y_{\mathrm{ref}}\).

\paragraph{Reference harmful-semantics-sensitive subspace.}
We use the original harmful input before placeholder padding as the reference point. At this clean, placeholder-free harmful input, the aligned model is expected to base its answer-versus-refusal behavior primarily on its harmful-semantics interpretation. We therefore treat the input-side target-behavior sensitivity subspace \(J_{0:T}(\widehat X)^*\mathcal Y_{\mathrm{ref}}\) at this clean input as a reference proxy for the harmful-semantics-sensitive variation, and denote it by \(\widehat{\mathcal U}\). We then map \(\widehat{\mathcal U}\) into the aligned input space using the same padding pattern as the harmful input; the resulting one-dimensional subspace is used as the fixed reference harmful-semantics-sensitive subspace \(\mathcal U_{\mathrm{ref}}\) along the reference transformation. This fixed subspace is a reference instantiation of the general subspace field in Assumption~\ref{ass:u_star}, used only for the pair-specific empirical analysis.

\paragraph{Reference refusal-escape direction.}
Since the reference target-behavior subspace \(\mathcal Y_{\mathrm{ref}}\) is one-dimensional, let \(Y_{\mathrm{ref}}\) be its Frobenius-orthonormal basis vector whose positive direction points away from refusal and toward answering. We compute \(R_{\mathrm{ref}}(X):=P_{\mathcal U_{\mathrm{ref}}^\perp}(J_{0:T}(X)^*Y_{\mathrm{ref}})\) as the reference refusal-escape direction. Its Frobenius norm \(\normsmall{R_{\mathrm{ref}}(X)}\) measures the first-order sensitivity of this direction for changing the reference target-behavior signal \(\Phi_{\mathcal Y_{\mathrm{ref}}}(X)=\innersmall{H^{(T)}(X)}{Y_{\mathrm{ref}}}Y_{\mathrm{ref}}\).

Additional details and justifications for the experimental settings are provided in Appendix~\ref{app:settings}.

\subsection{Observation 1: added token dimensions expose RED at harmful inputs}
\label{model_analysis}

\begin{figure}[t]
    \centering
    \begin{subfigure}[t]{0.49\linewidth}
        \centering
        \includegraphics[width=\linewidth]{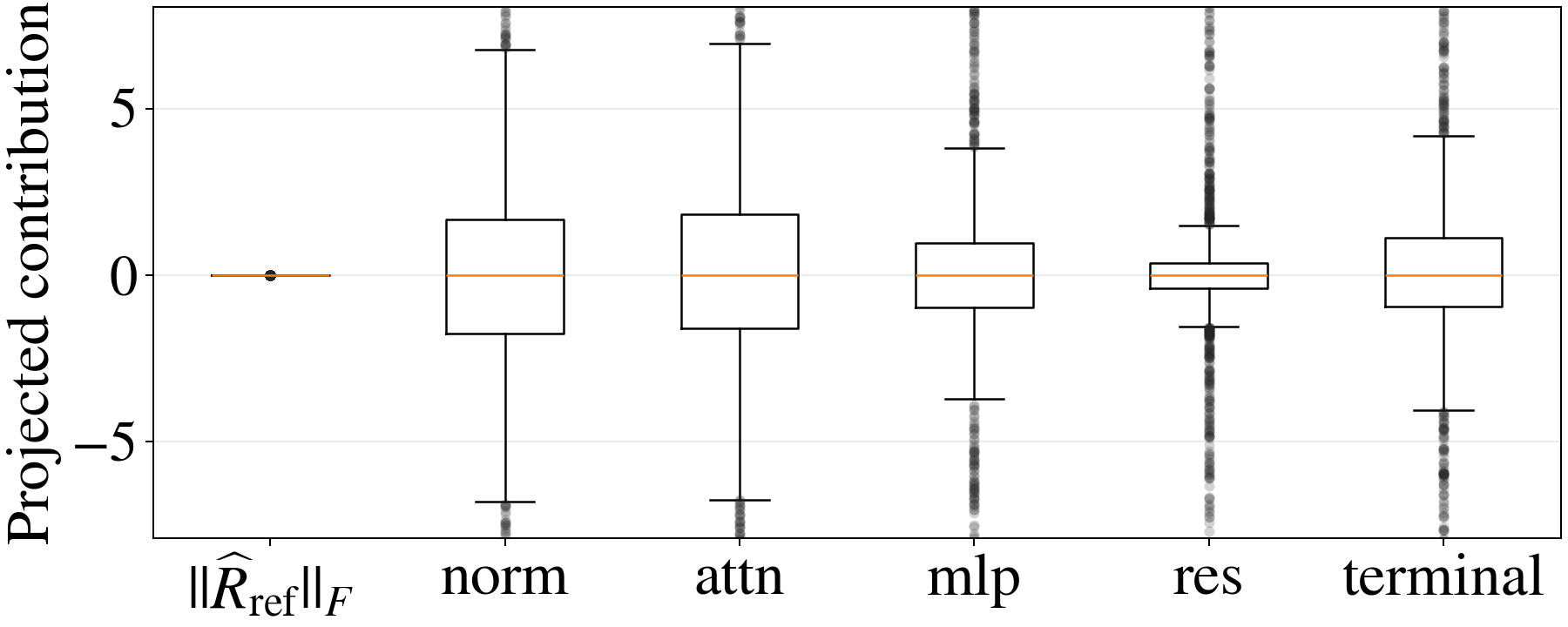}
        \caption{Before padding (projected onto \(e_1\)).}
        \label{fig:operator_level_leakage_before}
    \end{subfigure}
    \hfill
    \begin{subfigure}[t]{0.49\linewidth}
        \centering
        \includegraphics[width=\linewidth]{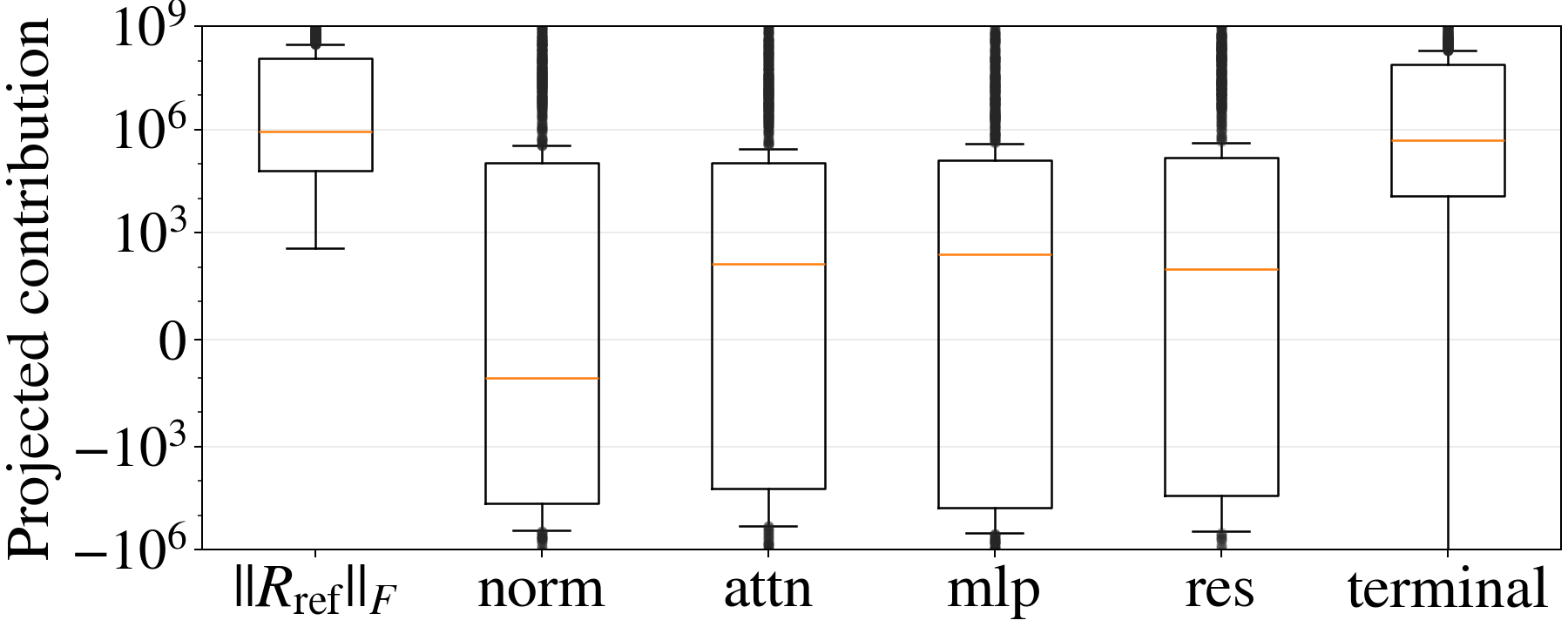}
        \caption{After padding (projected onto \(R_{\mathrm{ref}}\)).}
        \label{fig:operator_level_leakage_after}
    \end{subfigure}
    \caption{Reference refusal-escape direction and its operator-level contributions at harmful inputs, aggregated over all samples across models.
    (a) shows the unpadded baseline, where the corresponding refusal-escape direction is zero by construction; operator-level contributions are visualized by signed projection onto \(e_1\).
    (b) shows the padded harmful input, where added token dimensions expose a nonzero \(R_{\mathrm{ref}}(X)\); operator-level contributions are visualized by signed projection onto \(R_{\mathrm{ref}}(X)/\normsmall{R_{\mathrm{ref}}(X)}\).}
    \label{fig:operator_level_leakage}
\end{figure}

We first ask how added token dimensions, a common feature of many jailbreak attacks that introduce additional tokens into the original harmful prompt, affect RED at a harmful input.

Before placeholder padding, \(\widehat{\mathcal U}=J_{0:T}(\widehat X)^*\mathcal Y_{\mathrm{ref}}\) is the input-side target-behavior sensitivity subspace at the placeholder-free harmful input \(\widehat X\) under our reference setting, so the corresponding refusal-escape direction satisfies \(\widehat{R}_{\mathrm{ref}}(\widehat X)=P_{\widehat{\mathcal U}^\perp}(J_{0:T}(\widehat X)^*Y_{\mathrm{ref}})=0\) by construction, which serves as a controlled baseline. After placeholder padding, we compute \(R_{\mathrm{ref}}(X)\) and its Frobenius norm \(\normsmall{R_{\mathrm{ref}}(X)}\), and then decompose both cases into the operator-level contributions, following \eqref{eq:operator_level_red_set}. In the padded case, we measure each source's signed contribution by projecting it onto \(R_{\mathrm{ref}}(X)/\normsmall{R_{\mathrm{ref}}(X)}\). For the unpadded baseline, where \(\widehat{R}_{\mathrm{ref}}(\widehat X)=0\), we instead project onto the one-hot coordinate basis vector \(e_1\).

Figure~\ref{fig:operator_level_leakage}a shows the unpadded baseline: although nonzero contributions from different operator-level sources are already present, what keeps \(\widehat{R}_{\mathrm{ref}}(\widehat{X})\) at zero is their mutual cancellation at the input side. After placeholder padding, Figure~\ref{fig:operator_level_leakage}b shows that this cancellation no longer holds and a much larger nonzero \(R_{\mathrm{ref}}(X)\) appears. This occurs because the newly introduced token dimensions create new transport channels through which the leakage and terminal sources can be transported to the input side, so cancellation identities that previously held at the input side no longer hold automatically. As a result, RED becomes explicit at the harmful input.

Figure~\ref{fig:operator_level_leakage}b further shows that all operator-level sources leave observable contributions after padding, with the terminal source contributing most stably and prominently. As discussed in Section~\ref{sec:decomposition}, leakage sources affect the answer-versus-refusal behavior through channels mediated by harmful-semantics-aligned components, whereas the terminal source captures a side-channel outside the model's harmful-semantics interpretation that can directly affect the answer-versus-refusal behavior. This suggests that the influence from harmful-semantics interpretation to answer-versus-refusal behavior is often relatively robust, while the model still retains substantial side-channels outside the harmful-semantics interpretation. This also helps explain the Competing Objectives attack in \citet{wei2023jailbroken}, where competing objectives can steer behavior through factors outside the harmful request itself. Model-specific results are presented and discussed in Appendix~\ref{app:model_results}.

\subsection{Observation 2: refusal-to-answer shifts largely align with terminal-source contributions}

\label{jailbreak_analysis}

\begin{figure}[t]
    \centering
    \begin{subfigure}[t]{0.32\linewidth}
        \centering
        \includegraphics[width=\linewidth]{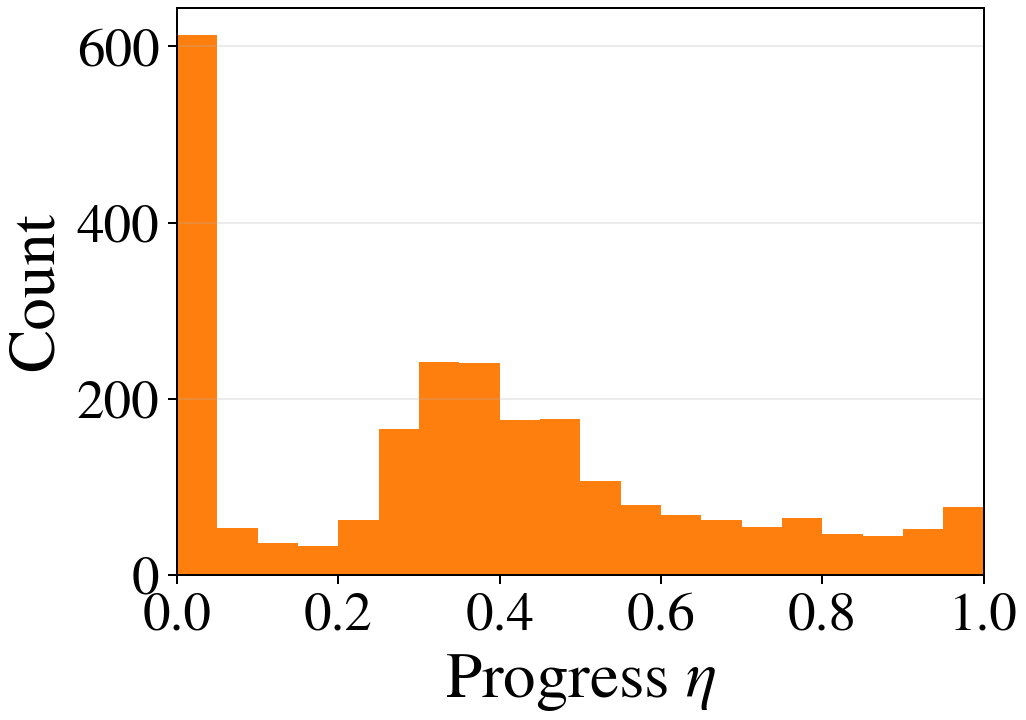}
        \caption{Earliest jailbreak progress.}
        \label{fig:jailbreak_progress}
    \end{subfigure}
    \hfill
    \begin{subfigure}[t]{0.32\linewidth}
        \centering
        \includegraphics[width=\linewidth]{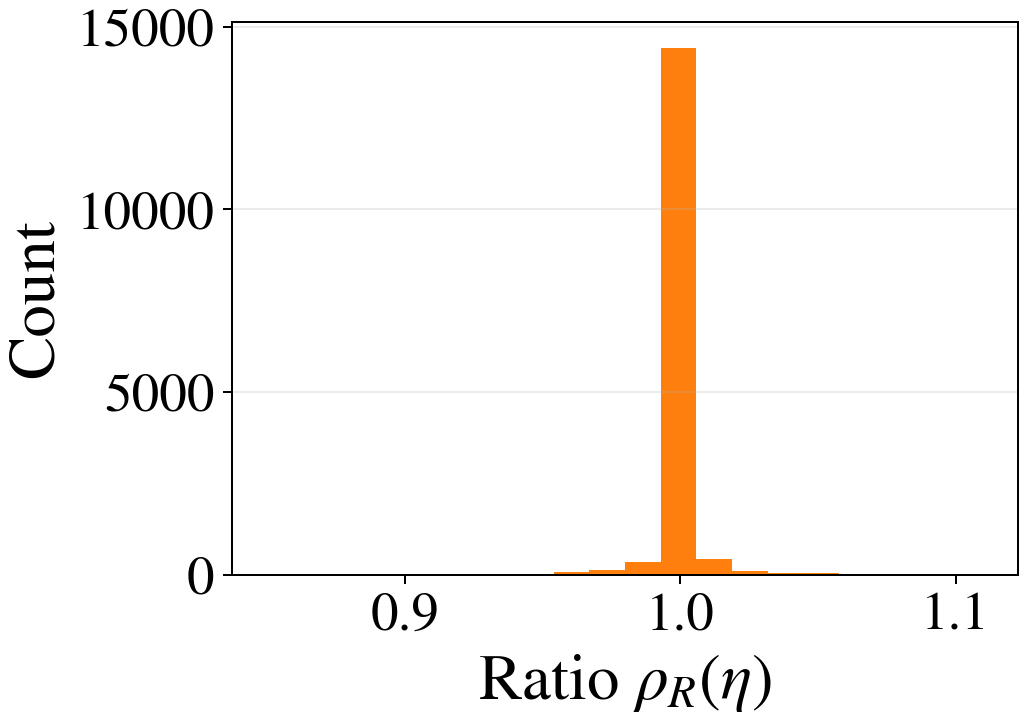}
        \caption{Signed \(R_{\mathrm{ref}}\) ratio.}
        \label{fig:jailbreak_ratio}
    \end{subfigure}
    \hfill
    \begin{subfigure}[t]{0.32\linewidth}
        \centering
        \includegraphics[width=\linewidth]{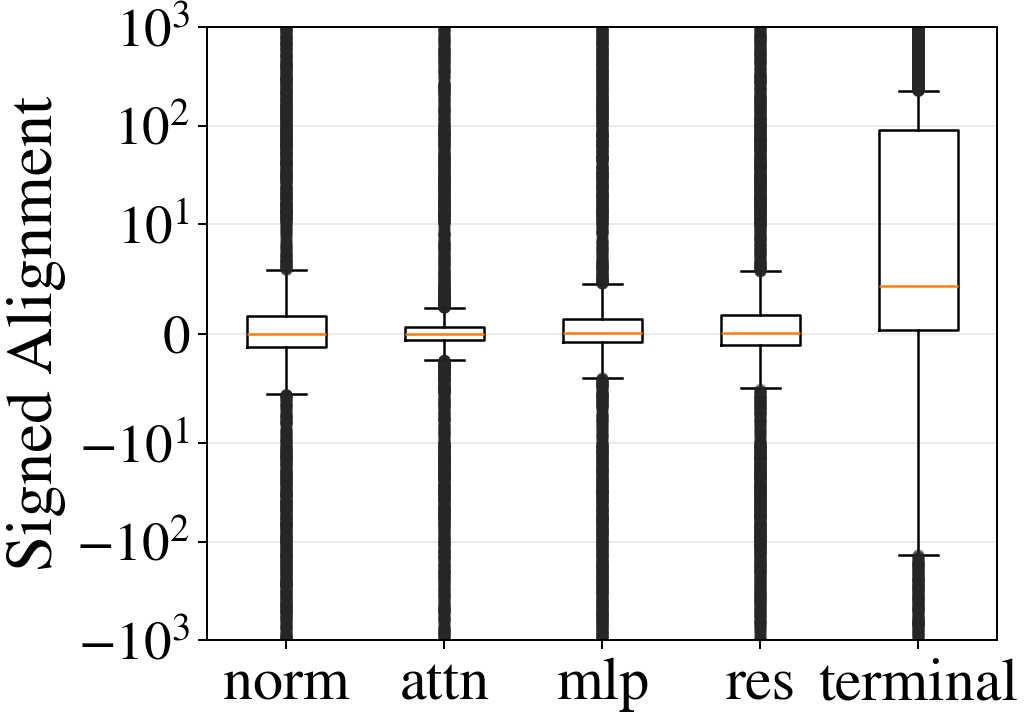}
        \caption{Signed operator-level alignment.}
        \label{fig:jailbreak_family}
    \end{subfigure}
    \caption{Jailbreak analysis under the continuous input-transformation view, aggregated across models and attack methods. 
    (a) shows the earliest progress along the input transformation at which Qwen3Guard-Gen-8B first judges the output induced by an intermediate input as a successful jailbreak. 
    (b) shows the signed ratio between the alignment of the transformation tangent with \(R_{\mathrm{ref}}\) and its alignment with the input-side target-behavior sensitivity direction \(J_{0:T}(X)^*Y_{\mathrm{ref}}\).
    (c) shows the signed alignment between the transformation tangent and each operator-level contribution \(S_f^\Sigma\).}
    \label{fig:jailbreak_main}
\end{figure}

We next analyze successful jailbreak cases across different attack methods under our reference setting. For each harmful--jailbreak pair, we track the earliest progress along the reference input transformation at which Qwen3Guard-Gen-8B first judges the output induced by an intermediate input as a successful jailbreak. Figure~\ref{fig:jailbreak_main}a shows that a substantial fraction of cases are judged successful in the first half of the transformation, and many within the first \(5\%\). This suggests that the final discrete jailbreak prompt is not the only informative object to analyze; the intermediate process under the continuous input-transformation view is also worth studying.

Along the reference input transformation, we extract four intermediate points at different stages of the refusal-to-answer transition for each sample; details are given in Appendix~\ref{app:attack_sample}. At each point, we compute the signed ratio \(\rho_R(\eta)=\innersmall{X'(\eta)}{R_{\mathrm{ref}}(X(\eta))}/\innersmall{X'(\eta)}{J_{0:T}(X(\eta))^*Y_{\mathrm{ref}}}\), which measures how much of the local reference target-behavior change is accounted for by the reference refusal-escape direction. We also compute the signed alignment \(\innersmall{X'(\eta)}{S_f^\Sigma(X(\eta))}\), where \(S_f^\Sigma\) ranges over the normalization, self-attention, MLP, residual-wiring, and terminal-source contributions defined in \eqref{eq:operator_level_red_set}. This quantity indicates how each operator-level contribution accounts for the local reference target-behavior change along the reference input transformation. Figure~\ref{fig:jailbreak_main}b shows that \(\rho_R(\eta)\) is consistently close to \(1\) across models and attacks, indicating that most of the local refusal-to-answer shift along the transformation is accounted for by \(R_{\mathrm{ref}}\). Figure~\ref{fig:jailbreak_main}c shows that terminal-source contributions are most consistently aligned in the positive direction, whereas other operator-level contributions are more dispersed. Together with Figure~\ref{fig:operator_level_leakage}b, this suggests that successful jailbreaks can be understood as exploiting refusal-escape directions already visible at harmful inputs, especially the terminal-source side-channel, rather than as creating a separate mechanism along the transformation. These results provide an implication for defense: alignment should not only enumerate completed jailbreak-prompt constructions, but should also aim to suppress local RED around harmful inputs, with particular attention to terminal-source contributions. Attack-specific and model-specific results are presented and discussed in Appendix~\ref{app:attack_results}.

\section{Conclusion}

Under the continuous input-transformation view, we introduced Refusal-Escape Directions (RED) as structural vulnerabilities for studying jailbreakability. We showed how RED arises through the model's operator structure, decomposes into operator-level source contributions, and why its elimination gives rise to a conditional safety--utility trade-off. Our experiments across multiple models and attack methods empirically analyzed RED from two complementary perspectives: added token dimensions can expose RED at harmful inputs, and successful jailbreaks exhibit refusal-to-answer shifts largely aligned with terminal-source contributions. Together, these results suggest that alignment should not focus only on enumerating completed jailbreak-prompt constructions, but should also aim to suppress local RED around harmful inputs, with particular attention to terminal-source contributions. Limitations and broader impacts are discussed in Appendices~\ref{app:limitations} and~\ref{app:broader_impacts}, respectively.


\bibliographystyle{plainnat}
\bibliography{references}

\clearpage
\appendix

\section{Related work}
\label{app:related_work}

\paragraph{Safety alignment.}
Safety alignment aims to make LLMs helpful while reducing harmful behavior. RLHF is a standard post-training approach for instruction-following assistants: it learns a reward model from human preference data and optimizes the model against this reward signal, thereby encouraging helpful responses while discouraging harmful ones \citep{christiano2017deep,stiennon2020learning,ouyang2022training}. DPO simplifies this preference-optimization pipeline by directly optimizing a preference loss, and is likewise used to steer models toward preferred and safer behaviors without an explicit reinforcement-learning stage \citep{rafailov2023direct}. Other alignment methods further refine safety objectives, separate helpfulness and harmlessness preferences, use constitutional feedback, or remove undesirable knowledge and behaviors while preserving model capabilities \citep{bai2022constitutional,dai2024saferlhf,ji2024pku,yao2024unlearning,zhang2024safeunlearning}. These methods provide important tools for reducing harmful behavior in LLMs. Our work studies a complementary question: after such alignment procedures, why can models still remain vulnerable to jailbreak attacks? We analyze this question through a RED-centered theoretical framework, which characterizes local structural vulnerabilities around harmful inputs and provides implications for future safety-alignment strategies.

\paragraph{Jailbreak attacks.}
A large body of work shows that aligned models remain vulnerable to jailbreak attacks based on adversarial suffix optimization, fuzzing, rewriting, tree search, long-context steering, and multi-turn interaction \citep{wei2023jailbroken,zou2023universal,yu2023gptfuzzer,liu2023autodan,mehrotra2024tree,ding2024renellm,anil2024manyshot,liu2024autodanturbo,liu2024dra,russinovich2024crescendo,jiang2024artprompt,zhu2024advprefix}. Among them, GCG optimizes adversarial suffixes through gradient-guided token search, showing that even nonsensical suffixes can bypass refusal mechanisms \citep{zou2023universal}.  AutoDAN uses a hierarchical genetic algorithm to search for more natural and stealthy jailbreak prompts, improving semantic meaningfulness and transferability compared with purely token-level suffix optimization \citep{liu2023autodan}.  GPTFuzzer frames jailbreak discovery as fuzzing: it mutates seed jailbreak templates and keeps effective variants using feedback from a judge model \citep{yu2023gptfuzzer}. TAP uses an attacker LLM to search over a tree of candidate attacks, pruning unlikely branches to improve query efficiency \citep{mehrotra2024tree}. ReNeLLM rewrites harmful requests and embeds them in plausible scenarios, modifying the original prompt while preserving the harmful semantics \citep{ding2024renellm}. We view the success of these jailbreak methods through the continuous input-transformation perspective, which motivates our RED-centered theoretical framework for analyzing the structural vulnerabilities in aligned models that such attacks may exploit.

\paragraph{Mechanistic studies of refusal and jailbreak.}
Recent work has provided initial analyses of refusal behavior and jailbreak success in aligned LLMs. \citet{arditi2024refusal} show that refusal behavior can be associated with a low-dimensional direction, while \citet{yeo2025sae} use sparse autoencoders to identify latent refusal features associated with refusal behavior. \citet{zhou2024alignment} further show that pretrained models can already recognize harmfulness, and that safety alignment associates this harmfulness recognition with refusal behavior. For jailbreak behavior, \citet{ball2024latent} show that successful jailbreaks can suppress refusal-related signals, while \citet{lin2024representation} show that jailbreaks can shift harmful prompts toward seemingly harmless regions in representation space. \citet{wei2023jailbroken} study safety-training failures through competing objectives and mismatched generalization; our RED-centered analysis helps explain how Competing Objectives attacks can steer behavior through side-channels outside the model's harmful-semantics interpretation. Other mechanistic and representation-level studies further analyze safety fine-tuning, refusal-related representations, harmfulness perception, and activation-level shifts under jailbreak attacks \citep{jain2024safety,he2024jailbreaklens,prakash2025beyond,gao2024shaping}. These studies provide valuable insight into how jailbreak attacks work, but remain largely phenomenological. We instead take a theoretical perspective: by modeling how input modifications from a harmful prompt to its jailbreak counterpart propagate through an LLM and induce a harmful answer, we develop a RED-centered framework for analyzing the structural vulnerabilities exploited by jailbreak attacks.

\section{Limitations and future work}
\label{app:limitations}

\paragraph{Local first-order formalization of harmful semantics.}
Our formalization of the model's harmful-semantics interpretation is motivated by mechanistic-interpretability work that studies behaviorally relevant concepts through latent features, directions, or low-dimensional subspaces in representation space \citep{bereska2024mechreview,somvanshi2025bridging}. Since LLM computations are highly nonlinear, we do not impose a single global harmful-semantics-sensitive subspace. Instead, we use a subspace field \(X\mapsto\mathcal U(X)\) to accommodate the case where the input-side directions affecting the model's harmful-semantics interpretation vary with the input. This local first-order formalization is sufficient for our goal of characterizing semantics-preserving directions that can still affect answer-versus-refusal behavior. In real models, however, harmful-semantics interpretation may also depend on higher-order or nonlocal structure in the input space. Extending the RED-centered framework beyond the local first-order setting, and studying how richer semantic structure interacts with RED, is an important direction for future work.

\paragraph{Conditional safety--utility trade-off.}
Our analysis shows that exact RED elimination gives rise to a conditional safety--utility trade-off. The key condition is whether the harmful-region RED-elimination field and the benign-region behavioral field are analytically identical on the surrounding open region. This identical-field requirement is highly restrictive: eliminating RED near harmful inputs and preserving benign-response behavior generally impose different functional requirements on the same shared expressive modules. This provides a mechanism by which safety and utility requirements can come into conflict. Nevertheless, because real models are complex and benign behavior is difficult to specify exactly, we do not prove that every concrete model must satisfy the non-identical-field condition. Future work should study this condition more directly, both theoretically and empirically. Moreover, practical defenses may not require exact RED elimination; suppressing RED below a useful threshold may be sufficient. Quantifying the approximate safety--utility trade-offs arising from partial RED suppression is therefore an important next step.

\paragraph{Reference choices for empirical analysis.}
To make the RED-centered framework operational for empirical analysis, we instantiate reference choices for the input transformation \(X(\eta)\), the target-behavior subspace \(\mathcal Y\), and the harmful-semantics-sensitive subspace \(\mathcal U(X)\) for each harmful--jailbreak pair. To avoid making the analysis depend on additional probes, sparse-feature estimators, or other auxiliary mechanisms whose validity, stability, or causal interpretation may introduce separate uncertainties, we use pair-specific reference choices. Specifically, the reference target-behavior subspace is defined by the harmful--jailbreak logit-difference direction, and the reference harmful-semantics-sensitive subspace is defined by the input-side target-behavior sensitivity at the clean, placeholder-free harmful input. These choices yield stable and directly comparable empirical objects for analyzing RED at harmful inputs and successful jailbreak transitions. They should therefore be understood as a controlled reference instantiation of the theory, not as the only possible empirical realization of RED. Future work should explore richer and more independent constructions of the reference target-behavior and harmful-semantics-sensitive subspaces, including pair-independent probes, harmful--benign contrast directions, and sparse-feature analyses, while first establishing their stability, effectiveness, and causal interpretability. It should also strengthen causal validation by comparing successful jailbreaks with failed attempts and benign prompt transformations, and by testing whether suppressing RED or its operator-level contributions reduces jailbreak success.

\section{Broader impacts}
\label{app:broader_impacts}

This work aims to improve the robustness of aligned LLMs by identifying structural vulnerabilities that enable jailbreaks and clarifying why eliminating them can induce a conditional safety--utility trade-off. The analysis may support better diagnosis and defense of aligned models. At the same time, mechanistic insights about jailbreakability could potentially be misused to better understand how to bypass refusal mechanisms. To mitigate this risk, we do not propose a new jailbreak attack or provide turnkey offensive procedures; existing jailbreak methods are used only for evaluation, and our analysis is framed around diagnosis and defense.

\section{Proofs}
\label{app:proofs}

\subsection{Proof of Proposition~\ref{prop:r0_basic}}
\label{app:proof_r0_basic}

\begin{proof}
Fix \(X\in\mathcal X\). By the definition of RED,
\[
\mathcal R_{\mathcal Y}(X)
=
P_{\mathcal U(X)^\perp}
\bigl(J_{0:T}(X)^*\mathcal Y\bigr)
=
\left\{
P_{\mathcal U(X)^\perp}
\bigl(J_{0:T}(X)^*Y\bigr)
:
Y\in\mathcal Y
\right\}.
\]
Every element of \(\mathcal R_{\mathcal Y}(X)\) lies in \(\mathcal U(X)^\perp\). Under Assumption~\ref{ass:u_star}, \(\mathcal P(X)=\mathcal U(X)^\perp\). Therefore,
\[
\mathcal R_{\mathcal Y}(X)\subseteq\mathcal P(X),
\]
which proves harmful-semantics preservation.

We next prove that directions orthogonal to \(\mathcal R_{\mathcal Y}(X)\) within \(\mathcal P(X)\) have no first-order effect on the target-behavior signal. Let \(\Delta X\in\mathcal P(X)=\mathcal U(X)^\perp\). For any \(Y\in\mathcal Y\), the Frobenius adjoint relation gives
\[
\inner{J_{0:T}(X)\Delta X}{Y}
=
\inner{\Delta X}{J_{0:T}(X)^*Y}.
\]
Decompose \(J_{0:T}(X)^*Y\) into its \(\mathcal U(X)\)-aligned and \(\mathcal U(X)^\perp\)-components:
\[
J_{0:T}(X)^*Y
=
P_{\mathcal U(X)}
\bigl(J_{0:T}(X)^*Y\bigr)
+
P_{\mathcal U(X)^\perp}
\bigl(J_{0:T}(X)^*Y\bigr).
\]
Since \(\Delta X\in\mathcal U(X)^\perp\), we have
\[
\inner{\Delta X}{
P_{\mathcal U(X)}
\bigl(J_{0:T}(X)^*Y\bigr)
}
=0.
\]
Hence
\[
\begin{aligned}
\inner{\Delta X}{J_{0:T}(X)^*Y}
&=
\inner{\Delta X}{
P_{\mathcal U(X)}
\bigl(J_{0:T}(X)^*Y\bigr)
}
+
\inner{\Delta X}{
P_{\mathcal U(X)^\perp}
\bigl(J_{0:T}(X)^*Y\bigr)
} \\
&=
\inner{\Delta X}{
P_{\mathcal U(X)^\perp}
\bigl(J_{0:T}(X)^*Y\bigr)
}.
\end{aligned}
\]

By the definition of \(\mathcal R_{\mathcal Y}(X)\),
\[
P_{\mathcal U(X)^\perp}
\bigl(J_{0:T}(X)^*Y\bigr)
\in
\mathcal R_{\mathcal Y}(X).
\]
Now decompose \(\Delta X\) into its \(\mathcal R_{\mathcal Y}(X)\)-component and its orthogonal complement:
\[
\Delta X
=
P_{\mathcal R_{\mathcal Y}(X)}\Delta X
+
P_{\mathcal R_{\mathcal Y}(X)^\perp}\Delta X.
\]
Since
\(P_{\mathcal R_{\mathcal Y}(X)^\perp}\Delta X\)
is orthogonal to every element of \(\mathcal R_{\mathcal Y}(X)\), and
\(P_{\mathcal U(X)^\perp}(J_{0:T}(X)^*Y)\in\mathcal R_{\mathcal Y}(X)\), we have
\[
\inner{
P_{\mathcal R_{\mathcal Y}(X)^\perp}\Delta X
}{
P_{\mathcal U(X)^\perp}
\bigl(J_{0:T}(X)^*Y\bigr)
}
=0.
\]
Therefore,
\[
\begin{aligned}
\inner{\Delta X}{
P_{\mathcal U(X)^\perp}
\bigl(J_{0:T}(X)^*Y\bigr)
}
&=
\inner{
P_{\mathcal R_{\mathcal Y}(X)}\Delta X
+
P_{\mathcal R_{\mathcal Y}(X)^\perp}\Delta X
}{
P_{\mathcal U(X)^\perp}
\bigl(J_{0:T}(X)^*Y\bigr)
} \\
&=
\inner{
P_{\mathcal R_{\mathcal Y}(X)}\Delta X
}{
P_{\mathcal U(X)^\perp}
\bigl(J_{0:T}(X)^*Y\bigr)
}
\\
& \quad +
\inner{
P_{\mathcal R_{\mathcal Y}(X)^\perp}\Delta X
}{
P_{\mathcal U(X)^\perp}
\bigl(J_{0:T}(X)^*Y\bigr)
} \\
&=
\inner{
P_{\mathcal R_{\mathcal Y}(X)}\Delta X
}{
P_{\mathcal U(X)^\perp}
\bigl(J_{0:T}(X)^*Y\bigr)
}.
\end{aligned}
\]

Moreover,
\(P_{\mathcal R_{\mathcal Y}(X)}\Delta X\in\mathcal R_{\mathcal Y}(X)\), and the first part of the proof gives
\(\mathcal R_{\mathcal Y}(X)\subseteq\mathcal U(X)^\perp\). Hence
\[
P_{\mathcal R_{\mathcal Y}(X)}\Delta X
\in
\mathcal U(X)^\perp,
\]
so
\[
\inner{
P_{\mathcal R_{\mathcal Y}(X)}\Delta X
}{
P_{\mathcal U(X)}
\bigl(J_{0:T}(X)^*Y\bigr)
}
=0.
\]
Using again the decomposition of \(J_{0:T}(X)^*Y\), we get
\[
\begin{aligned}
\inner{
P_{\mathcal R_{\mathcal Y}(X)}\Delta X
}{
J_{0:T}(X)^*Y
}
&=
\inner{
P_{\mathcal R_{\mathcal Y}(X)}\Delta X
}{
P_{\mathcal U(X)}
\bigl(J_{0:T}(X)^*Y\bigr)
+
P_{\mathcal U(X)^\perp}
\bigl(J_{0:T}(X)^*Y\bigr)
} \\
&=
\inner{
P_{\mathcal R_{\mathcal Y}(X)}\Delta X
}{
P_{\mathcal U(X)}
\bigl(J_{0:T}(X)^*Y\bigr)
}
\\
& \quad +
\inner{
P_{\mathcal R_{\mathcal Y}(X)}\Delta X
}{
P_{\mathcal U(X)^\perp}
\bigl(J_{0:T}(X)^*Y\bigr)
} \\
&=
\inner{
P_{\mathcal R_{\mathcal Y}(X)}\Delta X
}{
P_{\mathcal U(X)^\perp}
\bigl(J_{0:T}(X)^*Y\bigr)
}.
\end{aligned}
\]
Thus,
\[
\inner{
P_{\mathcal R_{\mathcal Y}(X)}\Delta X
}{
P_{\mathcal U(X)^\perp}
\bigl(J_{0:T}(X)^*Y\bigr)
}
=
\inner{
P_{\mathcal R_{\mathcal Y}(X)}\Delta X
}{
J_{0:T}(X)^*Y
}.
\]
Applying the Frobenius adjoint relation again gives
\[
\inner{
P_{\mathcal R_{\mathcal Y}(X)}\Delta X
}{
J_{0:T}(X)^*Y
}
=
\inner{
J_{0:T}(X)
\bigl[
P_{\mathcal R_{\mathcal Y}(X)}\Delta X
\bigr]
}{Y}.
\]

Combining the displayed equalities, for every \(Y\in\mathcal Y\),
\[
\begin{aligned}
\inner{J_{0:T}(X)\Delta X}{Y}
&=
\inner{\Delta X}{J_{0:T}(X)^*Y} \\
&=
\inner{\Delta X}{
P_{\mathcal U(X)^\perp}
\bigl(J_{0:T}(X)^*Y\bigr)
} \\
&=
\inner{
P_{\mathcal R_{\mathcal Y}(X)}\Delta X
}{
P_{\mathcal U(X)^\perp}
\bigl(J_{0:T}(X)^*Y\bigr)
} \\
&=
\inner{
P_{\mathcal R_{\mathcal Y}(X)}\Delta X
}{
J_{0:T}(X)^*Y
} \\
&=
\inner{
J_{0:T}(X)
\bigl[
P_{\mathcal R_{\mathcal Y}(X)}\Delta X
\bigr]
}{Y}.
\end{aligned}
\]
Therefore, the two hidden-state perturbations
\(J_{0:T}(X)\Delta X\) and
\(J_{0:T}(X)[P_{\mathcal R_{\mathcal Y}(X)}\Delta X]\)
have the same inner product with every \(Y\in\mathcal Y\). Equivalently, their projections onto \(\mathcal Y\) are equal:
\[
P_{\mathcal Y}J_{0:T}(X)\Delta X
=
P_{\mathcal Y}J_{0:T}(X)
\bigl[
P_{\mathcal R_{\mathcal Y}(X)}\Delta X
\bigr].
\]
Since \(\Phi_{\mathcal Y}(X)=P_{\mathcal Y}H^{(T)}(X)\), its first-order variation is
\[
D\Phi_{\mathcal Y}(X)[\Delta X]
=
P_{\mathcal Y}J_{0:T}(X)\Delta X.
\]
Thus,
\[
D\Phi_{\mathcal Y}(X)[\Delta X]
=
P_{\mathcal Y}J_{0:T}(X)\Delta X
=
P_{\mathcal Y}J_{0:T}(X)
\bigl[
P_{\mathcal R_{\mathcal Y}(X)}\Delta X
\bigr].
\]

In particular, if \(\Delta X\perp\mathcal R_{\mathcal Y}(X)\), then
\(P_{\mathcal R_{\mathcal Y}(X)}\Delta X=0\). The identity above gives
\[
D\Phi_{\mathcal Y}(X)[\Delta X]=0.
\]
Thus orthogonal directions in \(\mathcal P(X)\) have no first-order effect on the target-behavior signal. This completes the proof.
\end{proof}

\subsection{Proof of Proposition~\ref{prop:exact_decomposition}}
\label{app:proof_decomp}

\begin{proof}
Fix \(X\in\mathcal X\) and \(Y\in\mathcal Y\). All quantities below are evaluated at this fixed \(X\) and for this fixed target-behavior direction \(Y\).

We first prove the recurrence. For each \(t=0,\ldots,T-1\), the decomposition
\[
G_{t+1}=A_{t+1}+R_{t+1}
\]
gives
\[
\begin{aligned}
G_t
&=
J_t(X)^*G_{t+1} \\
&=
J_t(X)^*
\bigl(A_{t+1}+R_{t+1}\bigr) \\
&=
J_t(X)^*A_{t+1}
+
J_t(X)^*R_{t+1}.
\end{aligned}
\]
Projecting onto \(\mathcal U_t^\perp\), we obtain
\[
\begin{aligned}
R_t
&=
P_{\mathcal U_t^\perp}G_t \\
&=
P_{\mathcal U_t^\perp}
\bigl(
J_t(X)^*A_{t+1}
+
J_t(X)^*R_{t+1}
\bigr) \\
&=
P_{\mathcal U_t^\perp}
\bigl(J_t(X)^*A_{t+1}\bigr)
+
P_{\mathcal U_t^\perp}
\bigl(J_t(X)^*R_{t+1}\bigr).
\end{aligned}
\]
By the definition of the leakage source,
\[
B_t
=
P_{\mathcal U_t^\perp}
\bigl(J_t(X)^*A_{t+1}\bigr).
\]
It remains to show that the second projection is redundant. For any \(U\in\mathcal U_t\), the forward recursion \(\mathcal U_{t+1}=J_t(X)\mathcal U_t\) gives \(J_t(X)U\in\mathcal U_{t+1}\). Since \(R_{t+1}=P_{\mathcal U_{t+1}^\perp}G_{t+1}\), we have \(R_{t+1}\in\mathcal U_{t+1}^\perp\). Therefore,
\[
\begin{aligned}
\inner{J_t(X)^*R_{t+1}}{U}
&=
\inner{R_{t+1}}{J_t(X)U} = 0.
\end{aligned}
\]
Since this holds for all \(U\in\mathcal U_t\), we have
\[
J_t(X)^*R_{t+1}\in\mathcal U_t^\perp.
\]
Thus
\[
P_{\mathcal U_t^\perp}
\bigl(J_t(X)^*R_{t+1}\bigr)
=
J_t(X)^*R_{t+1}.
\]
Substituting this into the projected decomposition yields
\[
R_t
=
B_t
+
J_t(X)^*R_{t+1},
\]
which proves \eqref{eq:recurrence}.

We now iterate the recurrence. Starting from \(t=0\),
\[
R_0
=
B_0+J_0(X)^*R_1.
\]
Applying the recurrence at \(t=1\), we have
\[
R_1
=
B_1+J_1(X)^*R_2.
\]
Substituting this into the expression for \(R_0\) gives
\[
\begin{aligned}
R_0
&=
B_0
+
J_0(X)^*
\bigl(
B_1+J_1(X)^*R_2
\bigr) \\
&=
B_0
+
J_0(X)^*B_1
+
\bigl(J_0(X)^*\circ J_1(X)^*\bigr)R_2.
\end{aligned}
\]
Applying the recurrence again at \(t=2\),
\[
R_2
=
B_2+J_2(X)^*R_3,
\]
and substituting gives
\[
\begin{aligned}
R_0
&=
B_0
+
J_0(X)^*B_1
+
\bigl(J_0(X)^*\circ J_1(X)^*\bigr)
\bigl(
B_2+J_2(X)^*R_3
\bigr) \\
&=
B_0
+
J_0(X)^*B_1
+
\bigl(J_0(X)^*\circ J_1(X)^*\bigr)B_2
+
\bigl(J_0(X)^*\circ J_1(X)^*\circ J_2(X)^*\bigr)R_3.
\end{aligned}
\]

In general, for any \(s=0,\ldots,T-1\), we claim that
\begin{equation}
R_0
=
\sum_{j=0}^{s}
\bigl(J_0(X)^*\circ\cdots\circ J_{j-1}(X)^*\bigr)B_j
+
\bigl(J_0(X)^*\circ\cdots\circ J_s(X)^*\bigr)R_{s+1},
\label{eq:proof_decomp_partial_direction}
\end{equation}
where the empty composition is understood as the identity. The case \(s=0\) is
\[
R_0
=
B_0+J_0(X)^*R_1.
\]
Assume \eqref{eq:proof_decomp_partial_direction} holds for some \(s<T-1\). By the recurrence at \(t=s+1\),
\[
R_{s+1}
=
B_{s+1}+J_{s+1}(X)^*R_{s+2}.
\]
Therefore,
\[
\begin{aligned}
&
\bigl(J_0(X)^*\circ\cdots\circ J_s(X)^*\bigr)R_{s+1}
\\
&=
\bigl(J_0(X)^*\circ\cdots\circ J_s(X)^*\bigr)
\bigl(
B_{s+1}+J_{s+1}(X)^*R_{s+2}
\bigr)
\\
&=
\bigl(J_0(X)^*\circ\cdots\circ J_s(X)^*\bigr)B_{s+1}
+
\bigl(J_0(X)^*\circ\cdots\circ J_s(X)^*\circ J_{s+1}(X)^*\bigr)R_{s+2}
\\
&=
\bigl(J_0(X)^*\circ\cdots\circ J_s(X)^*\bigr)B_{s+1}
+
\bigl(J_0(X)^*\circ\cdots\circ J_{s+1}(X)^*\bigr)R_{s+2}.
\end{aligned}
\]
Substituting this into \eqref{eq:proof_decomp_partial_direction} gives
\[
R_0
=
\sum_{j=0}^{s+1}
\bigl(J_0(X)^*\circ\cdots\circ J_{j-1}(X)^*\bigr)B_j
+
\bigl(J_0(X)^*\circ\cdots\circ J_{s+1}(X)^*\bigr)R_{s+2}.
\]
This proves the induction step.

Setting \(s=T-1\) in \eqref{eq:proof_decomp_partial_direction}, we obtain
\[
R_0
=
\sum_{t=0}^{T-1}
\bigl(J_0(X)^*\circ\cdots\circ J_{t-1}(X)^*\bigr)B_t
+
\bigl(J_0(X)^*\circ\cdots\circ J_{T-1}(X)^*\bigr)R_T.
\]
By the definitions
\[
S_t(Y;X)
:=
\bigl(J_0(X)^*\circ\cdots\circ J_{t-1}(X)^*\bigr)B_t,
\qquad
S_T(Y;X)
:=
\bigl(J_0(X)^*\circ\cdots\circ J_{T-1}(X)^*\bigr)R_T,
\]
we have
\[
R_0
=
\sum_{t=0}^{T-1}S_t(Y;X)+S_T(Y;X).
\]
Since \(R_0=R_Y(X)\), this gives
\[
R_Y(X)
=
R_0
=
\sum_{t=0}^{T-1}S_t(Y;X)+S_T(Y;X),
\]
which proves \eqref{eq:exact_sum_compact}.

Finally, we prove the statement for the whole target-behavior subspace \(\mathcal Y\). By the definition of RED,
\[
\mathcal R_{\mathcal Y}(X)
=
\left\{
P_{\mathcal U(X)^\perp}
\bigl(J_{0:T}(X)^*Y\bigr)
:
Y\in\mathcal Y
\right\}.
\]
By the definition of \(R_Y(X)\),
\[
P_{\mathcal U(X)^\perp}
\bigl(J_{0:T}(X)^*Y\bigr)
=
R_Y(X).
\]
Hence
\[
\mathcal R_{\mathcal Y}(X)
=
\left\{
R_Y(X):Y\in\mathcal Y
\right\}.
\]
Using the direction-wise decomposition just proved,
\[
R_Y(X)
=
\sum_{t=0}^{T-1}S_t(Y;X)+S_T(Y;X)
\qquad
\text{for every }Y\in\mathcal Y.
\]
Therefore,
\[
\mathcal R_{\mathcal Y}(X)
=
\left\{
\sum_{t=0}^{T-1}S_t(Y;X)+S_T(Y;X)
:
Y\in\mathcal Y
\right\},
\]
which proves \eqref{eq:red_set_of_direction_decomp}. This completes the proof.
\end{proof}

\subsection{Derivation of the operator-level decomposition for pre-norm residual LLMs}
\label{app:operators}

This appendix specializes the generic source-and-transport decomposition in
\eqref{eq:recurrence}--\eqref{eq:exact_sum_compact} to the operator structure of
modern pre-norm residual LLMs. The goal is to derive the operator-level
decomposition of each refusal-escape direction into normalization,
self-attention, MLP, residual-wiring, and terminal-source contributions. We fix
\(X\in\mathcal X\) and \(Y\in\mathcal Y\); all quantities below are evaluated at
this fixed input and target-behavior direction.

We consider a pre-norm residual LLM. For each layer \(l=0,\dots,L-1\),
\begin{equation}
P_l = N_l^{\mathrm{attn}}(X_l), \qquad
O_l = \mathrm{Attn}_l(P_l), \qquad
Z_l = X_l + O_l,
\label{eq:app_attn_block}
\end{equation}
\begin{equation}
Q_l = N_l^{\mathrm{mlp}}(Z_l), \qquad
M_l = \mathrm{MLP}_l(Q_l), \qquad
X_{l+1} = Z_l + M_l,
\label{eq:app_mlp_block}
\end{equation}
followed by the final normalization
\begin{equation}
H^{(T)} = N^{\mathrm{fn}}(X_L).
\label{eq:app_final_norm_arch}
\end{equation}
Here \(N_l^{\mathrm{attn}}\), \(N_l^{\mathrm{mlp}}\), and \(N^{\mathrm{fn}}\) denote normalization operators, such as LayerNorm or RMSNorm.

Starting from the harmful-semantics-sensitive input subspace
\(\mathcal U_{X_0}:=\mathcal U(X)\), we propagate the subspaces through the
component operators in the architecture above:
\begin{equation}
\mathcal U_{P_l}:=J_{N_l^{\mathrm{attn}}}(X_l)\mathcal U_{X_l},\qquad
\mathcal U_{O_l}:=J_{\mathrm{Attn}_l}(P_l)\mathcal U_{P_l},\qquad
\mathcal U_{Z_l}:=J_{\mathcal B^{l,\mathrm{attn}}}(X_l)\mathcal U_{X_l},
\label{eq:app_attn_subspaces}
\end{equation}
\begin{equation}
\mathcal U_{Q_l}:=J_{N_l^{\mathrm{mlp}}}(Z_l)\mathcal U_{Z_l},\qquad
\mathcal U_{M_l}:=J_{\mathrm{MLP}_l}(Q_l)\mathcal U_{Q_l},\qquad
\mathcal U_{X_{l+1}}:=J_{\mathcal B^{l,\mathrm{mlp}}}(Z_l)\mathcal U_{Z_l},
\label{eq:app_mlp_subspaces}
\end{equation}
and finally
\begin{equation}
\mathcal U_T:=J_{N^{\mathrm{fn}}}(X_L)\mathcal U_{X_L}.
\label{eq:app_final_subspace}
\end{equation}

\subsubsection{Leakage sources in residual blocks}

We begin with a generic post-add residual block
\begin{equation}
\mathcal B(x)=x+F(x),
\label{eq:app_residual_block}
\end{equation}
where \(x\) is the identity input, \(F(x)\) is the learnable body output, and
\(\mathcal B(x)\) is the block output.

Let \(\mathcal U_{\mathrm{in}}\) be the propagated harmful-semantics-sensitive
subspace at the block input, and let
\[
\mathcal U_{\mathrm{out}}:=J_{\mathcal B}(x)\mathcal U_{\mathrm{in}}
\]
be the corresponding subspace at the block output. Let \(G_{\mathrm{out}}\) be
the target-behavior sensitivity direction at the block output, and define its
harmful-semantics-aligned component by
\begin{equation}
A_{\mathrm{out}}:=P_{\mathcal U_{\mathrm{out}}}(G_{\mathrm{out}}).
\label{eq:app_A_out}
\end{equation}

\begin{proposition}[Two-way exact decomposition of residual-block leakage source]
\label{prop:app_residual_block}
Define the residual-block leakage source by
\begin{equation}
B^{\mathrm{blk}}
:=
P_{\mathcal U_{\mathrm{in}}^\perp}
\bigl(J_{\mathcal B}(x)^*A_{\mathrm{out}}\bigr).
\label{eq:app_block_local_def}
\end{equation}
Then
\begin{equation}
B^{\mathrm{blk}}
=
B_{\mathrm{id}}^{\mathrm{blk}}
+
B_{\mathrm{body}}^{\mathrm{blk}},
\label{eq:app_block_local_split}
\end{equation}
where
\begin{equation}
B_{\mathrm{id}}^{\mathrm{blk}}
:=
P_{\mathcal U_{\mathrm{in}}^\perp}(A_{\mathrm{out}}),
\label{eq:app_block_id}
\end{equation}
and
\begin{equation}
B_{\mathrm{body}}^{\mathrm{blk}}
:=
P_{\mathcal U_{\mathrm{in}}^\perp}
\bigl(J_F(x)^*A_{\mathrm{out}}\bigr).
\label{eq:app_block_body}
\end{equation}
\end{proposition}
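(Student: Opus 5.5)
The identity follows from the Jacobian of the block, the linearity of adjoints, and the linearity of the orthogonal projection. No subspace invariance argument is needed here, unlike in the proof of Proposition~\ref{prop:exact_decomposition}.

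First, differentiate the block \(\mathcal B(x)=x+F(x)\) at \(x\). Since the identity map has Jacobian \(I\),
\[
J_{\mathcal B}(x)=I+J_F(x).
\]
Next, take Frobenius adjoints. The adjoint is additive and \(I^*=I\), so
\[
J_{\mathcal B}(x)^*=I+J_F(x)^*.
\]
Applying this to \(A_{\mathrm{out}}\) gives
\[
J_{\mathcal B}(x)^*A_{\mathrm{out}}=A_{\mathrm{out}}+J_F(x)^*A_{\mathrm{out}}.
\]
To verify the adjoint step concretely, pair with an arbitrary test direction \(V\):
\[
\inner{J_{\mathcal B}(x)V}{W}=\inner{V}{W}+\inner{J_F(x)V}{W}=\inner{V}{W+J_F(x)^*W}.
\]

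Finally, apply \(P_{\mathcal U_{\mathrm{in}}^\perp}\) to both sides. As written in \eqref{eq:subspace_projection}, this projection is the linear map \(V\mapsto V-\sum_i\innersmall{V}{U_i}U_i\), and linearity splits it over the sum:
\[
B^{\mathrm{blk}}=P_{\mathcal U_{\mathrm{in}}^\perp}(A_{\mathrm{out}})+P_{\mathcal U_{\mathrm{in}}^\perp}\bigl(J_F(x)^*A_{\mathrm{out}}\bigr).
\]
The two terms are exactly \(B_{\mathrm{id}}^{\mathrm{blk}}\) and \(B_{\mathrm{body}}^{\mathrm{blk}}\), which proves \eqref{eq:app_block_local_split}.

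I do not expect a real obstacle. The one point to handle carefully is that the identity branch enters through \(I^*=I\). This means \(B_{\mathrm{id}}^{\mathrm{blk}}\) is the mismatch between \(\mathcal U_{\mathrm{out}}\) and \(\mathcal U_{\mathrm{in}}\) seen by \(A_{\mathrm{out}}\), and it does not vanish automatically because \(\mathcal U_{\mathrm{out}}\neq\mathcal U_{\mathrm{in}}\) in general. I would add a remark noting this.
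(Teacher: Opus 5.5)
Your proposal is correct and follows the paper's proof essentially step by step. The paper likewise derives $J_{\mathcal B}(x)^*G = G + J_F(x)^*G$ by pairing with an arbitrary perturbation, substitutes $G = A_{\mathrm{out}}$, and splits the orthogonal projection by linearity; your closing remark on $B_{\mathrm{id}}^{\mathrm{blk}}$ is a reasonable interpretive aside but is not needed for the identity.
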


\begin{proof}
For any perturbation \(\Delta x\),
\[
J_{\mathcal B}(x)\Delta x
=
\Delta x+J_F(x)\Delta x.
\]
Hence, for any backward direction \(G\),
\[
\begin{aligned}
\inner{J_{\mathcal B}(x)^*G}{\Delta x}
&=
\inner{G}{J_{\mathcal B}(x)\Delta x} \\
&=
\inner{G}{\Delta x+J_F(x)\Delta x} \\
&=
\inner{G}{\Delta x}
+
\inner{G}{J_F(x)\Delta x} \\
&=
\inner{G}{\Delta x}
+
\inner{J_F(x)^*G}{\Delta x} \\
&=
\inner{G+J_F(x)^*G}{\Delta x}.
\end{aligned}
\]
Since this holds for every \(\Delta x\),
\begin{equation}
J_{\mathcal B}(x)^*G
=
G+J_F(x)^*G.
\label{eq:app_block_adjoint}
\end{equation}
Substituting \(G=A_{\mathrm{out}}\) into \eqref{eq:app_block_adjoint} gives
\[
J_{\mathcal B}(x)^*A_{\mathrm{out}}
=
A_{\mathrm{out}}
+
J_F(x)^*A_{\mathrm{out}}.
\]
Projecting onto \(\mathcal U_{\mathrm{in}}^\perp\), we obtain
\[
\begin{aligned}
B^{\mathrm{blk}}
&=
P_{\mathcal U_{\mathrm{in}}^\perp}
\bigl(J_{\mathcal B}(x)^*A_{\mathrm{out}}\bigr) \\
&=
P_{\mathcal U_{\mathrm{in}}^\perp}
\bigl(A_{\mathrm{out}}+J_F(x)^*A_{\mathrm{out}}\bigr) \\
&=
P_{\mathcal U_{\mathrm{in}}^\perp}(A_{\mathrm{out}})
+
P_{\mathcal U_{\mathrm{in}}^\perp}
\bigl(J_F(x)^*A_{\mathrm{out}}\bigr) \\
&=
B_{\mathrm{id}}^{\mathrm{blk}}
+
B_{\mathrm{body}}^{\mathrm{blk}}.
\end{aligned}
\]
This proves the claim.
\end{proof}

Proposition~\ref{prop:app_residual_block} shows that the leakage source of a
post-add residual block splits exactly into two parts: one carried by the
identity edge and the other by the learnable body edge.

\subsubsection{Attention block}

For the attention block in layer \(l\), define
\[
F^{l,\mathrm{attn}}:=\mathrm{Attn}_l\circ N_l^{\mathrm{attn}},
\qquad
\mathcal B^{l,\mathrm{attn}}(X_l)=X_l+F^{l,\mathrm{attn}}(X_l)=Z_l.
\]
Let \(G_{Z_l}\) be the target-behavior sensitivity direction at \(Z_l\), and
define the harmful-semantics-aligned component at the attention-block output by
\begin{equation}
A_{Z_l}:=P_{\mathcal U_{Z_l}}(G_{Z_l}).
\label{eq:app_A_Z}
\end{equation}

\begin{proposition}[Exact decomposition of attention-block leakage source]
\label{prop:app_attn_block}
Define the attention-block leakage source at \(X_l\) by
\begin{equation}
B^{l,\mathrm{attn}}
:=
P_{\mathcal U_{X_l}^\perp}
\bigl(J_{\mathcal B^{l,\mathrm{attn}}}(X_l)^*A_{Z_l}\bigr).
\label{eq:app_attn_source_def}
\end{equation}
Then
\begin{equation}
B^{l,\mathrm{attn}}
=
B^{l,\mathrm{attn}}_{\mathrm{norm1}}
+
B^{l,\mathrm{attn}}_{\mathrm{attn}}
+
B^{l,\mathrm{attn}}_{\mathrm{res1.add}}
+
B^{l,\mathrm{attn}}_{\mathrm{res1.id}},
\label{eq:app_attn_local_sum}
\end{equation}
where
\begin{align}
B^{l,\mathrm{attn}}_{\mathrm{res1.id}}
&:=P_{\mathcal U_{X_l}^\perp}(A_{Z_l}),
\label{eq:app_B_res1_id}
\\
B^{l,\mathrm{attn}}_{\mathrm{res1.add}}
&:=J_{N_l^{\mathrm{attn}}}(X_l)^*
J_{\mathrm{Attn}_l}(P_l)^*
\bigl(P_{\mathcal U_{O_l}^\perp}(A_{Z_l})\bigr),
\label{eq:app_B_res1_add}
\\
B^{l,\mathrm{attn}}_{\mathrm{attn}}
&:=J_{N_l^{\mathrm{attn}}}(X_l)^*
\bigl(P_{\mathcal U_{P_l}^\perp}
(J_{\mathrm{Attn}_l}(P_l)^*P_{\mathcal U_{O_l}}(A_{Z_l}))\bigr),
\label{eq:app_B_attn}
\\
B^{l,\mathrm{attn}}_{\mathrm{norm1}}
&:=P_{\mathcal U_{X_l}^\perp}
\bigl(J_{N_l^{\mathrm{attn}}}(X_l)^*
P_{\mathcal U_{P_l}}
(J_{\mathrm{Attn}_l}(P_l)^*P_{\mathcal U_{O_l}}(A_{Z_l}))\bigr).
\label{eq:app_B_norm1}
\end{align}
\end{proposition}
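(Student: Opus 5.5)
The plan is to reduce the claim to Proposition~\ref{prop:app_residual_block} and then peel the body term apart operator by operator. Each peeling step uses the orthogonality-transport fact already used in the proof of Proposition~\ref{prop:exact_decomposition}.

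\textbf{Step 1 (identity/body split).} Apply Proposition~\ref{prop:app_residual_block} to the block \(\mathcal B^{l,\mathrm{attn}}\) with \(\mathcal U_{\mathrm{in}}=\mathcal U_{X_l}\), \(\mathcal U_{\mathrm{out}}=\mathcal U_{Z_l}\) and \(A_{\mathrm{out}}=A_{Z_l}\). This gives
\[
B^{l,\mathrm{attn}}=P_{\mathcal U_{X_l}^\perp}(A_{Z_l})+P_{\mathcal U_{X_l}^\perp}\bigl(J_{F^{l,\mathrm{attn}}}(X_l)^*A_{Z_l}\bigr).
\]
The first term is exactly \(B^{l,\mathrm{attn}}_{\mathrm{res1.id}}\). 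By the chain rule, \(J_{F^{l,\mathrm{attn}}}(X_l)^*=J_{N_l^{\mathrm{attn}}}(X_l)^*J_{\mathrm{Attn}_l}(P_l)^*\).

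\textbf{Step 2 (transport lemma).} I would isolate one elementary lemma. If \(\mathcal U'=J\mathcal U\) and \(V\in\mathcal U'^\perp\), then \(J^*V\in\mathcal U^\perp\). The reason is that \(\langle J^*V,U\rangle_F=\langle V,JU\rangle_F=0\) for every \(U\in\mathcal U\). By the definitions in \eqref{eq:app_attn_subspaces}, this lemma applies to the pairs \((\mathcal U_{P_l},\mathcal U_{O_l})\) with \(J=J_{\mathrm{Attn}_l}(P_l)\), and \((\mathcal U_{X_l},\mathcal U_{P_l})\) with \(J=J_{N_l^{\mathrm{attn}}}(X_l)\).

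\textbf{Step 3 (successive splits).}
\begin{enumerate}
\item Split \(A_{Z_l}=P_{\mathcal U_{O_l}^\perp}(A_{Z_l})+P_{\mathcal U_{O_l}}(A_{Z_l})\) inside the body term and use linearity of the adjoints and the projection.
\item For the \(\mathcal U_{O_l}^\perp\) piece, apply the lemma twice. First, \(J_{\mathrm{Attn}_l}^*\) sends it into \(\mathcal U_{P_l}^\perp\). Then \(J_{N_l^{\mathrm{attn}}}^*\) sends that into \(\mathcal U_{X_l}^\perp\). So the outer projection \(P_{\mathcal U_{X_l}^\perp}\) acts as the identity, and the piece equals \(B^{l,\mathrm{attn}}_{\mathrm{res1.add}}\).
\item For the \(\mathcal U_{O_l}\) piece, set \(W:=J_{\mathrm{Attn}_l}(P_l)^*P_{\mathcal U_{O_l}}(A_{Z_l})\) and split \(W=P_{\mathcal U_{P_l}^\perp}W+P_{\mathcal U_{P_l}}W\).
\item The term \(J_{N_l^{\mathrm{attn}}}^*P_{\mathcal U_{P_l}^\perp}W\) already lies in \(\mathcal U_{X_l}^\perp\) by the lemma, so the outer projection drops and it equals \(B^{l,\mathrm{attn}}_{\mathrm{attn}}\).
\item The term \(P_{\mathcal U_{X_l}^\perp}\bigl(J_{N_l^{\mathrm{attn}}}^*P_{\mathcal U_{P_l}}W\bigr)\) is by definition \(B^{l,\mathrm{attn}}_{\mathrm{norm1}}\).
\end{enumerate}
Summing the four pieces gives \eqref{eq:app_attn_local_sum}.

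The only real obstacle is bookkeeping. I need to check that each projection dropped in items 2 and 4 is justified by the forward-propagation definitions \(\mathcal U_{O_l}=J_{\mathrm{Attn}_l}\mathcal U_{P_l}\) and \(\mathcal U_{P_l}=J_{N_l^{\mathrm{attn}}}\mathcal U_{X_l}\). I also need to check that the projection kept in the norm term is genuinely not redundant. This holds because \(P_{\mathcal U_{P_l}}W\) lies in \(\mathcal U_{P_l}\), not in its complement, so the lemma does not apply there.
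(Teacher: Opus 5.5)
Your proposal is correct and follows the paper's proof essentially step for step. Both arguments split into identity and body terms via Proposition~\ref{prop:app_residual_block}, split $A_{Z_l}$ against $\mathcal U_{O_l}$ and then the attention pullback against $\mathcal U_{P_l}$, and drop the redundant projections by the adjoint orthogonality argument. The only difference is cosmetic: you factor that argument out as a lemma and apply it twice, while the paper does the composed inner-product computation directly.

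Your closing check that the projection in $B^{l,\mathrm{attn}}_{\mathrm{norm1}}$ is ``genuinely not redundant'' is unnecessary, and your reasoning would not establish it anyway. That projection is part of the definition of $B^{l,\mathrm{attn}}_{\mathrm{norm1}}$, so the identity holds whether or not it acts trivially.
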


\begin{proof}
We prove the decomposition in three steps.

\paragraph{Step 1: Split the attention block into identity and body contributions.}
Applying Proposition~\ref{prop:app_residual_block} to
\[
\mathcal B^{l,\mathrm{attn}}(X_l)=X_l+F^{l,\mathrm{attn}}(X_l),
\qquad
F^{l,\mathrm{attn}}=\mathrm{Attn}_l\circ N_l^{\mathrm{attn}},
\]
gives
\[
\begin{aligned}
B^{l,\mathrm{attn}}
&=
P_{\mathcal U_{X_l}^\perp}
\bigl(J_{\mathcal B^{l,\mathrm{attn}}}(X_l)^*A_{Z_l}\bigr) \\
&=
P_{\mathcal U_{X_l}^\perp}(A_{Z_l})
+
P_{\mathcal U_{X_l}^\perp}
\bigl(J_{F^{l,\mathrm{attn}}}(X_l)^*A_{Z_l}\bigr).
\end{aligned}
\]
The first term is \(B^{l,\mathrm{attn}}_{\mathrm{res1.id}}\). It remains to
decompose the second term.

\paragraph{Step 2: Split the body-induced term at the residual-add input.}
Decompose \(A_{Z_l}\) relative to \(\mathcal U_{O_l}\):
\[
A_{Z_l}
=
P_{\mathcal U_{O_l}}(A_{Z_l})
+
P_{\mathcal U_{O_l}^\perp}(A_{Z_l}).
\]
Since
\[
J_{F^{l,\mathrm{attn}}}(X_l)^*
=
J_{N_l^{\mathrm{attn}}}(X_l)^*
J_{\mathrm{Attn}_l}(P_l)^*,
\]
we have
\[
\begin{aligned}
&
P_{\mathcal U_{X_l}^\perp}
\bigl(J_{F^{l,\mathrm{attn}}}(X_l)^*A_{Z_l}\bigr)
\\
&=
P_{\mathcal U_{X_l}^\perp}
\bigl(
J_{N_l^{\mathrm{attn}}}(X_l)^*
J_{\mathrm{Attn}_l}(P_l)^*
[
P_{\mathcal U_{O_l}}(A_{Z_l})
+
P_{\mathcal U_{O_l}^\perp}(A_{Z_l})
]
\bigr)
\\
&=
P_{\mathcal U_{X_l}^\perp}
\bigl(
J_{N_l^{\mathrm{attn}}}(X_l)^*
J_{\mathrm{Attn}_l}(P_l)^*
P_{\mathcal U_{O_l}}(A_{Z_l})
\bigr)
+
P_{\mathcal U_{X_l}^\perp}
\bigl(
J_{N_l^{\mathrm{attn}}}(X_l)^*
J_{\mathrm{Attn}_l}(P_l)^*
P_{\mathcal U_{O_l}^\perp}(A_{Z_l})
\bigr).
\end{aligned}
\]
For the second term, let \(W=P_{\mathcal U_{O_l}^\perp}(A_{Z_l})\). For any
\(V\in\mathcal U_{X_l}\),
\[
J_{N_l^{\mathrm{attn}}}(X_l)V\in\mathcal U_{P_l},
\qquad
J_{\mathrm{Attn}_l}(P_l)J_{N_l^{\mathrm{attn}}}(X_l)V\in\mathcal U_{O_l}.
\]
Since \(W\in\mathcal U_{O_l}^\perp\),
\[
\begin{aligned}
\inner{
J_{N_l^{\mathrm{attn}}}(X_l)^*
J_{\mathrm{Attn}_l}(P_l)^*W
}{V}
&=
\inner{
J_{\mathrm{Attn}_l}(P_l)^*W
}{
J_{N_l^{\mathrm{attn}}}(X_l)V
} \\
&=
\inner{
W
}{
J_{\mathrm{Attn}_l}(P_l)J_{N_l^{\mathrm{attn}}}(X_l)V
} \\
&=
0.
\end{aligned}
\]
Thus
\[
J_{N_l^{\mathrm{attn}}}(X_l)^*
J_{\mathrm{Attn}_l}(P_l)^*W
\in
\mathcal U_{X_l}^\perp,
\]
and therefore
\[
P_{\mathcal U_{X_l}^\perp}
\bigl(
J_{N_l^{\mathrm{attn}}}(X_l)^*
J_{\mathrm{Attn}_l}(P_l)^*
P_{\mathcal U_{O_l}^\perp}(A_{Z_l})
\bigr)
=
B^{l,\mathrm{attn}}_{\mathrm{res1.add}}.
\]

\paragraph{Step 3: Split the remaining aligned body term into attention and normalization contributions.}
Decompose the attention pullback of \(P_{\mathcal U_{O_l}}(A_{Z_l})\) relative to \(\mathcal U_{P_l}\):
\[
\begin{aligned}
J_{\mathrm{Attn}_l}(P_l)^*P_{\mathcal U_{O_l}}(A_{Z_l})
=
P_{\mathcal U_{P_l}}
\bigl(
J_{\mathrm{Attn}_l}(P_l)^*
P_{\mathcal U_{O_l}}(A_{Z_l})
\bigr) 
+
P_{\mathcal U_{P_l}^\perp}
\bigl(
J_{\mathrm{Attn}_l}(P_l)^*
P_{\mathcal U_{O_l}}(A_{Z_l})
\bigr).
\end{aligned}
\]
Substituting this into the remaining term gives
\[
\begin{aligned}
&
P_{\mathcal U_{X_l}^\perp}
\bigl(
J_{N_l^{\mathrm{attn}}}(X_l)^*
J_{\mathrm{Attn}_l}(P_l)^*
P_{\mathcal U_{O_l}}(A_{Z_l})
\bigr)
\\
&=
P_{\mathcal U_{X_l}^\perp}
\bigl(
J_{N_l^{\mathrm{attn}}}(X_l)^*
[
P_{\mathcal U_{P_l}}
(J_{\mathrm{Attn}_l}(P_l)^*P_{\mathcal U_{O_l}}(A_{Z_l}))
+
P_{\mathcal U_{P_l}^\perp}
(J_{\mathrm{Attn}_l}(P_l)^*P_{\mathcal U_{O_l}}(A_{Z_l}))
]
\bigr)
\\
&=
P_{\mathcal U_{X_l}^\perp}
\bigl(
J_{N_l^{\mathrm{attn}}}(X_l)^*
P_{\mathcal U_{P_l}}
(J_{\mathrm{Attn}_l}(P_l)^*P_{\mathcal U_{O_l}}(A_{Z_l}))
\bigr)
\\
& \quad +
P_{\mathcal U_{X_l}^\perp}
\bigl(
J_{N_l^{\mathrm{attn}}}(X_l)^*
P_{\mathcal U_{P_l}^\perp}
(J_{\mathrm{Attn}_l}(P_l)^*P_{\mathcal U_{O_l}}(A_{Z_l}))
\bigr).
\end{aligned}
\]
Let
\[
W
=
P_{\mathcal U_{P_l}^\perp}
\bigl(
J_{\mathrm{Attn}_l}(P_l)^*
P_{\mathcal U_{O_l}}(A_{Z_l})
\bigr).
\]
For any \(V\in\mathcal U_{X_l}\),
\[
J_{N_l^{\mathrm{attn}}}(X_l)V\in\mathcal U_{P_l},
\qquad
W\in\mathcal U_{P_l}^\perp,
\]
so
\[
\begin{aligned}
\inner{
J_{N_l^{\mathrm{attn}}}(X_l)^*W
}{V}
&=
\inner{
W
}{
J_{N_l^{\mathrm{attn}}}(X_l)V
} 
=
0.
\end{aligned}
\]
Thus \(J_{N_l^{\mathrm{attn}}}(X_l)^*W\in\mathcal U_{X_l}^\perp\), and
the projection is redundant on this term. Hence
\[
P_{\mathcal U_{X_l}^\perp}
\bigl(J_{N_l^{\mathrm{attn}}}(X_l)^*W\bigr)
=
J_{N_l^{\mathrm{attn}}}(X_l)^*W
=
B^{l,\mathrm{attn}}_{\mathrm{attn}}.
\]
The first term is 
\[
P_{\mathcal U_{X_l}^\perp}
\bigl(
J_{N_l^{\mathrm{attn}}}(X_l)^*
P_{\mathcal U_{P_l}}
(J_{\mathrm{Attn}_l}(P_l)^*P_{\mathcal U_{O_l}}(A_{Z_l}))
\bigr)
= 
B^{l,\mathrm{attn}}_{\mathrm{norm1}}.
\]
Combining the four terms gives \eqref{eq:app_attn_local_sum}.
\end{proof}

\subsubsection{MLP block}

For the MLP block in layer \(l\), define
\[
F^{l,\mathrm{mlp}}:=\mathrm{MLP}_l\circ N_l^{\mathrm{mlp}},
\qquad
\mathcal B^{l,\mathrm{mlp}}(Z_l)=Z_l+F^{l,\mathrm{mlp}}(Z_l)=X_{l+1}.
\]
Let \(G_{X_{l+1}}\) be the target-behavior sensitivity direction at \(X_{l+1}\), and define
\begin{equation}
A_{X_{l+1}}:=P_{\mathcal U_{X_{l+1}}}(G_{X_{l+1}}).
\label{eq:app_A_Xnext}
\end{equation}

\begin{proposition}[Exact decomposition of MLP-block leakage source]
\label{prop:app_mlp_block}
Define the MLP-block leakage source at \(Z_l\) by
\begin{equation}
B^{l,\mathrm{mlp}}
:=
P_{\mathcal U_{Z_l}^\perp}
\bigl(J_{\mathcal B^{l,\mathrm{mlp}}}(Z_l)^*A_{X_{l+1}}\bigr).
\label{eq:app_mlp_source_def}
\end{equation}
Then
\begin{equation}
B^{l,\mathrm{mlp}}
=
B^{l,\mathrm{mlp}}_{\mathrm{norm2}}
+
B^{l,\mathrm{mlp}}_{\mathrm{mlp}}
+
B^{l,\mathrm{mlp}}_{\mathrm{res2.add}}
+
B^{l,\mathrm{mlp}}_{\mathrm{res2.id}},
\label{eq:app_mlp_local_sum}
\end{equation}
where
\begin{align}
B^{l,\mathrm{mlp}}_{\mathrm{res2.id}}
&:=P_{\mathcal U_{Z_l}^\perp}(A_{X_{l+1}}),
\label{eq:app_B_res2_id}
\\
B^{l,\mathrm{mlp}}_{\mathrm{res2.add}}
&:=
J_{N_l^{\mathrm{mlp}}}(Z_l)^*
J_{\mathrm{MLP}_l}(Q_l)^*
\bigl(P_{\mathcal U_{M_l}^\perp}(A_{X_{l+1}})\bigr),
\label{eq:app_B_res2_add}
\\
B^{l,\mathrm{mlp}}_{\mathrm{mlp}}
&:=J_{N_l^{\mathrm{mlp}}}(Z_l)^*
\bigl(P_{\mathcal U_{Q_l}^\perp}
(J_{\mathrm{MLP}_l}(Q_l)^*P_{\mathcal U_{M_l}}(A_{X_{l+1}}))\bigr),
\label{eq:app_B_mlp}
\\
B^{l,\mathrm{mlp}}_{\mathrm{norm2}}
&:=P_{\mathcal U_{Z_l}^\perp}
\bigl(J_{N_l^{\mathrm{mlp}}}(Z_l)^*
P_{\mathcal U_{Q_l}}
(J_{\mathrm{MLP}_l}(Q_l)^*P_{\mathcal U_{M_l}}(A_{X_{l+1}}))\bigr).
\label{eq:app_B_norm2}
\end{align}
\end{proposition}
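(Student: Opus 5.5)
The argument mirrors the proof of Proposition~\ref{prop:app_attn_block}, with \(N_l^{\mathrm{mlp}}\), \(\mathrm{MLP}_l\), \(Z_l\), \(Q_l\), \(M_l\), \(X_{l+1}\) in place of \(N_l^{\mathrm{attn}}\), \(\mathrm{Attn}_l\), \(X_l\), \(P_l\), \(O_l\), \(Z_l\). It proceeds in three steps, each consisting of a splitting of a projected pullback followed by a check that some outer projection is redundant.

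\emph{Step 1 (identity versus body).} Apply Proposition~\ref{prop:app_residual_block} to \(\mathcal B^{l,\mathrm{mlp}}(Z_l)=Z_l+F^{l,\mathrm{mlp}}(Z_l)\), with input subspace \(\mathcal U_{Z_l}\) and output subspace \(\mathcal U_{X_{l+1}}=J_{\mathcal B^{l,\mathrm{mlp}}}(Z_l)\mathcal U_{Z_l}\). This matches the definition in \eqref{eq:app_mlp_subspaces}, so the hypotheses are met with \(A_{\mathrm{out}}=A_{X_{l+1}}\). The result is
\[
B^{l,\mathrm{mlp}}=P_{\mathcal U_{Z_l}^\perp}(A_{X_{l+1}})+P_{\mathcal U_{Z_l}^\perp}\bigl(J_{F^{l,\mathrm{mlp}}}(Z_l)^*A_{X_{l+1}}\bigr).
\]
The first term is \(B^{l,\mathrm{mlp}}_{\mathrm{res2.id}}\). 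By the chain rule,
\[
J_{F^{l,\mathrm{mlp}}}(Z_l)^*=J_{N_l^{\mathrm{mlp}}}(Z_l)^*J_{\mathrm{MLP}_l}(Q_l)^*.
\]

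\emph{Step 2 (residual-add split).} Write
\[
A_{X_{l+1}}=P_{\mathcal U_{M_l}}(A_{X_{l+1}})+P_{\mathcal U_{M_l}^\perp}(A_{X_{l+1}})
\]
and use linearity of the pullback and of the projection. For \(W=P_{\mathcal U_{M_l}^\perp}(A_{X_{l+1}})\), the forward definitions give \(J_{\mathrm{MLP}_l}(Q_l)J_{N_l^{\mathrm{mlp}}}(Z_l)V\in\mathcal U_{M_l}\) for every \(V\in\mathcal U_{Z_l}\). The adjoint relation then yields
\[
\inner{J_{N_l^{\mathrm{mlp}}}(Z_l)^*J_{\mathrm{MLP}_l}(Q_l)^*W}{V}=0,
\]
so this pulled-back vector already lies in \(\mathcal U_{Z_l}^\perp\). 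The projection \(P_{\mathcal U_{Z_l}^\perp}\) therefore acts as the identity on it, and the term equals \(B^{l,\mathrm{mlp}}_{\mathrm{res2.add}}\) exactly as defined in \eqref{eq:app_B_res2_add}, which has no outer projection.

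\emph{Step 3 (MLP versus normalization).} Split \(J_{\mathrm{MLP}_l}(Q_l)^*P_{\mathcal U_{M_l}}(A_{X_{l+1}})\) into its \(\mathcal U_{Q_l}\) and \(\mathcal U_{Q_l}^\perp\) components. For the \(\mathcal U_{Q_l}^\perp\) component \(W'\), note that \(J_{N_l^{\mathrm{mlp}}}(Z_l)\mathcal U_{Z_l}=\mathcal U_{Q_l}\). Hence \(J_{N_l^{\mathrm{mlp}}}(Z_l)^*W'\perp\mathcal U_{Z_l}\), the projection is again redundant, and this term is \(B^{l,\mathrm{mlp}}_{\mathrm{mlp}}\). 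The remaining \(\mathcal U_{Q_l}\) component keeps its outer projection \(P_{\mathcal U_{Z_l}^\perp}\) and is exactly \(B^{l,\mathrm{mlp}}_{\mathrm{norm2}}\). Summing the four pieces gives \eqref{eq:app_mlp_local_sum}.

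There is no real obstacle. The one point needing care is the redundancy checks, and these rely entirely on the forward subspaces being defined as Jacobian images along the same path: \(\mathcal U_{Q_l}\) from \(\mathcal U_{Z_l}\) through the norm, and \(\mathcal U_{M_l}\) from \(\mathcal U_{Q_l}\) through the MLP. The body path \(\mathcal U_{M_l}\) is distinct from the block-output subspace \(\mathcal U_{X_{l+1}}\), and I will state this explicitly so the orthogonality arguments invoke the correct subspace.
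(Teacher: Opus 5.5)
Your proposal is correct and follows the paper's proof essentially step for step. The steps match: the identity/body split via Proposition~\ref{prop:app_residual_block}, the factorization $J_{F^{l,\mathrm{mlp}}}(Z_l)^*=J_{N_l^{\mathrm{mlp}}}(Z_l)^*J_{\mathrm{MLP}_l}(Q_l)^*$, the split of $A_{X_{l+1}}$ along $\mathcal U_{M_l}$ with the outer projection redundant on the $\mathcal U_{M_l}^\perp$ part, and the split of the MLP pullback along $\mathcal U_{Q_l}$ with the projection redundant on the $\mathcal U_{Q_l}^\perp$ part. Your explicit note that these redundancy checks must use the body-path subspaces $\mathcal U_{Q_l},\mathcal U_{M_l}$ rather than $\mathcal U_{X_{l+1}}$ is exactly the point the argument hinges on.
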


\begin{proof}
The proof mirrors the attention-block case, with the attention operator replaced by the MLP operator and \(X_l,P_l,O_l,Z_l\) replaced by \(Z_l,Q_l,M_l,X_{l+1}\). We give the full calculation.

Applying Proposition~\ref{prop:app_residual_block} to
\[
\mathcal B^{l,\mathrm{mlp}}(Z_l)=Z_l+F^{l,\mathrm{mlp}}(Z_l),
\qquad
F^{l,\mathrm{mlp}}=\mathrm{MLP}_l\circ N_l^{\mathrm{mlp}},
\]
gives
\[
\begin{aligned}
B^{l,\mathrm{mlp}}
&=
P_{\mathcal U_{Z_l}^\perp}
\bigl(J_{\mathcal B^{l,\mathrm{mlp}}}(Z_l)^*A_{X_{l+1}}\bigr) \\
&=
P_{\mathcal U_{Z_l}^\perp}(A_{X_{l+1}})
+
P_{\mathcal U_{Z_l}^\perp}
\bigl(J_{F^{l,\mathrm{mlp}}}(Z_l)^*A_{X_{l+1}}\bigr).
\end{aligned}
\]
The first term is \(B^{l,\mathrm{mlp}}_{\mathrm{res2.id}}\).

Decompose \(A_{X_{l+1}}\) relative to \(\mathcal U_{M_l}\):
\[
A_{X_{l+1}}
=
P_{\mathcal U_{M_l}}(A_{X_{l+1}})
+
P_{\mathcal U_{M_l}^\perp}(A_{X_{l+1}}).
\]
Using
\[
J_{F^{l,\mathrm{mlp}}}(Z_l)^*
=
J_{N_l^{\mathrm{mlp}}}(Z_l)^*
J_{\mathrm{MLP}_l}(Q_l)^*,
\]
we get
\[
\begin{aligned}
&
P_{\mathcal U_{Z_l}^\perp}
\bigl(J_{F^{l,\mathrm{mlp}}}(Z_l)^*A_{X_{l+1}}\bigr)
\\
&=
P_{\mathcal U_{Z_l}^\perp}
\bigl(
J_{N_l^{\mathrm{mlp}}}(Z_l)^*
J_{\mathrm{MLP}_l}(Q_l)^*
P_{\mathcal U_{M_l}}(A_{X_{l+1}})
\bigr)
+
P_{\mathcal U_{Z_l}^\perp}
\bigl(
J_{N_l^{\mathrm{mlp}}}(Z_l)^*
J_{\mathrm{MLP}_l}(Q_l)^*
P_{\mathcal U_{M_l}^\perp}(A_{X_{l+1}})
\bigr).
\end{aligned}
\]
For the second term, let \(W=P_{\mathcal U_{M_l}^\perp}(A_{X_{l+1}})\). For any
\(V\in\mathcal U_{Z_l}\),
\[
J_{N_l^{\mathrm{mlp}}}(Z_l)V\in\mathcal U_{Q_l},
\qquad
J_{\mathrm{MLP}_l}(Q_l)J_{N_l^{\mathrm{mlp}}}(Z_l)V\in\mathcal U_{M_l}.
\]
Since \(W\in\mathcal U_{M_l}^\perp\),
\[
\begin{aligned}
\inner{
J_{N_l^{\mathrm{mlp}}}(Z_l)^*
J_{\mathrm{MLP}_l}(Q_l)^*W
}{V}
&=
\inner{
W
}{
J_{\mathrm{MLP}_l}(Q_l)J_{N_l^{\mathrm{mlp}}}(Z_l)V
} 
=
0.
\end{aligned}
\]
Thus the projection \(P_{\mathcal U_{Z_l}^\perp}\) is redundant on this term, and the second term is
\(B^{l,\mathrm{mlp}}_{\mathrm{res2.add}}\).

Next decompose the MLP pullback of \(P_{\mathcal U_{M_l}}(A_{X_{l+1}})\) relative to \(\mathcal U_{Q_l}\):
\[
\begin{aligned}
J_{\mathrm{MLP}_l}(Q_l)^*P_{\mathcal U_{M_l}}(A_{X_{l+1}})
&=
P_{\mathcal U_{Q_l}}
\bigl(
J_{\mathrm{MLP}_l}(Q_l)^*
P_{\mathcal U_{M_l}}(A_{X_{l+1}})
\bigr) 
\\
& \quad +
P_{\mathcal U_{Q_l}^\perp}
\bigl(
J_{\mathrm{MLP}_l}(Q_l)^*
P_{\mathcal U_{M_l}}(A_{X_{l+1}})
\bigr).
\end{aligned}
\]
Substituting this decomposition gives
\[
\begin{aligned}
&
P_{\mathcal U_{Z_l}^\perp}
\bigl(
J_{N_l^{\mathrm{mlp}}}(Z_l)^*
J_{\mathrm{MLP}_l}(Q_l)^*
P_{\mathcal U_{M_l}}(A_{X_{l+1}})
\bigr)
\\
&=
P_{\mathcal U_{Z_l}^\perp}
\bigl(
J_{N_l^{\mathrm{mlp}}}(Z_l)^*
P_{\mathcal U_{Q_l}}
(J_{\mathrm{MLP}_l}(Q_l)^*P_{\mathcal U_{M_l}}(A_{X_{l+1}}))
\bigr)
\\
& \quad +
P_{\mathcal U_{Z_l}^\perp}
\bigl(
J_{N_l^{\mathrm{mlp}}}(Z_l)^*
P_{\mathcal U_{Q_l}^\perp}
(J_{\mathrm{MLP}_l}(Q_l)^*P_{\mathcal U_{M_l}}(A_{X_{l+1}}))
\bigr).
\end{aligned}
\]
Let
\[
W=
P_{\mathcal U_{Q_l}^\perp}
\bigl(
J_{\mathrm{MLP}_l}(Q_l)^*
P_{\mathcal U_{M_l}}(A_{X_{l+1}})
\bigr).
\]
For any \(V\in\mathcal U_{Z_l}\), \(J_{N_l^{\mathrm{mlp}}}(Z_l)V\in\mathcal U_{Q_l}\), while \(W\in\mathcal U_{Q_l}^\perp\). Therefore,
\[
\begin{aligned}
\inner{
J_{N_l^{\mathrm{mlp}}}(Z_l)^*W
}{V}
&=
\inner{W}{J_{N_l^{\mathrm{mlp}}}(Z_l)V} 
=
0.
\end{aligned}
\]
Thus the projection is redundant on this term, and it is exactly
\(B^{l,\mathrm{mlp}}_{\mathrm{mlp}}\). The remaining projected term is
\(B^{l,\mathrm{mlp}}_{\mathrm{norm2}}\). Combining all four terms proves
\eqref{eq:app_mlp_local_sum}.
\end{proof}

\subsubsection{Final normalization and the transported input-side decomposition}

The final normalization is
\[
H^{(T)}=N^{\mathrm{fn}}(X_L).
\]
At the terminal hidden state, decompose the target-behavior direction \(Y\) as
\[
Y=A_T+R_T,
\qquad
A_T:=P_{\mathcal U_T}(Y),
\qquad
R_T:=P_{\mathcal U_T^\perp}(Y).
\]
The final-normalization leakage source is
\begin{equation}
B_{\mathrm{fn}}
:=
P_{\mathcal U_{X_L}^\perp}
\bigl(J_{N^{\mathrm{fn}}}(X_L)^*A_T\bigr).
\label{eq:app_B_fn}
\end{equation}

We now define the cumulative adjoint transports from each local source position back to the input side. For an attention-block source at \(X_l\), define
\[
J_{l,\mathrm{attn}}^{\leftarrow}
:=
J_{\mathcal B^{0,\mathrm{attn}}}(X_0)^*
\circ
J_{\mathcal B^{0,\mathrm{mlp}}}(Z_0)^*
\circ\cdots\circ
J_{\mathcal B^{l-1,\mathrm{attn}}}(X_{l-1})^*
\circ
J_{\mathcal B^{l-1,\mathrm{mlp}}}(Z_{l-1})^*,
\]
with the empty composition understood as the identity when \(l=0\). For an MLP-block source at \(Z_l\), define
\[
J_{l,\mathrm{mlp}}^{\leftarrow}
:=
J_{l,\mathrm{attn}}^{\leftarrow}
\circ
J_{\mathcal B^{l,\mathrm{attn}}}(X_l)^*.
\]
Equivalently, \(J_{l,\mathrm{mlp}}^{\leftarrow}\) first transports a direction from \(Z_l\) back to \(X_l\) through the \(l\)-th attention block, and then transports it from \(X_l\) back to the input side. Finally, define the transport from \(X_L\) back to the input side by
\[
J_L^{\leftarrow}
:=
J_{\mathcal B^{0,\mathrm{attn}}}(X_0)^*
\circ
J_{\mathcal B^{0,\mathrm{mlp}}}(Z_0)^*
\circ\cdots\circ
J_{\mathcal B^{L-1,\mathrm{attn}}}(X_{L-1})^*
\circ
J_{\mathcal B^{L-1,\mathrm{mlp}}}(Z_{L-1})^*.
\]

Transporting the attention-block contributions gives
\[
S_{l,\mathrm{norm1}}(Y;X)
:=
J_{l,\mathrm{attn}}^{\leftarrow}
\bigl(B^{l,\mathrm{attn}}_{\mathrm{norm1}}\bigr),
\qquad
S_{l,\mathrm{attn}}(Y;X)
:=
J_{l,\mathrm{attn}}^{\leftarrow}
\bigl(B^{l,\mathrm{attn}}_{\mathrm{attn}}\bigr),
\]
\[
S_{l,\mathrm{res1.add}}(Y;X)
:=
J_{l,\mathrm{attn}}^{\leftarrow}
\bigl(B^{l,\mathrm{attn}}_{\mathrm{res1.add}}\bigr),
\qquad
S_{l,\mathrm{res1.id}}(Y;X)
:=
J_{l,\mathrm{attn}}^{\leftarrow}
\bigl(B^{l,\mathrm{attn}}_{\mathrm{res1.id}}\bigr).
\]
Similarly, transporting the MLP-block contributions gives
\[
S_{l,\mathrm{norm2}}(Y;X)
:=
J_{l,\mathrm{mlp}}^{\leftarrow}
\bigl(B^{l,\mathrm{mlp}}_{\mathrm{norm2}}\bigr),
\qquad
S_{l,\mathrm{mlp}}(Y;X)
:=
J_{l,\mathrm{mlp}}^{\leftarrow}
\bigl(B^{l,\mathrm{mlp}}_{\mathrm{mlp}}\bigr),
\]
\[
S_{l,\mathrm{res2.add}}(Y;X)
:=
J_{l,\mathrm{mlp}}^{\leftarrow}
\bigl(B^{l,\mathrm{mlp}}_{\mathrm{res2.add}}\bigr),
\qquad
S_{l,\mathrm{res2.id}}(Y;X)
:=
J_{l,\mathrm{mlp}}^{\leftarrow}
\bigl(B^{l,\mathrm{mlp}}_{\mathrm{res2.id}}\bigr).
\]
The transported final-normalization contribution is
\begin{equation}
S_{\mathrm{fn}}(Y;X)
:=
J_L^{\leftarrow}(B_{\mathrm{fn}}),
\label{eq:app_S_fn}
\end{equation}
and the transported terminal-source contribution is
\begin{equation}
S_T(Y;X)
:=
J_L^{\leftarrow}
\bigl(J_{N^{\mathrm{fn}}}(X_L)^*R_T\bigr).
\label{eq:app_S_T}
\end{equation}

We group these transported input-side contributions into the operator-level families used in the main text:
\[
S_{\mathrm{norm}}^\Sigma(Y;X)
:=
\sum_{l=0}^{L-1}
\bigl(
S_{l,\mathrm{norm1}}(Y;X)
+
S_{l,\mathrm{norm2}}(Y;X)
\bigr)
+
S_{\mathrm{fn}}(Y;X),
\]
\[
S_{\mathrm{attn}}^\Sigma(Y;X)
:=
\sum_{l=0}^{L-1}S_{l,\mathrm{attn}}(Y;X),
\qquad
S_{\mathrm{mlp}}^\Sigma(Y;X)
:=
\sum_{l=0}^{L-1}S_{l,\mathrm{mlp}}(Y;X),
\]
\[
S_{\mathrm{res}}^\Sigma(Y;X)
:=
\sum_{l=0}^{L-1}
\bigl(
S_{l,\mathrm{res1.add}}(Y;X)
+
S_{l,\mathrm{res1.id}}(Y;X)
+
S_{l,\mathrm{res2.add}}(Y;X)
+
S_{l,\mathrm{res2.id}}(Y;X)
\bigr).
\]

The refusal-escape direction \(R_Y(X)\) is the sum of all transported leakage contributions and the transported terminal-source contribution. Substituting the local decompositions of the attention and MLP leakage sources gives
\[
\begin{aligned}
R_Y(X)
&=
\sum_{l=0}^{L-1}
J_{l,\mathrm{attn}}^{\leftarrow}
\bigl(B^{l,\mathrm{attn}}\bigr)
+
\sum_{l=0}^{L-1}
J_{l,\mathrm{mlp}}^{\leftarrow}
\bigl(B^{l,\mathrm{mlp}}\bigr)
+
S_{\mathrm{fn}}(Y;X)
+
S_T(Y;X)
\\
&=
\sum_{l=0}^{L-1}
J_{l,\mathrm{attn}}^{\leftarrow}
\bigl(
B^{l,\mathrm{attn}}_{\mathrm{norm1}}
+
B^{l,\mathrm{attn}}_{\mathrm{attn}}
+
B^{l,\mathrm{attn}}_{\mathrm{res1.add}}
+
B^{l,\mathrm{attn}}_{\mathrm{res1.id}}
\bigr)
\\
&\quad+
\sum_{l=0}^{L-1}
J_{l,\mathrm{mlp}}^{\leftarrow}
\bigl(
B^{l,\mathrm{mlp}}_{\mathrm{norm2}}
+
B^{l,\mathrm{mlp}}_{\mathrm{mlp}}
+
B^{l,\mathrm{mlp}}_{\mathrm{res2.add}}
+
B^{l,\mathrm{mlp}}_{\mathrm{res2.id}}
\bigr)
+
S_{\mathrm{fn}}(Y;X)
+
S_T(Y;X)
\\
&=
\sum_{l=0}^{L-1}
\bigl(
S_{l,\mathrm{norm1}}(Y;X)
+
S_{l,\mathrm{attn}}(Y;X)
+
S_{l,\mathrm{res1.add}}(Y;X)
+
S_{l,\mathrm{res1.id}}(Y;X)
\bigr)
\\
&\quad+
\sum_{l=0}^{L-1}
\bigl(
S_{l,\mathrm{norm2}}(Y;X)
+
S_{l,\mathrm{mlp}}(Y;X)
+
S_{l,\mathrm{res2.add}}(Y;X)
+
S_{l,\mathrm{res2.id}}(Y;X)
\bigr)
\\
& \quad
+
S_{\mathrm{fn}}(Y;X)
+
S_T(Y;X)
\\
&=
\left[
\sum_{l=0}^{L-1}
\bigl(
S_{l,\mathrm{norm1}}(Y;X)
+
S_{l,\mathrm{norm2}}(Y;X)
\bigr)
+
S_{\mathrm{fn}}(Y;X)
\right]
\\
&\quad+
\sum_{l=0}^{L-1}S_{l,\mathrm{attn}}(Y;X)
+
\sum_{l=0}^{L-1}S_{l,\mathrm{mlp}}(Y;X)
\\
&\quad+
\sum_{l=0}^{L-1}
\bigl(
S_{l,\mathrm{res1.add}}(Y;X)
+
S_{l,\mathrm{res1.id}}(Y;X)
+
S_{l,\mathrm{res2.add}}(Y;X)
+
S_{l,\mathrm{res2.id}}(Y;X)
\bigr)
\\
& \quad
+
S_T(Y;X)
\\
&=
S_{\mathrm{norm}}^\Sigma(Y;X)
+
S_{\mathrm{attn}}^\Sigma(Y;X)
+
S_{\mathrm{mlp}}^\Sigma(Y;X)
+
S_{\mathrm{res}}^\Sigma(Y;X)
+
S_T(Y;X).
\end{aligned}
\]
Thus, for every fixed \(Y\in\mathcal Y\),
\[
R_Y(X)
=
S_{\mathrm{norm}}^\Sigma(Y;X)
+
S_{\mathrm{attn}}^\Sigma(Y;X)
+
S_{\mathrm{mlp}}^\Sigma(Y;X)
+
S_{\mathrm{res}}^\Sigma(Y;X)
+
S_T(Y;X).
\]

Finally, by the definition of RED,
\[
\mathcal R_{\mathcal Y}(X)
=
\left\{
R_Y(X):Y\in\mathcal Y
\right\}.
\]
Substituting the operator-level decomposition of \(R_Y(X)\) for each \(Y\in\mathcal Y\), we obtain
\[
\mathcal R_{\mathcal Y}(X)
=
\left\{
S_{\mathrm{norm}}^\Sigma(Y;X)
+
S_{\mathrm{attn}}^\Sigma(Y;X)
+
S_{\mathrm{mlp}}^\Sigma(Y;X)
+
S_{\mathrm{res}}^\Sigma(Y;X)
+
S_T(Y;X)
:
Y\in\mathcal Y
\right\},
\]
which is exactly \eqref{eq:operator_level_red_set}. This completes the derivation.

\subsection{Proofs for analytically constrained operator-level sources}
\label{app:rigid_sources}

\subsubsection{Normalization as an analytically constrained operator-level source}

We consider normalization sources, including \(N_l^{\mathrm{attn}}\), \(N_l^{\mathrm{mlp}}\), and \(N^{\mathrm{fn}}\), when instantiated as LayerNorm or RMSNorm \citep{ba2016layer,zhang2019root}. These normalization operators act row-wise on hidden-state matrices. If \(Z\in\mathbb R^{n\times d}\) has token rows \(z_1,\ldots,z_n\), then
\[
N(Z)
=
\begin{bmatrix}
n(z_1)\\
\vdots\\
n(z_n)
\end{bmatrix},
\]
where \(n:\mathbb R^d\to\mathbb R^d\) is the corresponding single-token LayerNorm or RMSNorm map. Thus, \(J_N(Z)\) is block diagonal across token positions, with one row-wise Jacobian block for each token row.

Let \(Z(X)\in\mathbb R^{n\times d}\) denote the hidden-state matrix input to a normalization operator \(N\), and let
\[
Z_+(X):=N(Z(X))
\]
denote its output. Let \(\mathcal U_Z(X)\) be the propagated harmful-semantics-sensitive subspace at the normalization input. The output-side propagated subspace is
\[
J_N(Z(X))\mathcal U_Z(X).
\]
For a fixed target-behavior direction \(Y\in\mathcal Y\), let \(G_+(Y;X)\) be the target-behavior sensitivity direction at the normalization output. The harmful-semantics-aligned output component is
\[
A_+^{N}(Y;X)
:=
P_{J_N(Z(X))\mathcal U_Z(X)}G_+(Y;X).
\]
This is the local instance of the aligned component \(A_{t+1}\) in the generic leakage source
\[
B_t=P_{\mathcal U_t^\perp}\bigl(J_t(X)^*A_{t+1}\bigr).
\]
Thus, for a normalization operator \(N\), the normalization leakage source is
\begin{equation}
B_N(Y;X)
:=
P_{\mathcal U_Z(X)^\perp}
\bigl(
J_N(Z(X))^*A_+^{N}(Y;X)
\bigr).
\label{eq:app_norm_source_def}
\end{equation}

For LayerNorm, write the rows of \(Z(X)\) as \(z_i(X)\). Let
\[
D_\gamma:=\operatorname{Diag}(\gamma),
\qquad
C:=I-\frac1d\mathbf 1\mathbf 1^\top,
\qquad
\sigma_i(X):=\sqrt{\frac1d\|Cz_i(X)\|_2^2+\varepsilon},
\]
and
\[
\bar z_i(X):=Cz_i(X),
\qquad
\mathcal W_i^{\mathrm{LN}}(X)
:=
C-\frac{1}{d\sigma_i(X)^2}\bar z_i(X)\bar z_i(X)^\top .
\]
For any perturbation \(\Delta Z\), the LayerNorm Jacobian acts row-wise as
\begin{equation}
\bigl[J_{\mathrm{LN}}(Z(X))\Delta Z\bigr]_i
=
\frac1{\sigma_i(X)}
D_\gamma
\mathcal W_i^{\mathrm{LN}}(X)
\Delta z_i .
\label{eq:app_ln_jacobian_row}
\end{equation}
Therefore, the output-side subspace induced by LayerNorm is
\begin{equation}
\mathcal V_{\mathrm{LN}}(X)
:=
J_{\mathrm{LN}}(Z(X))\mathcal U_Z(X)
=
\left\{
\Delta Z_+:
[\Delta Z_+]_i
=
\frac1{\sigma_i(X)}
D_\gamma
\mathcal W_i^{\mathrm{LN}}(X)
\Delta z_i,\ 
\Delta Z\in\mathcal U_Z(X)
\right\}.
\label{eq:app_V_LN}
\end{equation}
The LayerNorm output-aligned component is
\[
A_+^{\mathrm{LN}}(Y;X)
:=
P_{\mathcal V_{\mathrm{LN}}(X)}G_+(Y;X).
\]
Pulling this component back through the LayerNorm adjoint gives a matrix \(L_{\mathrm{LN}}(Y;X)\) whose \(i\)-th row is
\begin{equation}
\bigl[L_{\mathrm{LN}}(Y;X)\bigr]_i
=
\frac1{\sigma_i(X)}
\mathcal W_i^{\mathrm{LN}}(X)
D_\gamma
\bigl[A_+^{\mathrm{LN}}(Y;X)\bigr]_i .
\label{eq:app_L_LN_row}
\end{equation}
Equivalently,
\[
L_{\mathrm{LN}}(Y;X)
=
J_{\mathrm{LN}}(Z(X))^*
P_{\mathcal V_{\mathrm{LN}}(X)}G_+(Y;X).
\]
The LayerNorm leakage source is therefore
\begin{equation}
B_{\mathrm{LN}}(Y;X)
=
P_{\mathcal U_Z(X)^\perp}
\bigl(
L_{\mathrm{LN}}(Y;X)
\bigr).
\label{eq:app_B_LN_explicit}
\end{equation}
The LayerNorm consistency condition is
\[
\mathfrak C_{\mathrm{LN}}(Y):
\qquad
L_{\mathrm{LN}}(Y;X)\in\mathcal U_Z(X),
\qquad
\forall X\in\Omega.
\]

For RMSNorm, define
\[
\rho_i(X):=\sqrt{\frac1d\|z_i(X)\|_2^2+\varepsilon},
\qquad
\mathcal W_i^{\mathrm{RMS}}(X)
:=
I-\frac{1}{d\rho_i(X)^2}z_i(X)z_i(X)^\top .
\]
For any perturbation \(\Delta Z\), the RMSNorm Jacobian acts row-wise as
\begin{equation}
\bigl[J_{\mathrm{RMS}}(Z(X))\Delta Z\bigr]_i
=
\frac1{\rho_i(X)}
D_\gamma
\mathcal W_i^{\mathrm{RMS}}(X)
\Delta z_i .
\label{eq:app_rms_jacobian_row}
\end{equation}
Hence the output-side subspace induced by RMSNorm is
\begin{equation}
\mathcal V_{\mathrm{RMS}}(X)
:=
J_{\mathrm{RMS}}(Z(X))\mathcal U_Z(X)
=
\left\{
\Delta Z_+:
[\Delta Z_+]_i
=
\frac1{\rho_i(X)}
D_\gamma
\mathcal W_i^{\mathrm{RMS}}(X)
\Delta z_i,\ 
\Delta Z\in\mathcal U_Z(X)
\right\}.
\label{eq:app_V_RMS}
\end{equation}
The RMSNorm output-aligned component is
\[
A_+^{\mathrm{RMS}}(Y;X)
:=
P_{\mathcal V_{\mathrm{RMS}}(X)}G_+(Y;X).
\]
Pulling this component back through the RMSNorm adjoint gives a matrix \(L_{\mathrm{RMS}}(Y;X)\) whose \(i\)-th row is
\begin{equation}
\bigl[L_{\mathrm{RMS}}(Y;X)\bigr]_i
=
\frac1{\rho_i(X)}
\mathcal W_i^{\mathrm{RMS}}(X)
D_\gamma
\bigl[A_+^{\mathrm{RMS}}(Y;X)\bigr]_i .
\label{eq:app_L_RMS_row}
\end{equation}
Equivalently,
\[
L_{\mathrm{RMS}}(Y;X)
=
J_{\mathrm{RMS}}(Z(X))^*
P_{\mathcal V_{\mathrm{RMS}}(X)}G_+(Y;X).
\]
The RMSNorm leakage source is
\begin{equation}
B_{\mathrm{RMS}}(Y;X)
=
P_{\mathcal U_Z(X)^\perp}
\bigl(
L_{\mathrm{RMS}}(Y;X)
\bigr).
\label{eq:app_B_RMS_explicit}
\end{equation}
The RMSNorm consistency condition is
\[
\mathfrak C_{\mathrm{RMS}}(Y):
\qquad
L_{\mathrm{RMS}}(Y;X)\in\mathcal U_Z(X),
\qquad
\forall X\in\Omega.
\]

\begin{proposition}[Normalization as an analytically constrained operator-level source]
\label{prop:norm_rigid}
Under Assumption~\ref{ass:analytic}, the following hold:
\begin{enumerate}
    \item \(B_{\mathrm{LN}}(Y;X)=0\) for all \(X\in\mathcal X\) if and only if \(\mathfrak C_{\mathrm{LN}}(Y)\) holds on \(\Omega\). Likewise, \(B_{\mathrm{RMS}}(Y;X)=0\) for all \(X\in\mathcal X\) if and only if \(\mathfrak C_{\mathrm{RMS}}(Y)\) holds on \(\Omega\).

    \item If \(\mathfrak C_{\mathrm{LN}}(Y)\) fails on \(\Omega\), then \(B_{\mathrm{LN}}(Y;X)\neq 0\) on a dense open subset of \(\Omega\). The same statement holds for RMSNorm.
\end{enumerate}
\end{proposition}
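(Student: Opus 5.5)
The argument rests on two elementary facts: a projection onto an orthogonal complement vanishes exactly on the subspace, and real-analytic functions on a connected open set satisfy the identity theorem. I will write out the LayerNorm case; the RMSNorm case is the same argument with \(\rho_i\), \(\mathcal W_i^{\mathrm{RMS}}\), \(\mathcal V_{\mathrm{RMS}}\) in place of \(\sigma_i\), \(\mathcal W_i^{\mathrm{LN}}\), \(\mathcal V_{\mathrm{LN}}\). No LayerNorm-specific structure beyond real-analyticity is used.

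\emph{Pointwise equivalence.} By \eqref{eq:app_B_LN_explicit}, \(B_{\mathrm{LN}}(Y;X)=P_{\mathcal U_Z(X)^\perp}L_{\mathrm{LN}}(Y;X)\). Since \(P_{\mathcal U_Z(X)^\perp}=I-P_{\mathcal U_Z(X)}\), this vanishes iff \(L_{\mathrm{LN}}(Y;X)\in\mathcal U_Z(X)\). So for each \(X\), \(B_{\mathrm{LN}}(Y;X)=0\) is exactly the pointwise instance of \(\mathfrak C_{\mathrm{LN}}(Y)\).

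\emph{Analyticity.} By Assumption~\ref{ass:analytic}, several quantities are real-analytic on \(\Omega\): the Jacobian \(J_{\mathrm{LN}}(Z(X))\), its adjoint, the projection fields \(P_{\mathcal U_Z(X)}\) and \(P_{\mathcal V_{\mathrm{LN}}(X)}\), and the backward direction \(G_+(Y;X)\). The last is a composition of adjoint Jacobians applied to the fixed \(Y\). Hence \(L_{\mathrm{LN}}(Y;\cdot)\) and \(B_{\mathrm{LN}}(Y;\cdot)\) are real-analytic maps \(\Omega\to\R^{n\times d}\). Concretely, one can also see this from \eqref{eq:app_L_LN_row}, since \(\sigma_i\ge\sqrt{\varepsilon}>0\) makes \(1/\sigma_i\) analytic. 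The only care needed is to make sure the output-side projection \(P_{\mathcal V_{\mathrm{LN}}}\) is among the projection fields covered by the assumption. It is: it equals the propagated-subspace projection at the normalization output.

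\emph{Part 1.} The ``if'' direction is immediate from the pointwise equivalence, because \(\mathcal X\subseteq\Omega\). For ``only if'', suppose \(B_{\mathrm{LN}}(Y;\cdot)\equiv0\) on \(\mathcal X\).
\begin{itemize}
\item Each matrix entry of \(B_{\mathrm{LN}}(Y;\cdot)\) is a real-analytic function on the connected open set \(\Omega\).
\item Each entry vanishes on the nonempty open subset \(\mathcal X\).
\item By the identity theorem, each entry vanishes on all of \(\Omega\).
\end{itemize}
The pointwise equivalence then gives \(L_{\mathrm{LN}}(Y;X)\in\mathcal U_Z(X)\) for every \(X\in\Omega\), which is \(\mathfrak C_{\mathrm{LN}}(Y)\). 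The nonemptiness of \(\mathcal X\) is implicit in the assumption that it is a harmful region.

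\emph{Part 2.} Suppose \(\mathfrak C_{\mathrm{LN}}(Y)\) fails on \(\Omega\). By the pointwise equivalence, \(B_{\mathrm{LN}}(Y;\cdot)\not\equiv0\) on \(\Omega\), so some entry, or equivalently the analytic function \(f(X)=\normsmall{B_{\mathrm{LN}}(Y;X)}^2\), is not identically zero. The candidate dense open set is \(\Omega\setminus\mathcal Z\), where \(\mathcal Z=f^{-1}(0)\) is the zero set.
\begin{itemize}
\item \emph{Open:} \(\mathcal Z\) is closed in \(\Omega\) by continuity of \(f\), so its complement is open.
\item \emph{Dense:} if \(\Omega\setminus\mathcal Z\) failed to be dense, \(\mathcal Z\) would contain a nonempty open ball. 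The identity theorem would then force \(f\equiv0\) on the connected set \(\Omega\), a contradiction.
\end{itemize}
On \(\Omega\setminus\mathcal Z\) we have \(B_{\mathrm{LN}}(Y;X)\neq0\), which proves the claim. The main obstacle is not computational. It is checking that every ingredient of \(B_{\mathrm{LN}}\) is genuinely covered by the analyticity assumption, as discussed above; once that is done, both parts follow from the identity theorem.
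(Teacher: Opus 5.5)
Your proposal is correct and follows the paper's proof essentially step for step. Both arguments use the pointwise equivalence \(B_{\mathrm{LN}}(Y;X)=0 \iff L_{\mathrm{LN}}(Y;X)\in\mathcal U_Z(X)\), extend vanishing from \(\mathcal X\) to \(\Omega\) by the identity theorem for Part 1, and for Part 2 show the zero set has empty interior and use continuity. Your extra remarks on why each ingredient of \(B_{\mathrm{LN}}\) is real-analytic only spell out what the paper takes directly from Assumption~\ref{ass:analytic}.
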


\begin{proof}
We prove the LayerNorm case; the RMSNorm case is identical.

By \eqref{eq:app_B_LN_explicit},
\[
B_{\mathrm{LN}}(Y;X)
=
P_{\mathcal U_Z(X)^\perp}
\bigl(
L_{\mathrm{LN}}(Y;X)
\bigr).
\]
Therefore, pointwise in \(X\),
\[
\begin{aligned}
B_{\mathrm{LN}}(Y;X)=0
&\iff
P_{\mathcal U_Z(X)^\perp}
\bigl(
L_{\mathrm{LN}}(Y;X)
\bigr)=0 \\
&\iff
L_{\mathrm{LN}}(Y;X)\in\mathcal U_Z(X).
\end{aligned}
\]
This is exactly the pointwise LayerNorm consistency condition.

Suppose \(B_{\mathrm{LN}}(Y;X)=0\) for all \(X\in\mathcal X\). Define
\[
T_{\mathrm{LN}}(Y;X)
:=
P_{\mathcal U_Z(X)^\perp}
\bigl(
L_{\mathrm{LN}}(Y;X)
\bigr).
\]
Then
\[
T_{\mathrm{LN}}(Y;X)=0,
\qquad
\forall X\in\mathcal X.
\]
By Assumption~\ref{ass:analytic}, \(T_{\mathrm{LN}}(Y;\cdot)\) is real-analytic on \(\Omega\). Since \(\mathcal X\subseteq\Omega\) is nonempty and open and \(\Omega\) is connected, the analytic identity theorem gives
\[
T_{\mathrm{LN}}(Y;X)\equiv 0,
\qquad
\forall X\in\Omega.
\]
Hence
\[
L_{\mathrm{LN}}(Y;X)\in\mathcal U_Z(X),
\qquad
\forall X\in\Omega.
\]
Thus \(\mathfrak C_{\mathrm{LN}}(Y)\) holds on \(\Omega\).

Conversely, if \(\mathfrak C_{\mathrm{LN}}(Y)\) holds on \(\Omega\), then
\[
L_{\mathrm{LN}}(Y;X)\in\mathcal U_Z(X),
\qquad
\forall X\in\Omega.
\]
Thus
\[
P_{\mathcal U_Z(X)^\perp}
\bigl(
L_{\mathrm{LN}}(Y;X)
\bigr)=0,
\qquad
\forall X\in\Omega.
\]
Therefore
\[
B_{\mathrm{LN}}(Y;X)=0,
\qquad
\forall X\in\Omega,
\]
and in particular \(B_{\mathrm{LN}}(Y;X)=0\) for all \(X\in\mathcal X\). This proves the first claim.

For the second claim, suppose that \(\mathfrak C_{\mathrm{LN}}(Y)\) fails on \(\Omega\). If the zero set of \(B_{\mathrm{LN}}(Y;\cdot)\) had nonempty interior in \(\Omega\), then there would exist a nonempty open set \(O\subseteq\Omega\) such that
\[
B_{\mathrm{LN}}(Y;X)=0,
\qquad
\forall X\in O.
\]
Equivalently,
\[
T_{\mathrm{LN}}(Y;X)=0,
\qquad
\forall X\in O.
\]
Since \(T_{\mathrm{LN}}(Y;\cdot)\) is real-analytic on the connected region \(\Omega\), the analytic identity theorem gives
\[
T_{\mathrm{LN}}(Y;X)\equiv 0,
\qquad
\forall X\in\Omega.
\]
Therefore
\[
L_{\mathrm{LN}}(Y;X)\in\mathcal U_Z(X),
\qquad
\forall X\in\Omega,
\]
so \(\mathfrak C_{\mathrm{LN}}(Y)\) holds on \(\Omega\), contradicting the assumption. Hence the zero set of \(B_{\mathrm{LN}}(Y;\cdot)\) has empty interior in \(\Omega\). Since \(B_{\mathrm{LN}}(Y;\cdot)\) is continuous, its complement is open and dense. Thus \(B_{\mathrm{LN}}(Y;X)\neq0\) on a dense open subset of \(\Omega\).

The RMSNorm proof is identical after replacing \(L_{\mathrm{LN}}\), \(T_{\mathrm{LN}}\), and \(B_{\mathrm{LN}}\) with \(L_{\mathrm{RMS}}\), \(T_{\mathrm{RMS}}\), and \(B_{\mathrm{RMS}}\), respectively.
\end{proof}

The conditions \(\mathfrak C_{\mathrm{LN}}(Y)\) and \(\mathfrak C_{\mathrm{RMS}}(Y)\) are normalization-consistency conditions. They require a pushforward--pullback consistency between the input-side subspace \(\mathcal U_Z(X)\) and the output-side target-behavior sensitivity \(G_+(Y;X)\). For example, in RMSNorm, every input-side direction \(\Delta Z\in\mathcal U_Z(X)\), with token rows \(\Delta z_i\), is pushed forward row-wise by
\[
\Delta z_i
\mapsto
\frac1{\rho_i(X)}
D_\gamma
\mathcal W_i^{\mathrm{RMS}}(X)
\Delta z_i
\]
to form the output-side subspace \(\mathcal V_{\mathrm{RMS}}(X)\). The component of \(G_+(Y;X)\) aligned with this output-side subspace,
\[
A_+^{\mathrm{RMS}}(Y;X)
=
P_{\mathcal V_{\mathrm{RMS}}(X)}G_+(Y;X),
\]
is then pulled back row-wise by
\[
a_i
\mapsto
\frac1{\rho_i(X)}
\mathcal W_i^{\mathrm{RMS}}(X)
D_\gamma
a_i,
\]
where \(a_i\) denotes the \(i\)-th row of \(A_+^{\mathrm{RMS}}(Y;X)\). Eliminating the RMSNorm leakage source requires the resulting pulled-back matrix \(L_{\mathrm{RMS}}(Y;X)\) to lie inside the original input-side subspace \(\mathcal U_Z(X)\). The LayerNorm condition is analogous, with \(\rho_i(X)\) and \(\mathcal W_i^{\mathrm{RMS}}(X)\) replaced by \(\sigma_i(X)\) and \(\mathcal W_i^{\mathrm{LN}}(X)\). Whether this condition holds is determined by the input-side geometry \((Z(X),\mathcal U_Z(X))\) and the output-side sensitivity geometry \(G_+(Y;X)\).

\subsubsection{Residual wiring as an analytically constrained operator-level source}

We next consider a generic post-add residual block
\[
\mathcal B(H)=H+F(H),
\]
where \(H\) is the residual-stream input, \(F(H)\) is the body-branch output, and \(\mathcal B(H)\) is the residual-block output. In the attention block of Appendix~\ref{app:operators}, this generic notation corresponds to
\[
H=X_l,
\qquad
F=F^{l,\mathrm{attn}}=\mathrm{Attn}_l\circ N_l^{\mathrm{attn}},
\qquad
\mathcal B(H)=Z_l.
\]
In the MLP block, it corresponds to
\[
H=Z_l,
\qquad
F=F^{l,\mathrm{mlp}}=\mathrm{MLP}_l\circ N_l^{\mathrm{mlp}},
\qquad
\mathcal B(H)=X_{l+1}.
\]

Let \(H(X)\) denote the residual-block input induced by the global input \(X\), and let \(\mathcal U_{\mathrm{in}}(X)\) be the propagated harmful-semantics-sensitive subspace at this input. The body-branch subspace is
\[
\mathcal U_{\mathrm{body}}(X)
:=
J_F(H(X))\mathcal U_{\mathrm{in}}(X).
\]
Since
\[
J_{\mathcal B}(H(X))
=
I+J_F(H(X)),
\]
the output-side propagated subspace is
\[
\mathcal U_{\mathrm{out}}(X)
:=
J_{\mathcal B}(H(X))\mathcal U_{\mathrm{in}}(X)
=
\bigl(I+J_F(H(X))\bigr)\mathcal U_{\mathrm{in}}(X).
\]
Equivalently,
\[
\mathcal U_{\mathrm{out}}(X)
=
\left\{
U+J_F(H(X))U:
U\in\mathcal U_{\mathrm{in}}(X)
\right\}.
\]
Thus, the residual-output subspace is induced by the paired sum of the identity-branch direction \(U\) and the corresponding body-branch direction \(J_F(H(X))U\).

For a fixed target-behavior direction \(Y\in\mathcal Y\), let \(G_{\mathrm{out}}(Y;X)\) be the target-behavior sensitivity direction at the residual-block output. Its harmful-semantics-aligned output component is
\[
A_{\mathrm{out}}(Y;X)
:=
P_{\mathcal U_{\mathrm{out}}(X)}G_{\mathrm{out}}(Y;X).
\]
This is the output-side aligned component obtained by projecting \(G_{\mathrm{out}}(Y;X)\) onto the residual-output subspace \(\mathcal U_{\mathrm{out}}(X)\).

Residual wiring produces two raw branch-consistency requirements. The identity edge requires the output-aligned component to lie in the residual-stream input subspace, and the body-add edge requires the same output-aligned component to lie in the body-branch subspace. Accordingly, define
\[
B_{\mathrm{id}}(Y;X)
:=
P_{\mathcal U_{\mathrm{in}}(X)^\perp}
\bigl(A_{\mathrm{out}}(Y;X)\bigr),
\]
and
\[
B_{\mathrm{add}}(Y;X)
:=
P_{\mathcal U_{\mathrm{body}}(X)^\perp}
\bigl(A_{\mathrm{out}}(Y;X)\bigr).
\]
We collect these two residual-wiring sources as
\[
B_{\mathrm{res}}(Y;X)
:=
\bigl(B_{\mathrm{id}}(Y;X),B_{\mathrm{add}}(Y;X)\bigr).
\]

In the attention block, these two raw residual-wiring terms correspond to
\[
P_{\mathcal U_{X_l}^\perp}(A_{Z_l})
\qquad\text{and}\qquad
P_{\mathcal U_{O_l}^\perp}(A_{Z_l}),
\]
which, after the appropriate branch adjoint transports, give the residual-wiring contributions in \eqref{eq:app_B_res1_id} and \eqref{eq:app_B_res1_add}. In the MLP block, they correspond to
\[
P_{\mathcal U_{Z_l}^\perp}(A_{X_{l+1}})
\qquad\text{and}\qquad
P_{\mathcal U_{M_l}^\perp}(A_{X_{l+1}}),
\]
which, after the appropriate branch adjoint transports, give the residual-wiring contributions in \eqref{eq:app_B_res2_id} and \eqref{eq:app_B_res2_add}.

The corresponding residual-branch-consistency condition is
\[
\mathfrak C_{\mathrm{res}}(Y):
\qquad
A_{\mathrm{out}}(Y;X)
\in
\mathcal U_{\mathrm{in}}(X)\cap\mathcal U_{\mathrm{body}}(X),
\qquad
\forall X\in\Omega.
\]

\begin{proposition}[Residual wiring as an analytically constrained operator-level source]
\label{prop:residual_rigid}
Under Assumption~\ref{ass:analytic}, the following hold:
\begin{enumerate}
    \item \(B_{\mathrm{res}}(Y;X)=0\) for all \(X\in\mathcal X\) if and only if \(\mathfrak C_{\mathrm{res}}(Y)\) holds on \(\Omega\).

    \item If \(\mathfrak C_{\mathrm{res}}(Y)\) fails on \(\Omega\), then \(B_{\mathrm{res}}(Y;X)\neq 0\) on a dense open subset of \(\Omega\). Equivalently, on a dense open subset of \(\Omega\), at least one of \(B_{\mathrm{id}}(Y;X)\) or \(B_{\mathrm{add}}(Y;X)\) is nonzero.
\end{enumerate}
\end{proposition}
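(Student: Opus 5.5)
The plan follows the template of Proposition~\ref{prop:norm_rigid}: first a pointwise characterization, then an extension from \(\mathcal X\) to \(\Omega\) via the analytic identity theorem, and finally a dense-open nonvanishing argument.

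\textbf{Pointwise equivalence.} Since \(B_{\mathrm{res}}(Y;X)=(B_{\mathrm{id}}(Y;X),B_{\mathrm{add}}(Y;X))\), it vanishes exactly when both components vanish. For \(B_{\mathrm{id}}\), we have \(P_{\mathcal U_{\mathrm{in}}(X)^\perp}A_{\mathrm{out}}(Y;X)=0\) if and only if \(A_{\mathrm{out}}(Y;X)\in\mathcal U_{\mathrm{in}}(X)\). For \(B_{\mathrm{add}}\), we have \(P_{\mathcal U_{\mathrm{body}}(X)^\perp}A_{\mathrm{out}}(Y;X)=0\) if and only if \(A_{\mathrm{out}}(Y;X)\in\mathcal U_{\mathrm{body}}(X)\). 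Hence
\[
B_{\mathrm{res}}(Y;X)=0 \iff A_{\mathrm{out}}(Y;X)\in\mathcal U_{\mathrm{in}}(X)\cap\mathcal U_{\mathrm{body}}(X),
\]
which is the pointwise form of \(\mathfrak C_{\mathrm{res}}(Y)\).

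\textbf{Analyticity and extension.} I will use the real-valued function
\[
T_{\mathrm{res}}(Y;X):=\normsmall{B_{\mathrm{id}}(Y;X)}^2+\normsmall{B_{\mathrm{add}}(Y;X)}^2 .
\]
By Assumption~\ref{ass:analytic}, the projections \(P_{\mathcal U_{\mathrm{in}}^\perp}\) and \(P_{\mathcal U_{\mathrm{out}}}\) are real-analytic fields, since \(\mathcal U_{\mathrm{in}}\) and \(\mathcal U_{\mathrm{out}}\) are propagated subspace fields \(\mathcal U_t\) appearing in the operator chain. The backward direction \(G_{\mathrm{out}}\) is a composition of adjoint Jacobians, so it is also analytic. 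Therefore \(A_{\mathrm{out}}\) and \(B_{\mathrm{id}}\) are analytic.

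The main obstacle is \(B_{\mathrm{add}}\). The projection onto \(\mathcal U_{\mathrm{body}}=J_F\,\mathcal U_{\mathrm{in}}\), the body-branch subspace \(\mathcal U_{O_l}\) or \(\mathcal U_{M_l}\) in the notation of Appendix~\ref{app:operators}, is not the projection onto one of the \(\mathcal U_t\) of the block-level chain. However, it is one of the propagated subspace fields at operator-level positions used in \eqref{eq:app_attn_subspaces}--\eqref{eq:app_mlp_subspaces}. I would read Assumption~\ref{ass:analytic} as covering all such propagated fields, since the assumption explicitly asserts analyticity of all operator-level sources. If that reading were insufficient, the fallback is to state the additional hypothesis that \(\dim\mathcal U_{\mathrm{body}}(X)\) is constant on \(\Omega\). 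Then an analytic frame \(J_F U_i\) exists, and Gram–Schmidt keeps the projection analytic. Either way \(T_{\mathrm{res}}\) is real-analytic on \(\Omega\), and \(T_{\mathrm{res}}=0\) exactly when \(B_{\mathrm{res}}=0\).

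\textbf{Both claims.} For part 1, suppose \(B_{\mathrm{res}}=0\) on \(\mathcal X\). Then \(T_{\mathrm{res}}\) vanishes on the nonempty open set \(\mathcal X\), and since \(\Omega\) is connected, the identity theorem gives \(T_{\mathrm{res}}\equiv0\) on \(\Omega\). By the pointwise equivalence, \(\mathfrak C_{\mathrm{res}}(Y)\) holds on \(\Omega\). Conversely, if \(\mathfrak C_{\mathrm{res}}(Y)\) holds on \(\Omega\), the pointwise equivalence gives \(B_{\mathrm{res}}=0\) on \(\Omega\supseteq\mathcal X\).

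For part 2, suppose \(\mathfrak C_{\mathrm{res}}(Y)\) fails. If the zero set of \(T_{\mathrm{res}}\) contained a nonempty open \(O\subseteq\Omega\), the same identity-theorem argument would force \(\mathfrak C_{\mathrm{res}}(Y)\) on \(\Omega\), a contradiction. So the zero set is closed, by continuity, and has empty interior. Its complement is therefore open and dense. On that complement \(T_{\mathrm{res}}>0\), i.e., at least one of \(B_{\mathrm{id}}(Y;X)\) or \(B_{\mathrm{add}}(Y;X)\) is nonzero, which is the stated equivalent form.
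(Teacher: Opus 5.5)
Your proposal is correct and follows the paper's proof essentially step for step. Both use the pointwise equivalence \(B_{\mathrm{res}}(Y;X)=0\iff A_{\mathrm{out}}(Y;X)\in\mathcal U_{\mathrm{in}}(X)\cap\mathcal U_{\mathrm{body}}(X)\), then the analytic identity theorem to pass from \(\mathcal X\) to \(\Omega\), then the empty-interior argument for the dense open nonvanishing set; the only cosmetic difference is that you use the scalar \(\normsmall{B_{\mathrm{id}}}^2+\normsmall{B_{\mathrm{add}}}^2\) where the paper uses the pair \((B_{\mathrm{id}},B_{\mathrm{add}})\).

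Your remark that \(P_{\mathcal U_{\mathrm{body}}^\perp}\) is not literally among the block-level projections \(P_{\mathcal U_t^\perp}\) is fair; the paper simply invokes Assumption~\ref{ass:analytic} at that point, as your primary reading does. In your constant-rank fallback, though, the vectors \(J_F U_i\) may be linearly dependent, so you should use a local independent subfamily or a pseudo-inverse formula rather than Gram--Schmidt on all of them.
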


\begin{proof}
First, pointwise in \(X\), we have
\[
\begin{aligned}
B_{\mathrm{id}}(Y;X)=0
&\iff
P_{\mathcal U_{\mathrm{in}}(X)^\perp}
\bigl(A_{\mathrm{out}}(Y;X)\bigr)=0 \\
&\iff
A_{\mathrm{out}}(Y;X)\in\mathcal U_{\mathrm{in}}(X),
\end{aligned}
\]
and
\[
\begin{aligned}
B_{\mathrm{add}}(Y;X)=0
&\iff
P_{\mathcal U_{\mathrm{body}}(X)^\perp}
\bigl(A_{\mathrm{out}}(Y;X)\bigr)=0 \\
&\iff
A_{\mathrm{out}}(Y;X)\in\mathcal U_{\mathrm{body}}(X).
\end{aligned}
\]
Therefore,
\[
\begin{aligned}
B_{\mathrm{res}}(Y;X)=0
&\iff
B_{\mathrm{id}}(Y;X)=0
\ \text{and}\
B_{\mathrm{add}}(Y;X)=0 \\
&\iff
A_{\mathrm{out}}(Y;X)\in\mathcal U_{\mathrm{in}}(X)
\ \text{and}\
A_{\mathrm{out}}(Y;X)\in\mathcal U_{\mathrm{body}}(X) \\
&\iff
A_{\mathrm{out}}(Y;X)
\in
\mathcal U_{\mathrm{in}}(X)\cap\mathcal U_{\mathrm{body}}(X).
\end{aligned}
\]

Suppose \(B_{\mathrm{res}}(Y;X)=0\) for all \(X\in\mathcal X\). Define
\[
T_{\mathrm{res}}(Y;X)
:=
\left(
P_{\mathcal U_{\mathrm{in}}(X)^\perp}
\bigl(A_{\mathrm{out}}(Y;X)\bigr),
\,
P_{\mathcal U_{\mathrm{body}}(X)^\perp}
\bigl(A_{\mathrm{out}}(Y;X)\bigr)
\right).
\]
Then
\[
T_{\mathrm{res}}(Y;X)=0,
\qquad
\forall X\in\mathcal X.
\]
By Assumption~\ref{ass:analytic}, \(T_{\mathrm{res}}(Y;\cdot)\) is real-analytic on \(\Omega\). Since \(\mathcal X\subseteq\Omega\) is nonempty and open and \(\Omega\) is connected, the analytic identity theorem gives
\[
T_{\mathrm{res}}(Y;X)\equiv 0,
\qquad
\forall X\in\Omega.
\]
Thus,
\[
P_{\mathcal U_{\mathrm{in}}(X)^\perp}
\bigl(A_{\mathrm{out}}(Y;X)\bigr)=0,
\qquad
P_{\mathcal U_{\mathrm{body}}(X)^\perp}
\bigl(A_{\mathrm{out}}(Y;X)\bigr)=0,
\qquad
\forall X\in\Omega.
\]
Equivalently,
\[
A_{\mathrm{out}}(Y;X)
\in
\mathcal U_{\mathrm{in}}(X)\cap\mathcal U_{\mathrm{body}}(X),
\qquad
\forall X\in\Omega.
\]
Hence \(\mathfrak C_{\mathrm{res}}(Y)\) holds on \(\Omega\).

Conversely, if \(\mathfrak C_{\mathrm{res}}(Y)\) holds on \(\Omega\), then
\[
A_{\mathrm{out}}(Y;X)
\in
\mathcal U_{\mathrm{in}}(X)\cap\mathcal U_{\mathrm{body}}(X),
\qquad
\forall X\in\Omega.
\]
Therefore,
\[
P_{\mathcal U_{\mathrm{in}}(X)^\perp}
\bigl(A_{\mathrm{out}}(Y;X)\bigr)=0,
\qquad
P_{\mathcal U_{\mathrm{body}}(X)^\perp}
\bigl(A_{\mathrm{out}}(Y;X)\bigr)=0,
\qquad
\forall X\in\Omega.
\]
Thus,
\[
B_{\mathrm{res}}(Y;X)=0,
\qquad
\forall X\in\Omega,
\]
and in particular \(B_{\mathrm{res}}(Y;X)=0\) for all \(X\in\mathcal X\). This proves the first claim.

For the second claim, suppose that \(\mathfrak C_{\mathrm{res}}(Y)\) fails on \(\Omega\). If the zero set of \(B_{\mathrm{res}}(Y;\cdot)\) had nonempty interior in \(\Omega\), then there would exist a nonempty open set \(O\subseteq\Omega\) such that
\[
B_{\mathrm{res}}(Y;X)=0,
\qquad
\forall X\in O.
\]
Equivalently,
\[
T_{\mathrm{res}}(Y;X)=0,
\qquad
\forall X\in O.
\]
Since \(T_{\mathrm{res}}(Y;\cdot)\) is real-analytic on the connected open set \(\Omega\), the analytic identity theorem gives
\[
T_{\mathrm{res}}(Y;X)\equiv 0,
\qquad
\forall X\in\Omega.
\]
By the pointwise equivalence above, this implies that
\[
A_{\mathrm{out}}(Y;X)
\in
\mathcal U_{\mathrm{in}}(X)\cap\mathcal U_{\mathrm{body}}(X),
\qquad
\forall X\in\Omega,
\]
so \(\mathfrak C_{\mathrm{res}}(Y)\) holds on \(\Omega\), contradicting the assumption. Hence the zero set of \(B_{\mathrm{res}}(Y;\cdot)\) has empty interior in \(\Omega\). Since \(B_{\mathrm{res}}(Y;\cdot)\) is continuous, its complement is open and dense. Therefore \(B_{\mathrm{res}}(Y;X)\neq0\) on a dense open subset of \(\Omega\). Equivalently, at least one of \(B_{\mathrm{id}}(Y;X)\) or \(B_{\mathrm{add}}(Y;X)\) is nonzero on that dense open subset.
\end{proof}

The condition \(\mathfrak C_{\mathrm{res}}(Y)\) is a residual-branch-consistency condition. The output-side aligned component
\[
A_{\mathrm{out}}(Y;X)
=
P_{\mathcal U_{\mathrm{out}}(X)}G_{\mathrm{out}}(Y;X)
\]
is first obtained by projecting the output-side target-behavior sensitivity onto the residual-output subspace induced by the paired sum of the identity branch and the body branch. Eliminating residual-wiring leakage requires this same output-aligned component to lie in both branch subspaces: it must be explainable as an identity-branch direction in \(\mathcal U_{\mathrm{in}}(X)\) and also as a body-branch direction in \(\mathcal U_{\mathrm{body}}(X)\). Whether this condition holds is determined by the branch-side geometry \(\mathcal U_{\mathrm{in}}(X)\) and \(\mathcal U_{\mathrm{body}}(X)\), together with the output-side sensitivity geometry \(G_{\mathrm{out}}(Y;X)\).

\subsubsection{Terminal source as an analytically constrained operator-level source}

Finally, we consider the terminal source. Recall that the target-behavior subspace
\(\mathcal Y\) is defined in the terminal hidden-state space. For a fixed
target-behavior direction \(Y\in\mathcal Y\), the terminal target-behavior
sensitivity direction is simply
\[
G_T(Y;X):=Y.
\]
Let \(\mathcal U_T(X)\) denote the propagated harmful-semantics-sensitive
subspace at the terminal hidden-state space. The terminal source
\[
R_T(Y;X)
:=
P_{\mathcal U_T(X)^\perp}Y.
\]
This is the local instance of the terminal source in the generic decomposition.

The corresponding target-subspace-consistency condition is
\[
\mathfrak C_T(Y):
\qquad
Y\in\mathcal U_T(X),
\qquad
\forall X\in\Omega.
\]

\begin{proposition}[Terminal source as an analytically constrained operator-level source]
\label{prop:terminal_rigid}
Under Assumption~\ref{ass:analytic}, the following hold:
\begin{enumerate}
    \item \(R_T(Y;X)=0\) for all \(X\in\mathcal X\) if and only if \(\mathfrak C_T(Y)\) holds on \(\Omega\).

    \item If \(\mathfrak C_T(Y)\) fails on \(\Omega\), then \(R_T(Y;X)\neq0\) on a dense open subset of \(\Omega\).
\end{enumerate}
\end{proposition}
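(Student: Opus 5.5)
The argument follows the template of Propositions~\ref{prop:norm_rigid} and~\ref{prop:residual_rigid}; only the object is simpler. We first record a pointwise equivalence. For each fixed \(X\), \(R_T(Y;X)=P_{\mathcal U_T(X)^\perp}Y\) is the Frobenius-orthogonal projection of \(Y\) onto \(\mathcal U_T(X)^\perp\). Hence \(R_T(Y;X)=0\) holds if and only if \(Y\in(\mathcal U_T(X)^\perp)^\perp=\mathcal U_T(X)\). The identity \((\mathcal U^\perp)^\perp=\mathcal U\) holds because we work with subspaces of a finite-dimensional inner-product space. So the pointwise form of \(\mathfrak C_T(Y)\) is exactly the vanishing of the terminal source at \(X\).

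For the first claim, we define \(T_T(Y;X):=P_{\mathcal U_T(X)^\perp}Y\) as a map on \(\Omega\). Since \(Y\) is a fixed matrix and \(X\mapsto P_{\mathcal U_T(X)^\perp}\) is real-analytic on \(\Omega\) by Assumption~\ref{ass:analytic}, the map \(T_T(Y;\cdot)\) is real-analytic on \(\Omega\). This uses \(\mathcal U_T=\mathcal U_{T}\) as the last propagated field in the assumption.

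Suppose \(R_T(Y;X)=0\) on \(\mathcal X\). Then \(T_T(Y;\cdot)\) vanishes on a nonempty open subset of the connected open set \(\Omega\). Here \(\mathcal X\) is nonempty, as implicit in Assumption~\ref{ass:u_star}, and open. The identity theorem for real-analytic functions, applied componentwise, gives \(T_T(Y;\cdot)\equiv0\) on \(\Omega\). The pointwise equivalence then yields \(Y\in\mathcal U_T(X)\) for all \(X\in\Omega\), which is \(\mathfrak C_T(Y)\).

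Conversely, suppose \(\mathfrak C_T(Y)\) holds on \(\Omega\). Then the pointwise equivalence gives \(R_T(Y;X)=0\) on \(\Omega\), and in particular on \(\mathcal X\subseteq\Omega\).

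For the second claim we argue by contradiction, as before. Suppose \(\mathfrak C_T(Y)\) fails but the zero set \(\{X\in\Omega:R_T(Y;X)=0\}\) has nonempty interior \(O\). Then \(T_T(Y;\cdot)\) vanishes on \(O\). The identity theorem then gives \(T_T(Y;\cdot)\equiv0\) on \(\Omega\), and hence \(\mathfrak C_T(Y)\) holds, a contradiction. Therefore the zero set has empty interior.

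The zero set is also relatively closed in \(\Omega\), because \(T_T(Y;\cdot)\) is continuous. Its complement \(\{X\in\Omega:R_T(Y;X)\neq0\}\) is therefore open. It is also dense, since a closed set with empty interior has dense complement. This proves the dense-open nonvanishing statement.

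The only step needing care is the justification that \(T_T(Y;\cdot)\) is real-analytic, and this is taken directly from Assumption~\ref{ass:analytic}; the rest is routine linear algebra and the identity theorem. One could also note that \(\mathcal U_T(X)\) need not have constant dimension across \(\Omega\), but analyticity of the projection field is assumed, so this causes no issue.
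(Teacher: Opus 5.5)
Your proof is correct and follows essentially the same route as the paper. Both arguments use the pointwise equivalence \(R_T(Y;X)=0\iff Y\in\mathcal U_T(X)\), the real-analyticity of \(T_T(Y;\cdot)=P_{\mathcal U_T(\cdot)^\perp}Y\) from Assumption~\ref{ass:analytic}, the identity theorem on the connected region \(\Omega\) for the first claim, and a contradiction argument combined with continuity to get dense-open nonvanishing in the second claim.
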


\begin{proof}
Pointwise in \(X\),
\[
\begin{aligned}
R_T(Y;X)=0
&\iff
P_{\mathcal U_T(X)^\perp}Y=0 \\
&\iff
Y\in\mathcal U_T(X).
\end{aligned}
\]
Define
\[
T_T(Y;X)
:=
P_{\mathcal U_T(X)^\perp}Y.
\]
Then
\[
R_T(Y;X)=T_T(Y;X).
\]

Suppose \(R_T(Y;X)=0\) for all \(X\in\mathcal X\). Then
\[
T_T(Y;X)=0,
\qquad
\forall X\in\mathcal X.
\]
By Assumption~\ref{ass:analytic}, \(T_T(Y;\cdot)\) is real-analytic on \(\Omega\). Since \(\mathcal X\subseteq\Omega\) is nonempty and open and \(\Omega\) is connected, the analytic identity theorem gives
\[
T_T(Y;X)\equiv 0,
\qquad
\forall X\in\Omega.
\]
Therefore,
\[
P_{\mathcal U_T(X)^\perp}Y=0,
\qquad
\forall X\in\Omega.
\]
Equivalently,
\[
Y\in\mathcal U_T(X),
\qquad
\forall X\in\Omega.
\]
Hence \(\mathfrak C_T(Y)\) holds on \(\Omega\).

Conversely, if \(\mathfrak C_T(Y)\) holds on \(\Omega\), then
\[
Y\in\mathcal U_T(X),
\qquad
\forall X\in\Omega.
\]
Thus
\[
P_{\mathcal U_T(X)^\perp}Y=0,
\qquad
\forall X\in\Omega.
\]
Therefore,
\[
R_T(Y;X)=0,
\qquad
\forall X\in\Omega,
\]
and in particular for all \(X\in\mathcal X\). This proves the first claim.

For the second claim, suppose that \(\mathfrak C_T(Y)\) fails on \(\Omega\). If the zero set of \(R_T(Y;\cdot)\) had nonempty interior in \(\Omega\), then there would exist a nonempty open set \(O\subseteq\Omega\) such that
\[
R_T(Y;X)=0,
\qquad
\forall X\in O.
\]
Equivalently,
\[
T_T(Y;X)=0,
\qquad
\forall X\in O.
\]
Since \(T_T(Y;\cdot)\) is real-analytic on the connected open set \(\Omega\), the analytic identity theorem gives
\[
T_T(Y;X)\equiv 0,
\qquad
\forall X\in\Omega.
\]
By the pointwise equivalence above, this implies that \(Y\in\mathcal U_T(X)\) for all \(X\in\Omega\), so \(\mathfrak C_T(Y)\) holds on \(\Omega\), contradicting the assumption. Hence the zero set of \(R_T(Y;\cdot)\) has empty interior in \(\Omega\). Since \(R_T(Y;\cdot)\) is continuous, its complement is open and dense. Therefore \(R_T(Y;X)\neq0\) on a dense open subset of \(\Omega\).
\end{proof}

The condition \(\mathfrak C_T(Y)\) is a target-subspace-consistency condition. It requires the terminal target-behavior direction \(Y\) to be contained in the terminal harmful-semantics-sensitive subspace \(\mathcal U_T(X)\). Whether this condition holds is determined by the terminal subspace geometry \(\mathcal U_T(X)\).

In summary, normalization, residual wiring, and the terminal source satisfy the defining properties of analytically constrained operator-level sources: eliminating each source on \(\mathcal X\) is equivalent to an analytic vanishing condition holding on \(\Omega\), and if that condition fails, the corresponding source is nonzero on a dense open subset of \(\Omega\).

\subsection{Proof of Theorem~\ref{thm:tradeoff}}
\label{app:proof_tradeoff}

\begin{proof}
Suppose the non-identical-field condition in the theorem holds:
\[
\mathcal E_h^\Sigma\not\equiv\mathcal E_b^\Sigma
\qquad
\text{on }\Omega.
\]
Assume, for contradiction, that there exists a single parameter setting
\(\Theta\) satisfying both exact requirements:
\[
\mathcal E_\Theta^\Sigma(X)=\mathcal E_h^\Sigma(X),
\qquad
\forall X\in\mathcal X,
\]
and
\[
\mathcal E_\Theta^\Sigma(X)=\mathcal E_b^\Sigma(X),
\qquad
\forall X\in\mathcal X_b.
\]

We first extend the harmful-region equality from \(\mathcal X\) to the whole connected region \(\Omega\). Define
\[
H(X)
:=
\mathcal E_\Theta^\Sigma(X)-\mathcal E_h^\Sigma(X).
\]
By Assumption~\ref{ass:analytic}, both \(\mathcal E_\Theta^\Sigma\) and \(\mathcal E_h^\Sigma\) are real-analytic on \(\Omega\), so \(H\) is real-analytic on \(\Omega\). The harmful-region constraint gives
\[
H(X)=0,
\qquad
\forall X\in\mathcal X.
\]
Since \(\mathcal X\subseteq\Omega\) is a nonempty open set and \(\Omega\) is connected, the analytic identity theorem implies
\[
H(X)\equiv 0,
\qquad
\forall X\in\Omega.
\]
Therefore,
\[
\mathcal E_\Theta^\Sigma(X)=\mathcal E_h^\Sigma(X),
\qquad
\forall X\in\Omega.
\]

Restricting this identity to the benign region \(\mathcal X_b\subseteq\Omega\), we obtain
\[
\mathcal E_\Theta^\Sigma(X)=\mathcal E_h^\Sigma(X),
\qquad
\forall X\in\mathcal X_b.
\]
But the benign-region behavioral constraint also gives
\[
\mathcal E_\Theta^\Sigma(X)=\mathcal E_b^\Sigma(X),
\qquad
\forall X\in\mathcal X_b.
\]
Hence
\[
\mathcal E_h^\Sigma(X)=\mathcal E_b^\Sigma(X),
\qquad
\forall X\in\mathcal X_b.
\]

Now define
\[
K(X)
:=
\mathcal E_h^\Sigma(X)-\mathcal E_b^\Sigma(X).
\]
By Assumption~\ref{ass:analytic}, both \(\mathcal E_h^\Sigma\) and \(\mathcal E_b^\Sigma\) are real-analytic on \(\Omega\), so \(K\) is real-analytic on \(\Omega\). The equality above gives
\[
K(X)=0,
\qquad
\forall X\in\mathcal X_b.
\]
Since \(\mathcal X_b\subseteq\Omega\) is a nonempty open set and \(\Omega\) is connected, the analytic identity theorem implies
\[
K(X)\equiv 0,
\qquad
\forall X\in\Omega.
\]
Thus,
\[
\mathcal E_h^\Sigma\equiv\mathcal E_b^\Sigma
\qquad
\text{on }\Omega.
\]
This contradicts the non-identical-field condition of the theorem:
\[
\mathcal E_h^\Sigma\not\equiv\mathcal E_b^\Sigma
\qquad
\text{on }\Omega.
\]

Therefore, under the non-identical-field condition, no single parameter setting
\(\Theta\) can satisfy both exact requirements.
\end{proof}

\section{Experiments}

\subsection{Additional details and justifications for the experimental settings}
\label{app:settings}

\subsubsection{Model-specific decomposition details}

We analyze five aligned pre-norm residual LLMs: Qwen3-4B and Qwen3-14B \citep{yang2025qwen3}, Llama-3.1-8B-Instruct \citep{grattafiori2024llama3}, and Gemma-3-4B-IT and Gemma-3-12B-IT \citep{gemmateam2025gemma3}. For Qwen3 and Llama-3.1-8B-Instruct, the architecture follows the standard pre-norm residual LLM template assumed in Appendix~\ref{app:operators}, so the exact operator-level decomposition of \(R_Y(X)\) derived there applies directly to the computation of the operator-level contributions for \(R_{\mathrm{ref}}(X)\), by taking \(Y=Y_{\mathrm{ref}}\).

Gemma-3-4B-IT and Gemma-3-12B-IT also follow a decoder-only pre-norm residual transformer architecture. Relative to the standard model template assumed in Appendix~\ref{app:operators}, the only difference relevant to our decomposition is that each layer contains two additional normalization modules: an attention post-layernorm and an MLP post-layernorm. Their leakage sources are computed using the same decomposition procedure as in Appendix~\ref{app:operators}, and their transported contributions are included in the normalization-family sum \(S_{\mathrm{norm}}^\Sigma(Y_{\mathrm{ref}};X)\).

\subsubsection{Jailbreak sample collection}

We analyze five jailbreak attacks: GCG \citep{zou2023universal}, AutoDAN \citep{liu2023autodan}, GPTFuzzer \citep{yu2023gptfuzzer}, TAP \citep{mehrotra2024tree}, and ReNeLLM \citep{ding2024renellm}. These attacks provide a reasonably broad view of jailbreak diversity. In our implementation, GCG, AutoDAN, and GPTFuzzer retain the original harmful prompt and wrap it with additional adversarial context, whereas TAP and ReNeLLM modify the original harmful prompt while preserving its underlying harmful semantics.

To increase the diversity of harmful prompts, we collect them from multiple sources, including AdvBench \citep{zou2023universal}, StrongREJECT \citep{souly2024strongreject}, MaliciousInstruct \citep{huang2024catastrophic}, HarmBench \citep{mazeika2024harmbench}, and JBB-Behaviors \citep{chao2024jailbreakbench}. From these sources, we sample prompts to construct jailbreak examples. For each model and each attack method, we collect 100 harmful--jailbreak prompt pairs, where the harmful prompt is refused and the jailbreak prompt is answered.

\subsubsection{Jailbreak evaluation}

We use Qwen3Guard-Gen-8B \citep{zhao2025qwen3guard} as the judge model for determining whether a target-model output constitutes a successful jailbreak. For each test case, we feed the jailbreak prompt together with the target model's response into Qwen3Guard-Gen-8B and parse its output for the field ``Refusal: (Yes|No)''. We treat ``Refusal: No'' as a successful jailbreak and ``Refusal: Yes'' as a failed one. This protocol is consistent with the intended use of Qwen3Guard-Gen-8B as a generative safety-guardrail model.

\subsubsection{Construction and justification of the reference input transformation}

To study real discrete jailbreak prompts under the continuous input-transformation view, we construct a reference input transformation from a harmful prompt to its jailbreak counterpart. We first align the token-embedding matrices of the two prompts into a common input space of matched dimension. This produces two boundary points \(X(\eta_0)\) and \(X(\eta_\star)\), corresponding to the harmful prompt and the jailbreak prompt after alignment.

The alignment is designed to preserve shared harmful semantic fragments as much as possible, so that the reference input transformation minimally changes the underlying harmful semantics of the input. This reflects the structure of jailbreak attacks, which often alter the prompt form while preserving the harmful request. We use a recursive alignment strategy that first matches the longest common contiguous token blocks. For attacks that retain the original harmful prompt, this naturally preserves the harmful fragments shared by the harmful prompt and the jailbreak prompt. For attacks that modify the original harmful prompt while preserving its harmful semantics, the same recursive procedure preserves shared harmful fragments whenever exact overlap remains. When no identical shared token block is available, we align the token pair with the most similar embeddings, measured by maximum embedding inner product. In this way, the alignment preserves exact harmful fragments when available and semantically similar fragments when exact matches are absent.

To avoid introducing additional lexical semantics during alignment, any remaining unmatched spans are padded with all-zero token embeddings, which serve as placeholders without explicit lexical meaning. This yields aligned boundary points \(X(\eta_0)\) and \(X(\eta_\star)\) in the same input space.

We then take the linear interpolation
\[
X(\eta)=X(\eta_0)+\frac{\eta-\eta_0}{\eta_\star-\eta_0}\bigl(X(\eta_\star)-X(\eta_0)\bigr)
\]
as the reference input transformation. Since the harmful fragments in the two boundary points have already been aligned as much as possible, this interpolation provides a controlled embedding-space transformation that approximately preserves shared harmful semantics while connecting the harmful prompt to its jailbreak counterpart.

The alignment algorithm is summarized in Algorithm~\ref{alg:alignment}.

\begin{algorithm}[t]
\caption{Recursive token alignment by exact match and embedding similarity}
\label{alg:alignment}
\begin{algorithmic}[1]
\Require Token-id sequences \(A\), \(B\); token-embedding map \(E(\cdot)\); placeholder symbol \(\mathtt{gap}\) with zero embedding
\Ensure Aligned token sequences \(\widehat A,\widehat B\)

\State Initialize \(\widehat A,\widehat B \gets [\,]\)

\Function{AlignSpan}{$A[a_\ell{:}a_r),\, B[b_\ell{:}b_r)$}
    \If{\(a_\ell \ge a_r\) and \(b_\ell \ge b_r\)} \Return \EndIf
    \If{\(a_\ell \ge a_r\)}
        \State append placeholders of length \(b_r-b_\ell\) to \(\widehat A\)
        \State append \(B[b_\ell{:}b_r)\) to \(\widehat B\); \Return
    \EndIf
    \If{\(b_\ell \ge b_r\)}
        \State append \(A[a_\ell{:}a_r)\) to \(\widehat A\)
        \State append placeholders of length \(a_r-a_\ell\) to \(\widehat B\); \Return
    \EndIf

    \State find the longest common contiguous token block between \(A[a_\ell{:}a_r)\) and \(B[b_\ell{:}b_r)\), denoted by \((s_A,s_B,L)\)
    \If{\(L>0\)}
        \State \Call{AlignSpan}{$A[a_\ell{:}s_A),\,B[b_\ell{:}s_B)$}
        \State append the matched block \(A[s_A{:}s_A+L)=B[s_B{:}s_B+L)\) to both \(\widehat A,\widehat B\)
        \State \Call{AlignSpan}{$A[s_A+L{:}a_r),\,B[s_B+L{:}b_r)$}
        \State \Return
    \EndIf

    \State find the token pair \((i^\star,j^\star)\) with \(a_\ell \le i^\star < a_r\), \(b_\ell \le j^\star < b_r\) maximizing
    \[
    \langle E(A[i]),\,E(B[j])\rangle
    \]
    \State \Call{AlignSpan}{$A[a_\ell{:}i^\star),\,B[b_\ell{:}j^\star)$}
    \State append \(A[i^\star]\) to \(\widehat A\) and \(B[j^\star]\) to \(\widehat B\)
    \State \Call{AlignSpan}{$A[i^\star+1{:}a_r),\,B[j^\star+1{:}b_r)$}
\EndFunction

\State \Call{AlignSpan}{$A[0{:}|A|),\,B[0{:}|B|)$}
\State \Return \((\widehat A,\widehat B)\)
\end{algorithmic}
\end{algorithm}

\subsubsection{Construction and justification of the reference target-behavior subspace}

We use the final-token logit-difference direction between the answer-inducing jailbreak prompt and the refusal-inducing harmful prompt as a reference proxy for the model's answer-versus-refusal behavior. For a given harmful prompt and its jailbreak counterpart, let \(\ell_r\in\mathbb R^{|\mathcal V|}\) denote the final-token logits of the harmful prompt, and let \(\ell_a\in\mathbb R^{|\mathcal V|}\) denote the final-token logits of the jailbreak prompt. We define the logit-space contrast
\[
v:=\ell_a-\ell_r .
\]
This contrast points from the refusal-inducing harmful prompt toward the answer-inducing jailbreak prompt in logit space.

The target-behavior subspace in our framework is defined in the model's hidden-state space. We therefore pull the logit-space contrast back through the output head. Let the final-token output head be
\[
\ell=W_{\mathrm{out}}h+b ,
\]
where \(h\in\mathbb R^{d_T}\) is the final-token hidden state. We define the corresponding hidden-state direction
\[
y:=W_{\mathrm{out}}^\top v .
\]
We then construct a hidden-state matrix \(Y_{\mathrm{raw}}\in\mathbb R^{n_T\times d_T}\) by placing \(y\) at the final-token position and setting all other token positions to zero. The Frobenius-normalized target-behavior direction is
\[
Y_{\mathrm{ref}}
:=
\frac{Y_{\mathrm{raw}}}{\normsmall{Y_{\mathrm{raw}}}} .
\]
The reference target-behavior subspace is the one-dimensional subspace
\[
\mathcal Y_{\mathrm{ref}}
:=
\operatorname{span}\{Y_{\mathrm{ref}}\}.
\]
By construction, the positive direction of \(Y_{\mathrm{ref}}\) points away from refusal and toward answering for the paired harmful--jailbreak prompts.

\subsubsection{Construction and justification of the reference harmful-semantics-sensitive subspace}

We next specify the reference harmful-semantics-sensitive subspace used in the experiments. Under Assumption~\ref{ass:u_star}, the theoretical object is a local subspace field \(X\mapsto\mathcal U(X)\) whose directions affect the model's harmful-semantics interpretation. In the experiments, we instantiate this object with a pair-specific one-dimensional reference subspace, chosen to make the RED analysis operational for each harmful--jailbreak pair.

The construction is motivated by the relation between harmful-semantics interpretation and refusal behavior. Prior work suggests that pretrained models can already recognize harmfulness, and that safety alignment associates this harmfulness recognition with refusal behavior \citep{zhou2024alignment}. Around the original refusal-inducing harmful input, we therefore use the assumption that the model's answer-versus-refusal behavior is mainly governed by its harmful-semantics interpretation. Under this assumption, the input-side target-behavior sensitivity at the clean harmful input provides a natural reference proxy for harmful-semantics-sensitive variation.

Let \(\widehat X\) denote the original harmful input before placeholder padding. We compute the input-side target-behavior sensitivity subspace
\[
J_{0:T}(\widehat X)^*\mathcal Y_{\mathrm{ref}}
\]
at this clean, placeholder-free harmful input, and denote it by
\[
\widehat{\mathcal U}
:=
J_{0:T}(\widehat X)^*\mathcal Y_{\mathrm{ref}}.
\]

To map this reference subspace into the aligned input space used by the reference input transformation, we pad its basis direction with zeros according to the same alignment pattern as the harmful input. The resulting one-dimensional subspace is denoted by \(\mathcal U_{\mathrm{ref}}\), and is used as the fixed reference harmful-semantics-sensitive subspace along the reference input transformation.

\subsubsection{Intermediate-point extraction algorithm}
\label{app:attack_sample}

To analyze the refusal-to-answer transition along the reference input transformation, we do not sample intermediate points uniformly in progress. Instead, for each harmful--jailbreak pair, we extract representative points from the part of the transformation where the reference target-behavior signal is actively changing. This focuses the analysis on the effective refusal-to-answer transition, rather than on inactive prefixes or already saturated suffixes of the transformation.

Since \(\mathcal Y_{\mathrm{ref}}\) is one-dimensional, we use
\[
\normsmall{\Phi_{\mathcal Y_{\mathrm{ref}}}(X(\eta))}
\]
to track movement along the reference target-behavior direction. Let \(\{\bar\eta_i\}_{i=0}^{N_{\mathrm{grid}}-1}\) be a fixed grid on \([\eta_0,\eta_\star]\), and evaluate \(\normsmall{\Phi_{\mathcal Y_{\mathrm{ref}}}(X(\bar\eta_i))}\) on this grid. We first identify the active increase window. Let \(\eta_5\) be the earliest grid point at which \(\normsmall{\Phi_{\mathcal Y_{\mathrm{ref}}}(X(\eta))}\) reaches its final value \(\normsmall{\Phi_{\mathcal Y_{\mathrm{ref}}}(X(\eta_\star))}\), and let \(\eta_1\) be the latest grid point up to \(\eta_5\) at which it is still no larger than its initial value \(\normsmall{\Phi_{\mathcal Y_{\mathrm{ref}}}(X(\eta_0))}\). Thus, \([\eta_1,\eta_5]\) localizes the segment of the reference input transformation over which the reference target-behavior signal effectively rises.

Within this interval, we refine the sampling according to signal activity. For any subinterval \(I=[\eta_L,\eta_R]\), define its activity by
\[
\mathcal A(I)
:=
\frac{
\left|
\normsmall{\Phi_{\mathcal Y_{\mathrm{ref}}}(X(\eta_R))}
-
\normsmall{\Phi_{\mathcal Y_{\mathrm{ref}}}(X(\eta_L))}
\right|
}{
\left|
\normsmall{\Phi_{\mathcal Y_{\mathrm{ref}}}(X(\eta_\star))}
-
\normsmall{\Phi_{\mathcal Y_{\mathrm{ref}}}(X(\eta_0))}
\right|
+\varepsilon
},
\]
where \(\varepsilon>0\) is a small constant for numerical stability. Starting from the active interval \([\eta_1,\eta_5]\), we repeatedly split the subinterval with the largest activity, producing a set of leaf intervals that allocates more resolution to regions where the reference target-behavior signal changes more strongly.

We then select the three interior points \(\eta_2,\eta_3,\eta_4\) according to cumulative signal activity, rather than uniformly in \(\eta\): \(\eta_2,\eta_3,\eta_4\) are chosen to approximately correspond to the \(25\%\), \(50\%\), and \(75\%\) levels of the accumulated target-behavior signal change over \([\eta_1,\eta_5]\). Thus, the sampled points cover different stages of the refusal-to-answer transition. The full procedure is summarized in Algorithm~\ref{alg:attack_sample}. In the main analysis, we report statistics over the first four sampled points \(\eta_1,\eta_2,\eta_3,\eta_4\), since \(\eta_5\) has already reached the final target-behavior signal and therefore does not represent the signal-increasing process itself.

\begin{algorithm}[t]
\caption{Intermediate-point extraction from the reference input transformation}
\label{alg:attack_sample}
\begin{algorithmic}[1]
\Require Reference input transformation \(X(\eta)\), reference target-behavior signal \(\Phi_{\mathcal Y_{\mathrm{ref}}}\), grid \(\{\bar\eta_i\}_{i=0}^{N_{\mathrm{grid}}-1}\)
\Ensure Four sampled progress values \(\eta_1<\eta_2<\eta_3<\eta_4\)

\State Evaluate \(\normsmall{\Phi_{\mathcal Y_{\mathrm{ref}}}(X(\bar\eta_i))}\) on the grid.
\State Set \(\eta_5\) to the earliest grid point whose signal reaches the final value \(\normsmall{\Phi_{\mathcal Y_{\mathrm{ref}}}(X(\eta_\star))}\).
\State Set \(\eta_1\) to the latest grid point up to \(\eta_5\) whose signal is still no larger than the initial value \(\normsmall{\Phi_{\mathcal Y_{\mathrm{ref}}}(X(\eta_0))}\).
\State Initialize the active interval list with \([\eta_1,\eta_5]\).
\State For each interval \(I\), compute its activity \(\mathcal A(I)\).
\State Iteratively split the interval with the largest activity into two halves, until the refinement budget is reached or all intervals have sufficiently small activity.
\State Select \(\eta_2,\eta_3,\eta_4\) from the refined interval boundaries so that they approximately match the \(25\%\), \(50\%\), and \(75\%\) cumulative target-behavior signal change levels.
\State Return \(\eta_1<\eta_2<\eta_3<\eta_4\).
\end{algorithmic}
\end{algorithm}

This procedure is designed to capture the stages of the reference input transformation that matter most for the refusal-to-answer transition. The endpoints \(\eta_1\) and \(\eta_5\) remove the inactive prefix and the already-saturated tail, while the activity-based refinement allocates more sampling resolution to regions where the reference target-behavior signal changes more strongly. As a result, the extracted points provide a compact summary of how the local target-behavior change aligns with \(R_{\mathrm{ref}}\) and with its operator-level contributions under the reference construction.
 
\subsubsection{Compute resources}
\label{app:compute}

All experiments are run on NVIDIA A100 80GB GPUs. As a representative reference point, under AutoDAN with 100 harmful--jailbreak prompt pairs, the full experiment takes about 18, 33, 22, 30, and 50 hours on a single A100 80GB GPU for Qwen3-4B, Qwen3-14B, Llama-3.1-8B-Instruct, Gemma-3-4B-IT, and Gemma-3-12B-IT, respectively. The actual runtime varies with attack method, and sequence length, but these numbers provide practical estimates of the computational cost required to reproduce the main results.

\subsection{Additional experimental results}

\subsubsection{Model-specific results of model analysis}
\label{app:model_results}

\begin{figure}[t]
    \centering
    \begin{subfigure}[t]{0.49\linewidth}
        \centering
        \includegraphics[width=\linewidth]{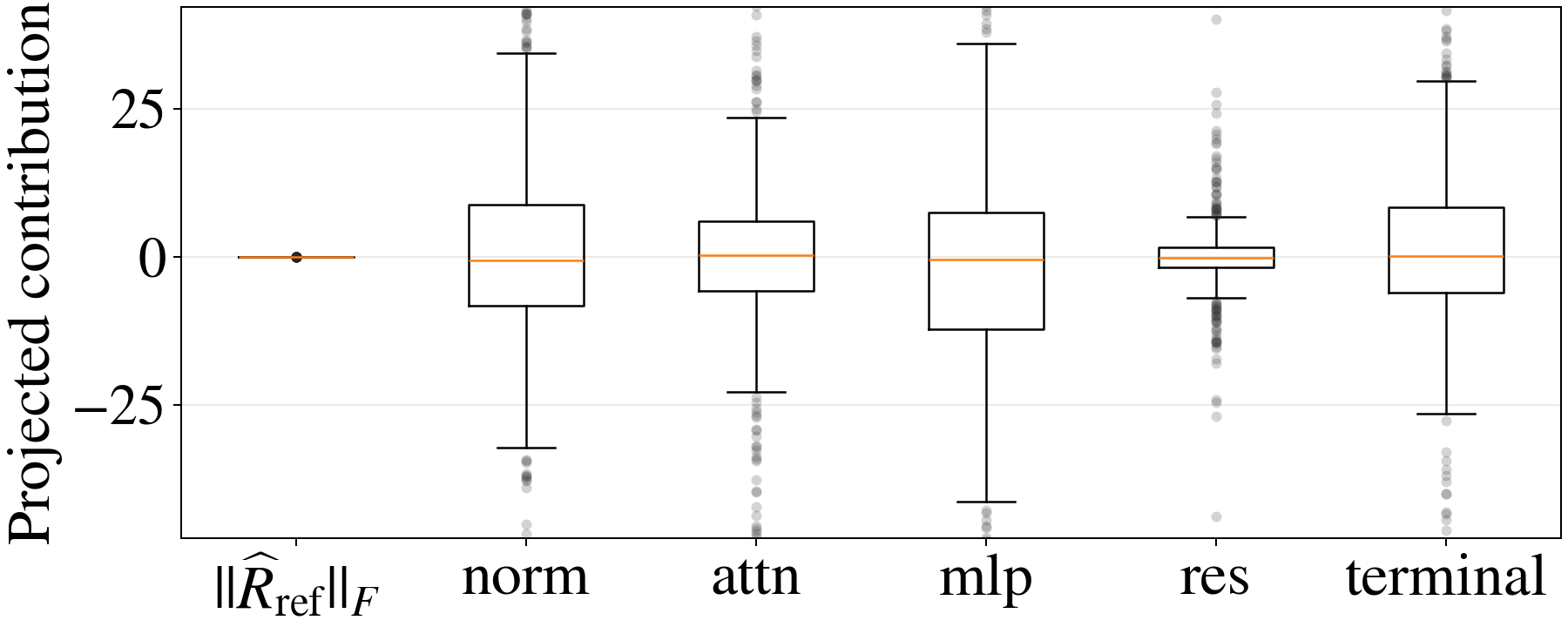}
        \caption{Before padding (Qwen3-4B).}
        \label{fig:operator_level_leakage_before_qwen3-4b}
    \end{subfigure}
    \hfill
    \begin{subfigure}[t]{0.49\linewidth}
        \centering
        \includegraphics[width=\linewidth]{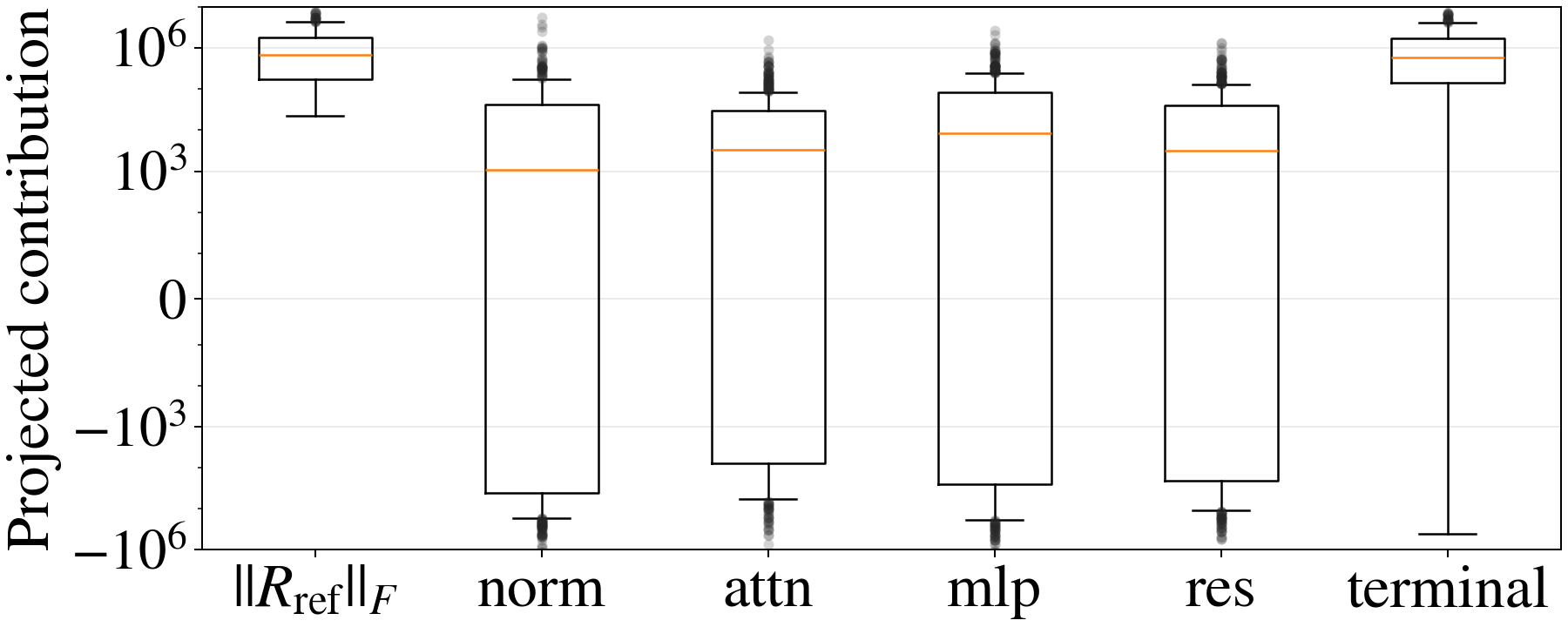}
        \caption{After padding (Qwen3-4B).}
        \label{fig:operator_level_leakage_after_qwen3-4b}
    \end{subfigure}

    \begin{subfigure}[t]{0.49\linewidth}
        \centering
        \includegraphics[width=\linewidth]{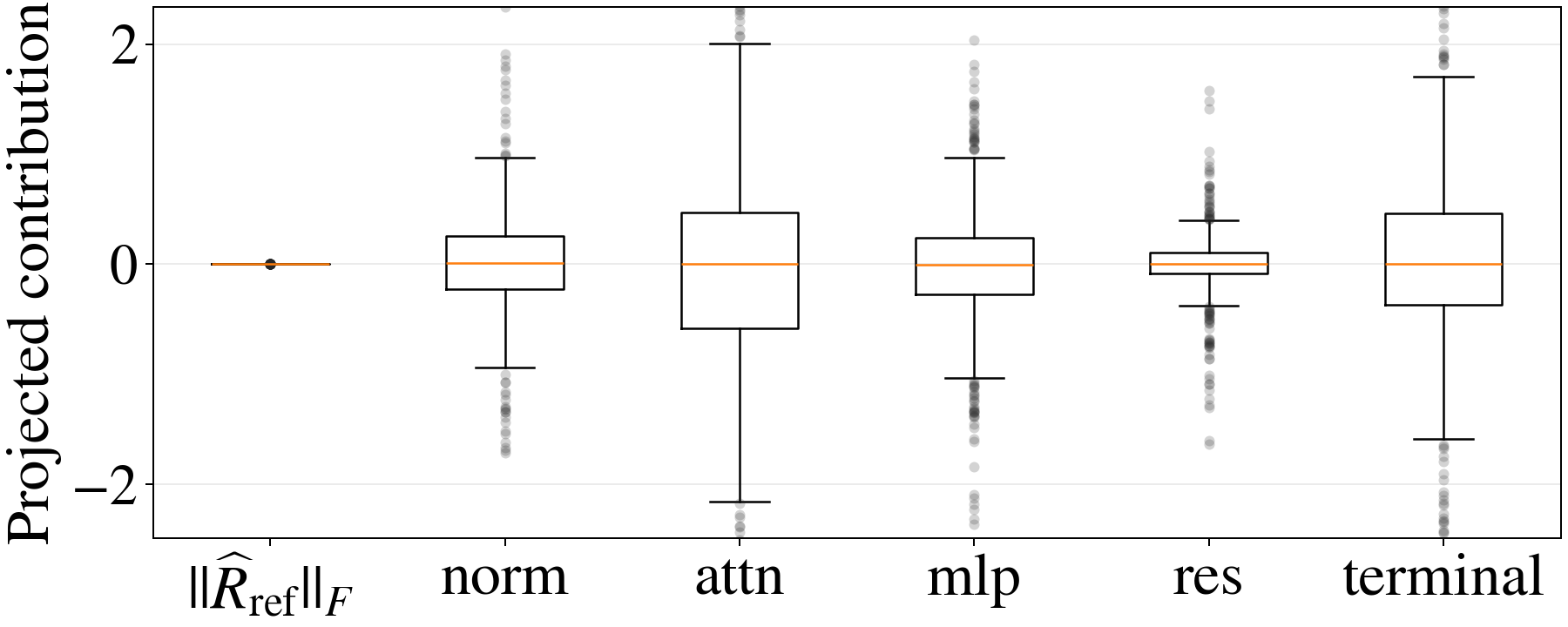}
        \caption{Before padding (Qwen3-14B).}
        \label{fig:operator_level_leakage_before_qwen3-14b}
    \end{subfigure}
    \hfill
    \begin{subfigure}[t]{0.49\linewidth}
        \centering
        \includegraphics[width=\linewidth]{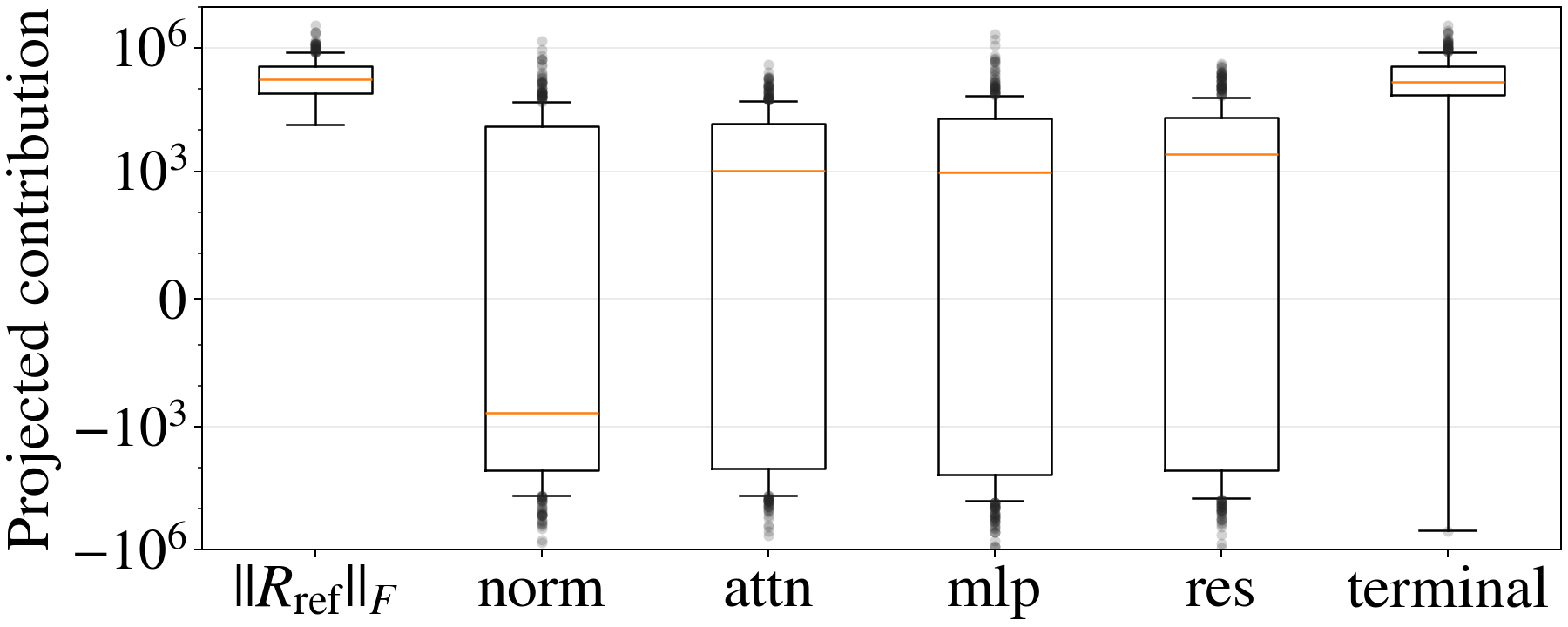}
        \caption{After padding (Qwen3-14B).}
        \label{fig:operator_level_leakage_after_qwen3-14b}
    \end{subfigure}

    \begin{subfigure}[t]{0.49\linewidth}
        \centering
        \includegraphics[width=\linewidth]{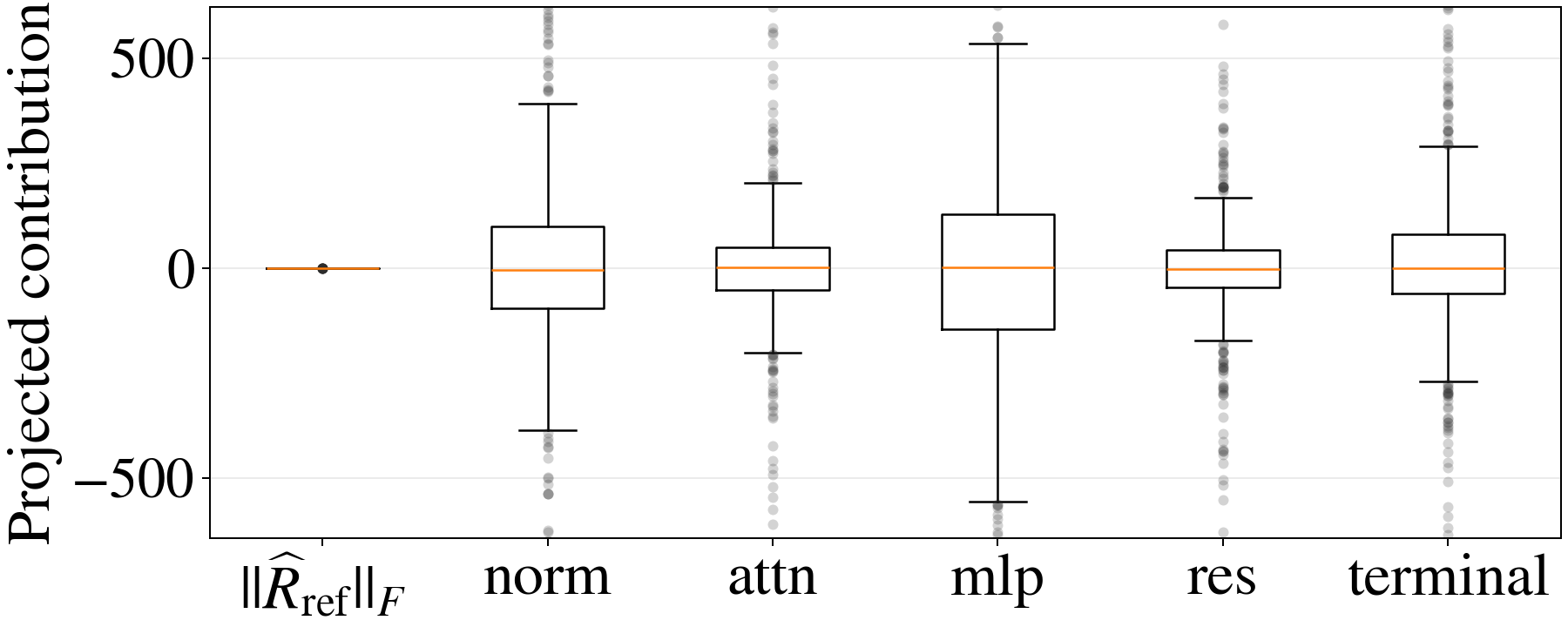}
        \caption{Before padding (Llama-3.1-8B-Instruct).}
        \label{fig:operator_level_leakage_before_llama3-8b}
    \end{subfigure}
    \hfill
    \begin{subfigure}[t]{0.49\linewidth}
        \centering
        \includegraphics[width=\linewidth]{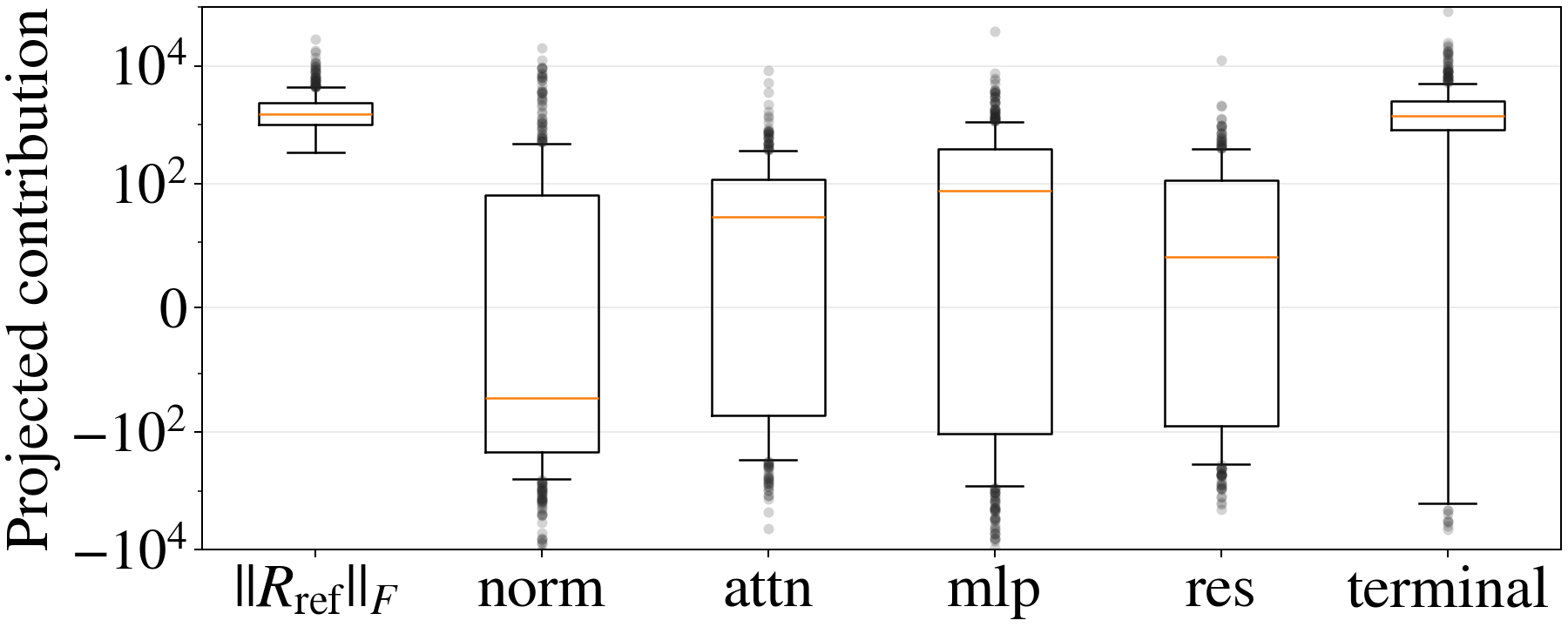}
        \caption{After padding (Llama-3.1-8B-Instruct).}
        \label{fig:operator_level_leakage_after_llama3-8b}
    \end{subfigure}

    \begin{subfigure}[t]{0.49\linewidth}
        \centering
        \includegraphics[width=\linewidth]{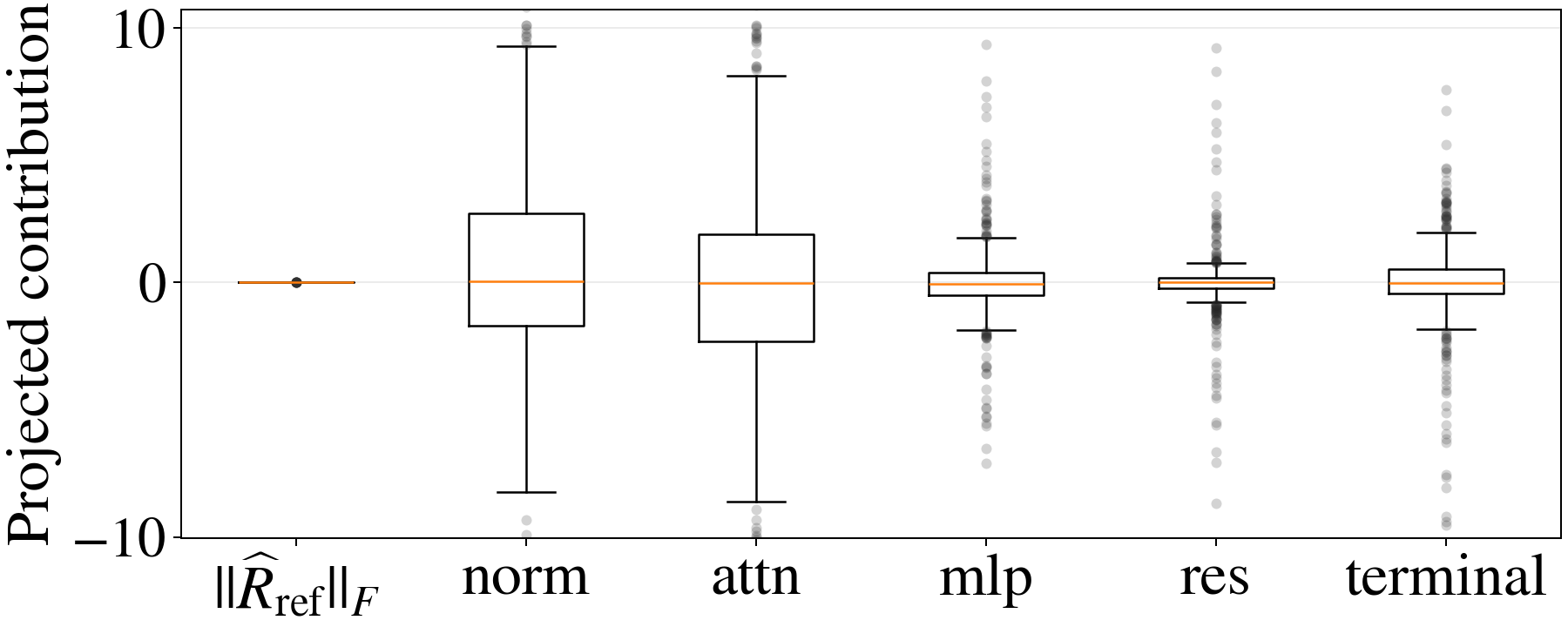}
        \caption{Before padding (Gemma-3-4B-IT).}
        \label{fig:operator_level_leakage_before_gemma-3-4b-it}
    \end{subfigure}
    \hfill
    \begin{subfigure}[t]{0.49\linewidth}
        \centering
        \includegraphics[width=\linewidth]{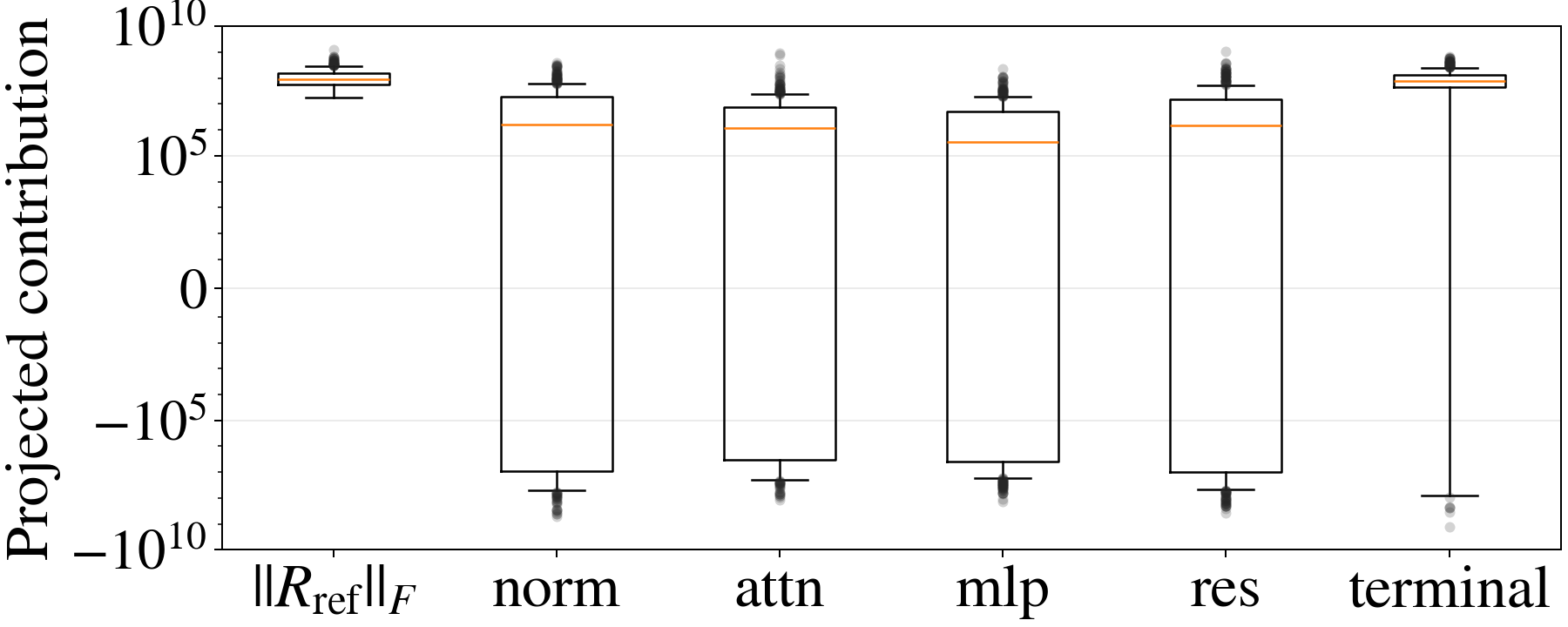}
        \caption{After padding (Gemma-3-4B-IT).}
        \label{fig:operator_level_leakage_after_gemma-3-4b-it}
    \end{subfigure}

    \begin{subfigure}[t]{0.49\linewidth}
        \centering
        \includegraphics[width=\linewidth]{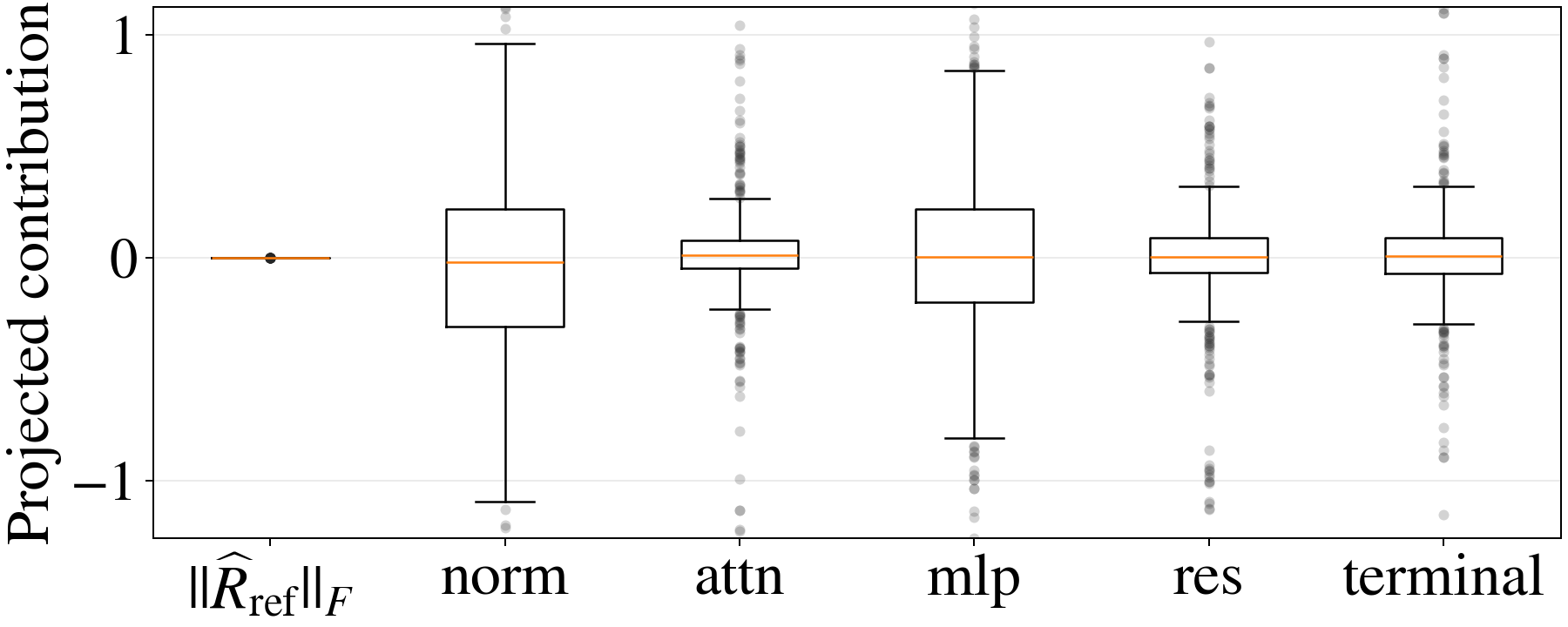}
        \caption{Before padding (Gemma-3-12B-IT).}
        \label{fig:operator_level_leakage_before_gemma-3-12b-it}
    \end{subfigure}
    \hfill
    \begin{subfigure}[t]{0.49\linewidth}
        \centering
        \includegraphics[width=\linewidth]{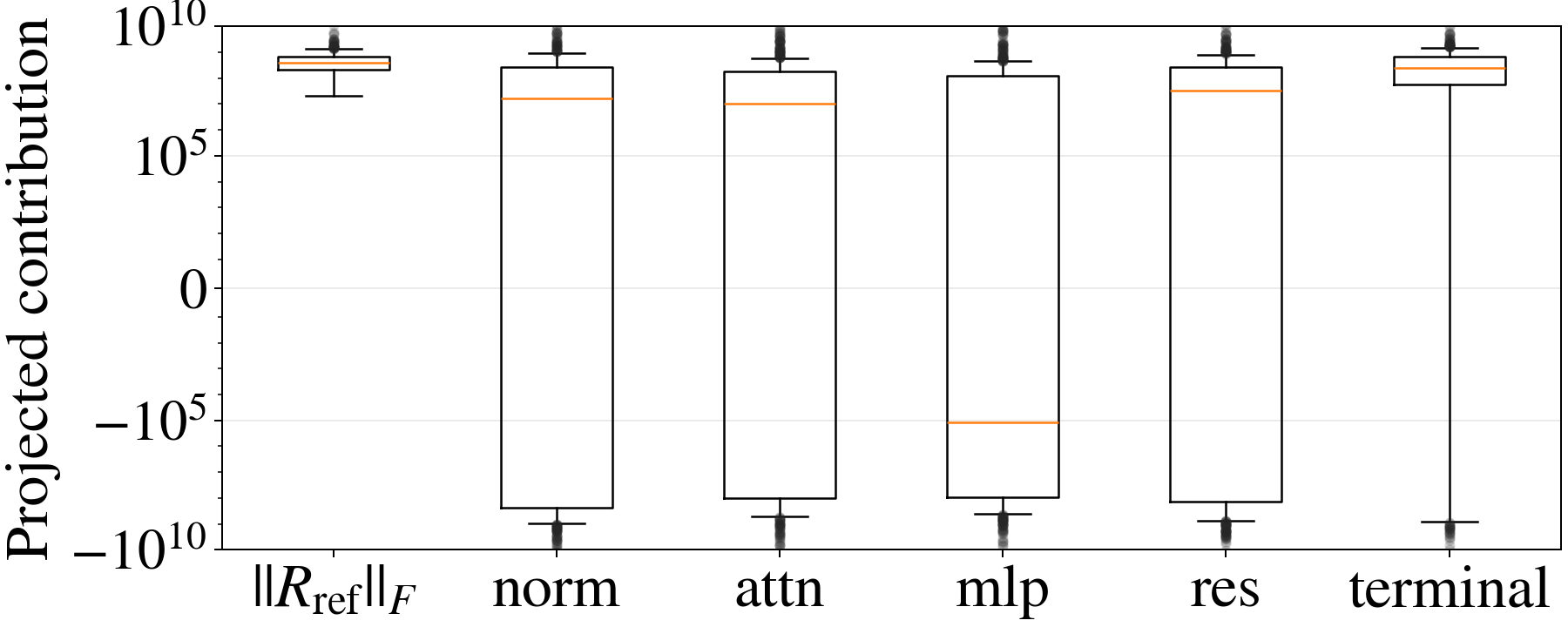}
        \caption{After padding (Gemma-3-12B-IT).}
        \label{fig:operator_level_leakage_after_gemma-3-12b-it}
    \end{subfigure}

    \caption{Model-specific reference refusal-escape direction and operator-level contributions at harmful inputs. Each row reports one model, aggregated over all harmful samples for that model. Left panels show the unpadded baseline, where the corresponding refusal-escape direction is zero by construction and operator-level contributions are visualized by signed projection onto \(e_1\). Right panels show the padded harmful input, where added token dimensions expose a nonzero \(R_{\mathrm{ref}}(X)\) and operator-level contributions are visualized by signed projection onto \(R_{\mathrm{ref}}(X)/\normsmall{R_{\mathrm{ref}}(X)}\).}
    \label{fig:operator_level_leakage_model_specific}
\end{figure}

Figure~\ref{fig:operator_level_leakage_model_specific} shows the model-specific results for Observation~1. Although the scales vary across model families, the qualitative pattern is consistent with the aggregate result in Section~\ref{model_analysis}. Before placeholder padding, the controlled baseline satisfies \(\widehat R_{\mathrm{ref}}(\widehat X)=0\) by construction, while nonzero operator-level contributions are already present and cancel each other at the input side. After placeholder padding, this cancellation no longer holds: added token dimensions expose much larger nonzero RED at harmful inputs. Across models, all operator-level sources leave observable contributions after padding, with the terminal source appearing as a particularly stable and prominent contributor. These model-specific results support the same conclusion as the aggregate result: the influence from harmful-semantics interpretation to answer-versus-refusal behavior is often relatively robust, while the model still retains substantial side-channels outside the harmful-semantics interpretation. 

\subsubsection{Attack-specific and model-specific results of jailbreak analysis}
\label{app:attack_results}

\begin{figure}[t]
    \centering
    \begin{subfigure}[t]{0.32\linewidth}
        \centering
        \includegraphics[width=\linewidth]{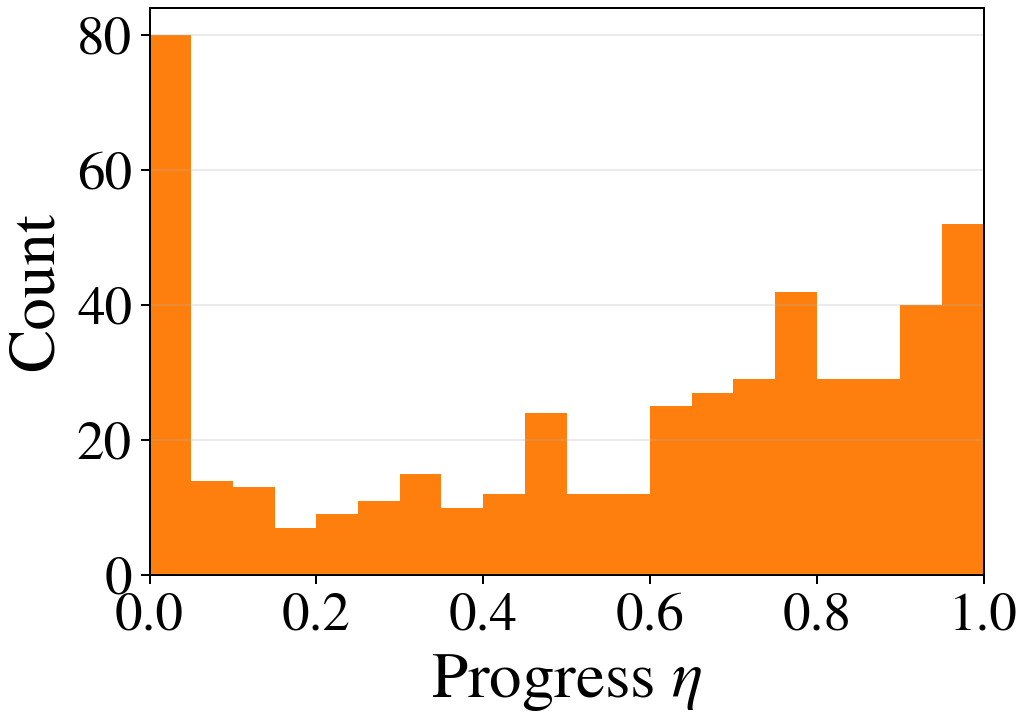}
        \caption{Earliest jailbreak progress (GCG).}
        \label{fig:jailbreak_progress_gcg}
    \end{subfigure}
    \hfill
    \begin{subfigure}[t]{0.32\linewidth}
        \centering
        \includegraphics[width=\linewidth]{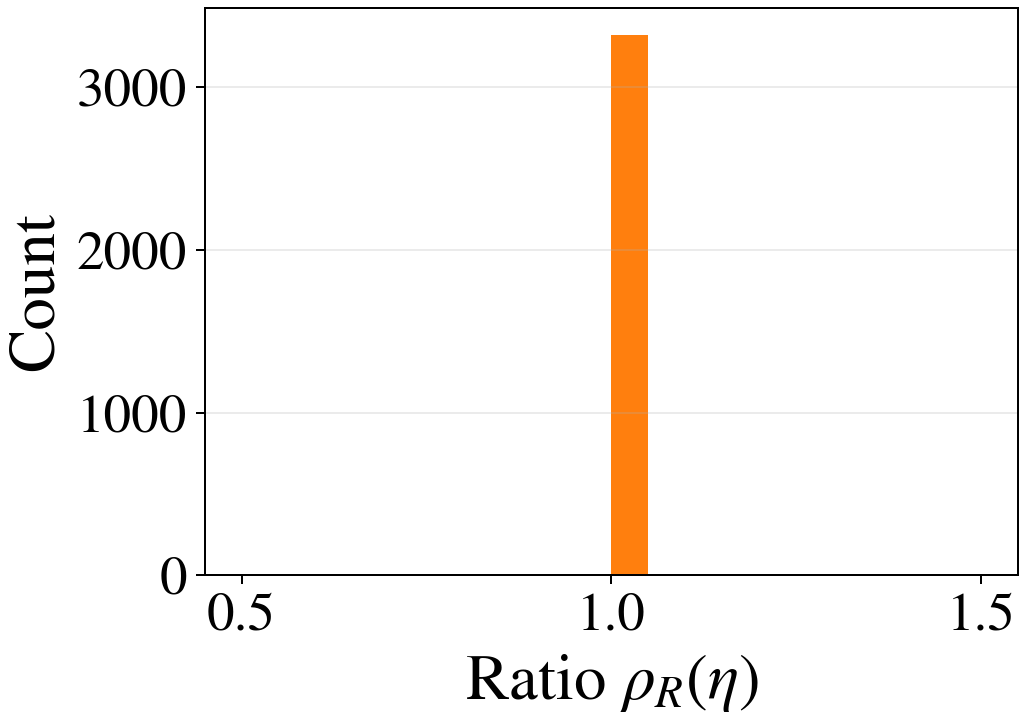}
        \caption{Signed \(R_{\mathrm{ref}}\) ratio (GCG).}
        \label{fig:jailbreak_ratio_gcg}
    \end{subfigure}
    \hfill
    \begin{subfigure}[t]{0.32\linewidth}
        \centering
        \includegraphics[width=\linewidth]{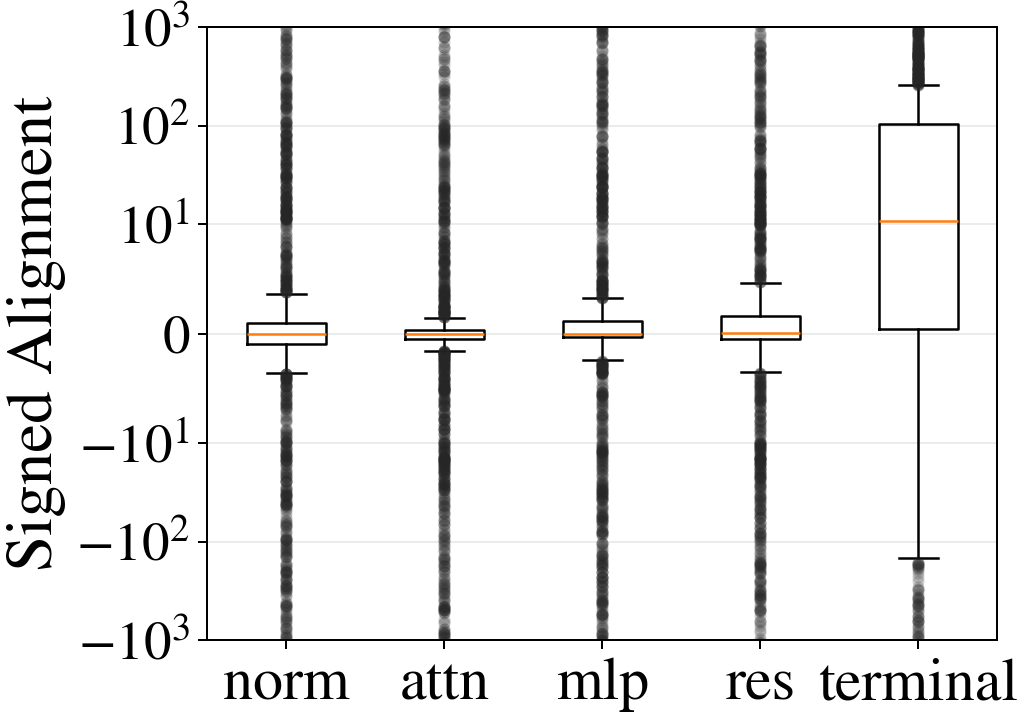}
        \caption{Signed operator-level alignment (GCG).}
        \label{fig:jailbreak_family_gcg}
    \end{subfigure}

    \begin{subfigure}[t]{0.32\linewidth}
        \centering
        \includegraphics[width=\linewidth]{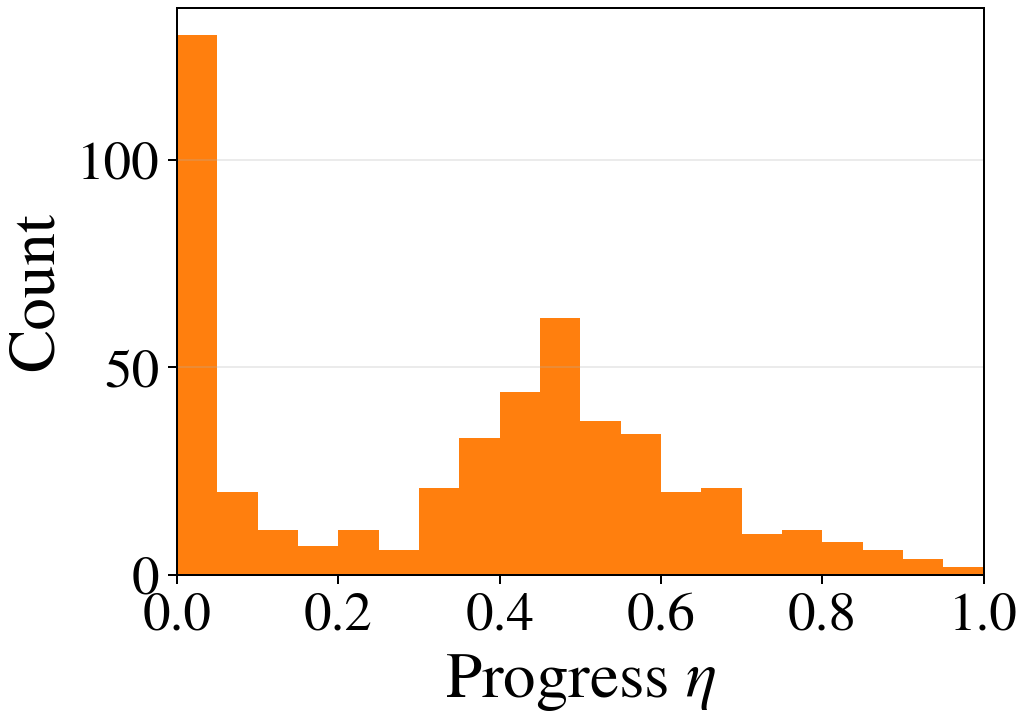}
        \caption{Earliest jailbreak progress (AutoDAN).}
        \label{fig:jailbreak_progress_autodan}
    \end{subfigure}
    \hfill
    \begin{subfigure}[t]{0.32\linewidth}
        \centering
        \includegraphics[width=\linewidth]{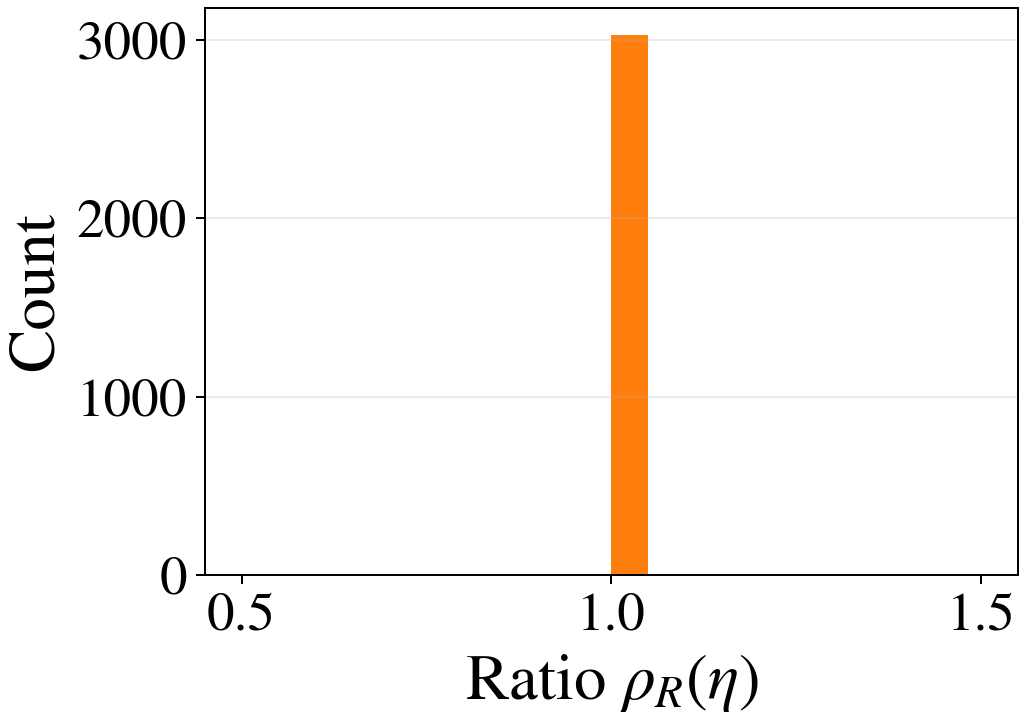}
        \caption{Signed \(R_{\mathrm{ref}}\) ratio (AutoDAN).}
        \label{fig:jailbreak_ratio_autodan}
    \end{subfigure}
    \hfill
    \begin{subfigure}[t]{0.32\linewidth}
        \centering
        \includegraphics[width=\linewidth]{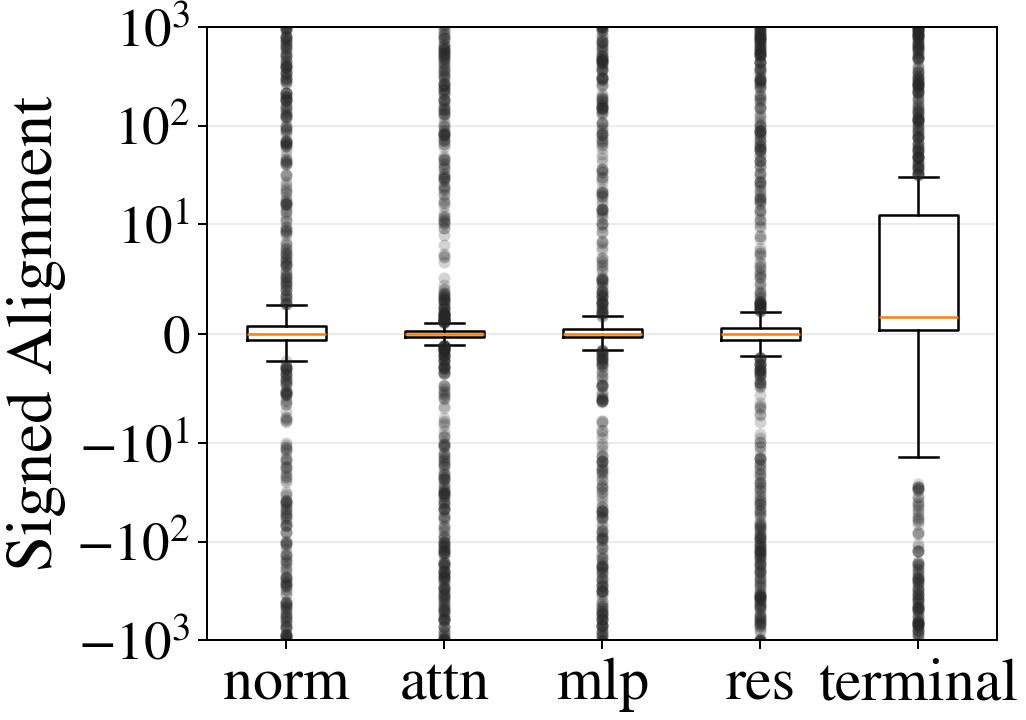}
        \caption{Signed operator-level alignment (AutoDAN).}
        \label{fig:jailbreak_family_autodan}
    \end{subfigure}

    \begin{subfigure}[t]{0.32\linewidth}
        \centering
        \includegraphics[width=\linewidth]{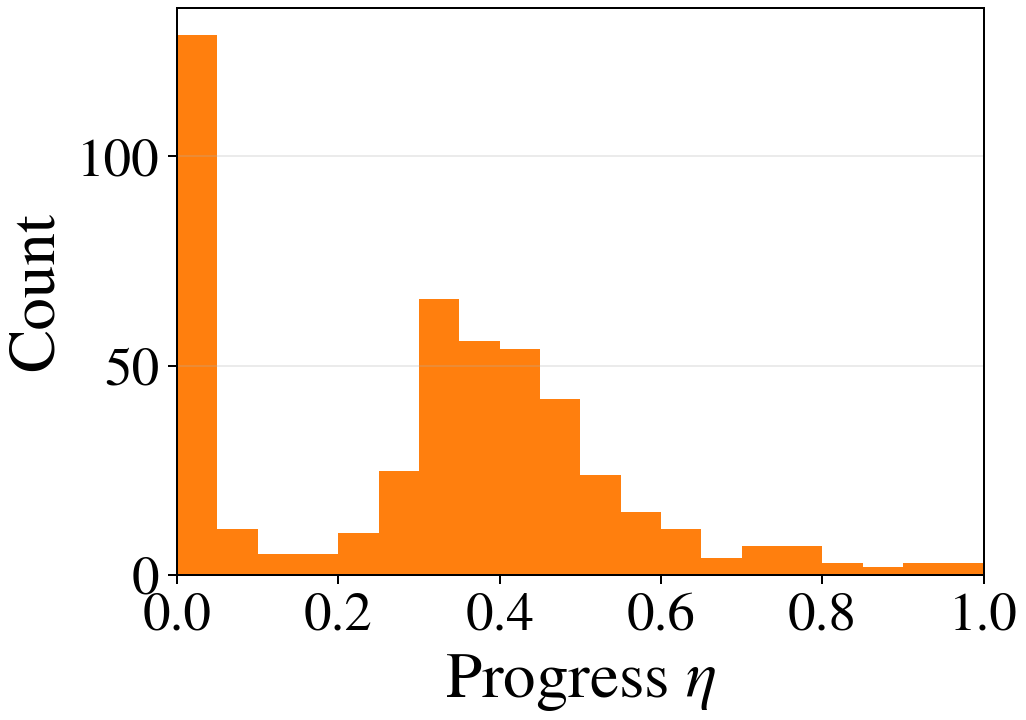}
        \caption{Earliest jailbreak progress (GPTFuzzer).}
        \label{fig:jailbreak_progress_gptfuzzer}
    \end{subfigure}
    \hfill
    \begin{subfigure}[t]{0.32\linewidth}
        \centering
        \includegraphics[width=\linewidth]{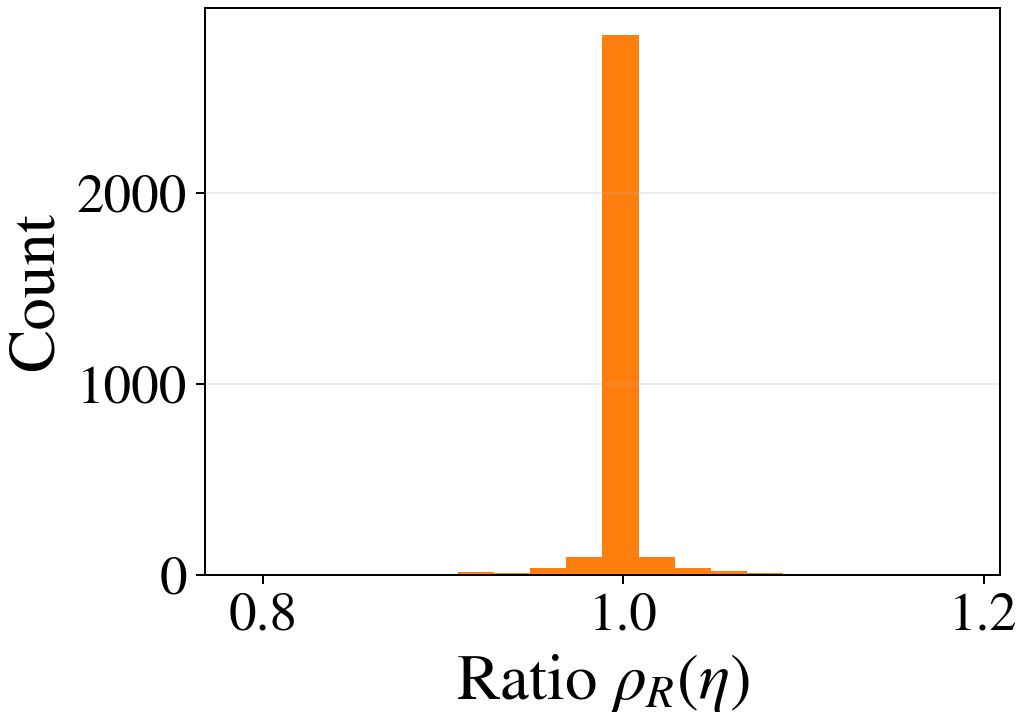}
        \caption{Signed \(R_{\mathrm{ref}}\) ratio (GPTFuzzer).}
        \label{fig:jailbreak_ratio_gptfuzzer}
    \end{subfigure}
    \hfill
    \begin{subfigure}[t]{0.32\linewidth}
        \centering
        \includegraphics[width=\linewidth]{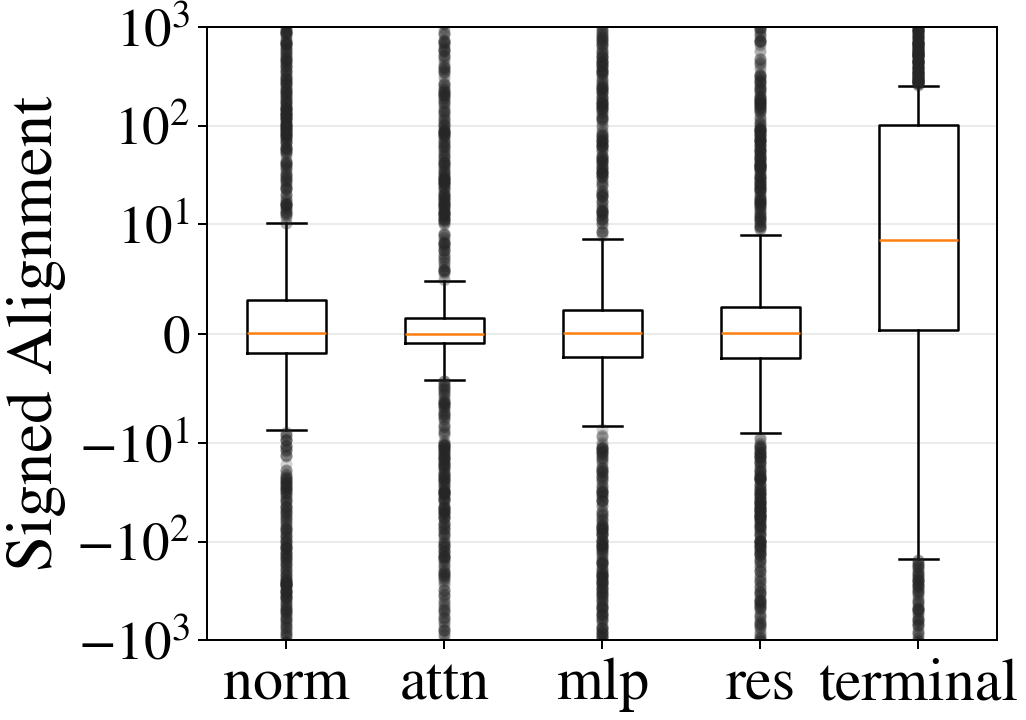}
        \caption{Signed operator-level alignment (GPTFuzzer).}
        \label{fig:jailbreak_family_gptfuzzer}
    \end{subfigure}

    \begin{subfigure}[t]{0.32\linewidth}
        \centering
        \includegraphics[width=\linewidth]{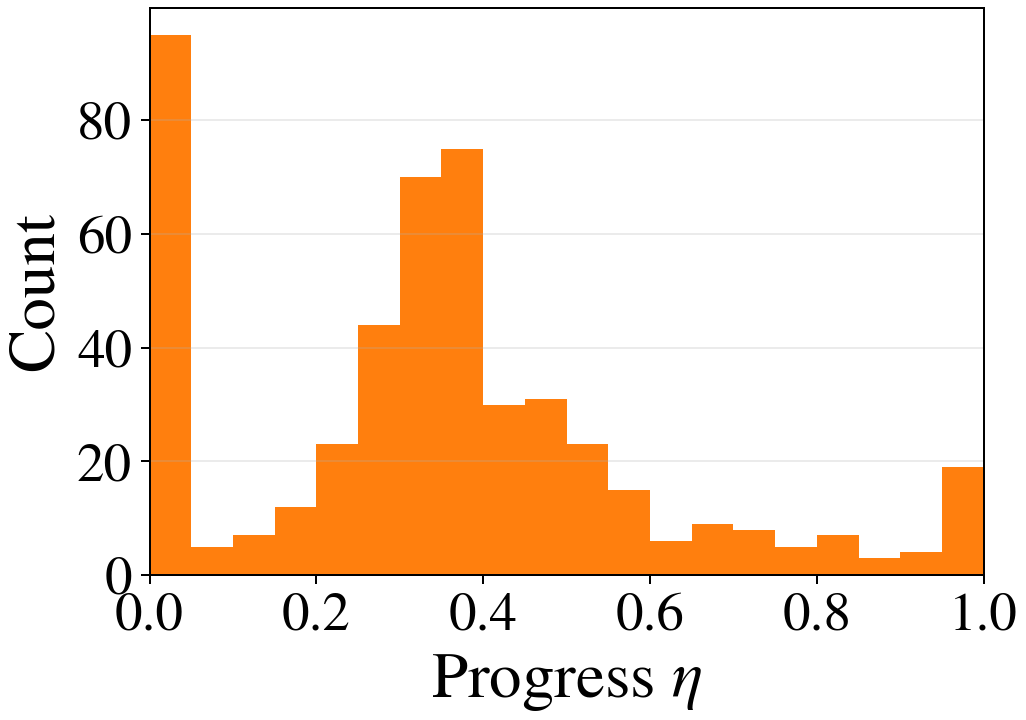}
        \caption{Earliest jailbreak progress (TAP).}
        \label{fig:jailbreak_progress_tap}
    \end{subfigure}
    \hfill
    \begin{subfigure}[t]{0.32\linewidth}
        \centering
        \includegraphics[width=\linewidth]{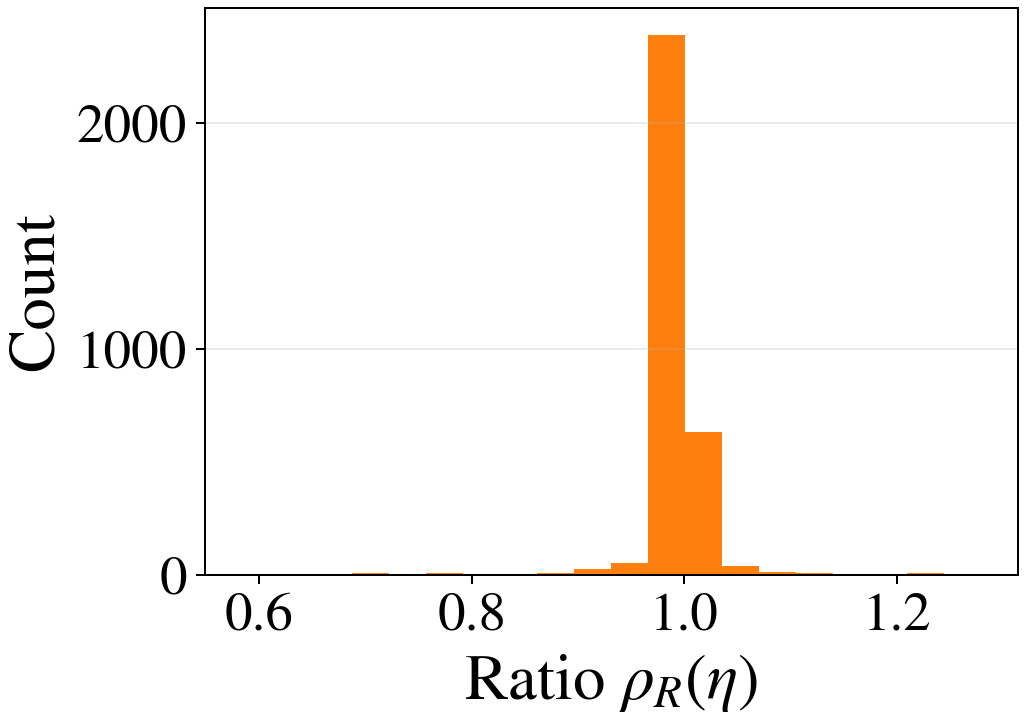}
        \caption{Signed \(R_{\mathrm{ref}}\) ratio (TAP).}
        \label{fig:jailbreak_ratio_tap}
    \end{subfigure}
    \hfill
    \begin{subfigure}[t]{0.32\linewidth}
        \centering
        \includegraphics[width=\linewidth]{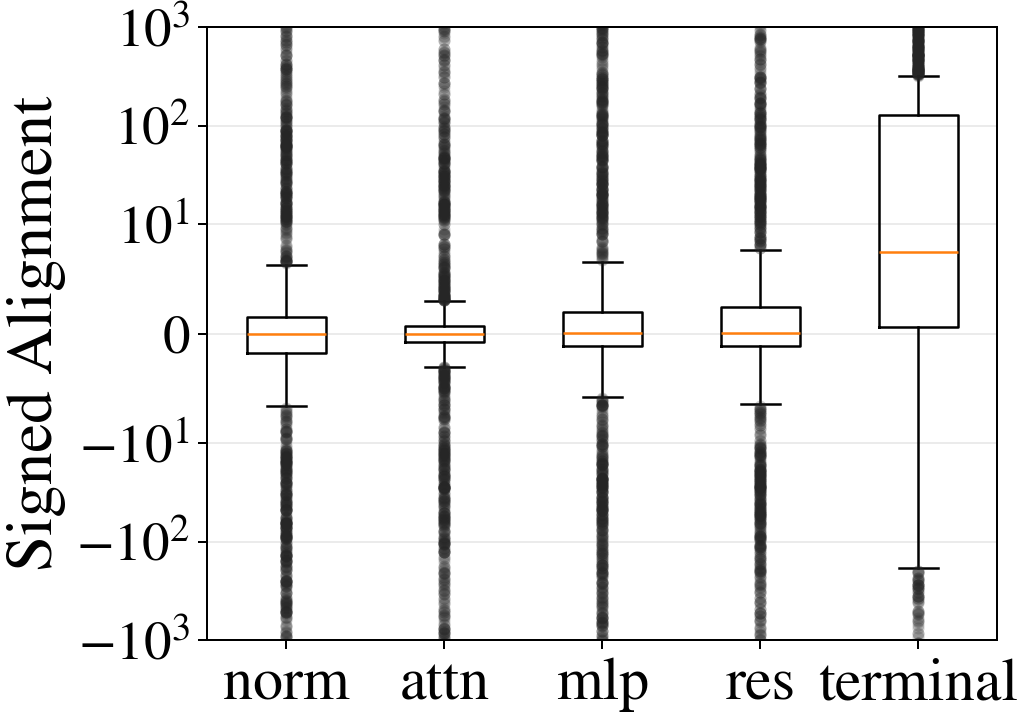}
        \caption{Signed operator-level alignment (TAP).}
        \label{fig:jailbreak_family_tap}
    \end{subfigure}

    \begin{subfigure}[t]{0.32\linewidth}
        \centering
        \includegraphics[width=\linewidth]{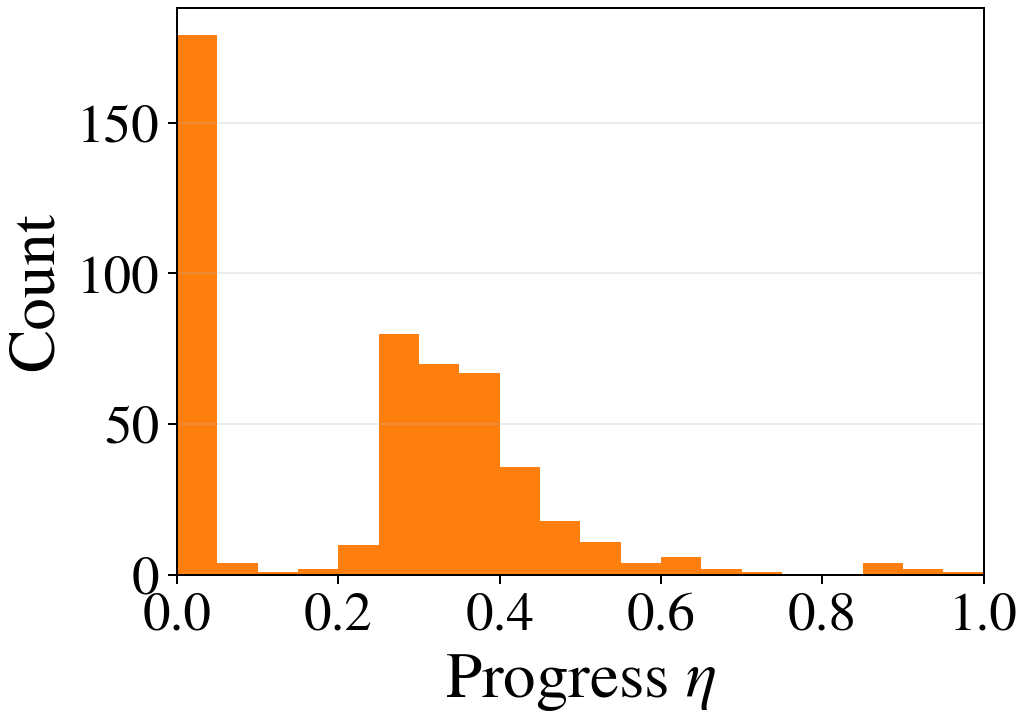}
        \caption{Earliest jailbreak progress (ReNeLLM).}
        \label{fig:jailbreak_progress_renellm}
    \end{subfigure}
    \hfill
    \begin{subfigure}[t]{0.32\linewidth}
        \centering
        \includegraphics[width=\linewidth]{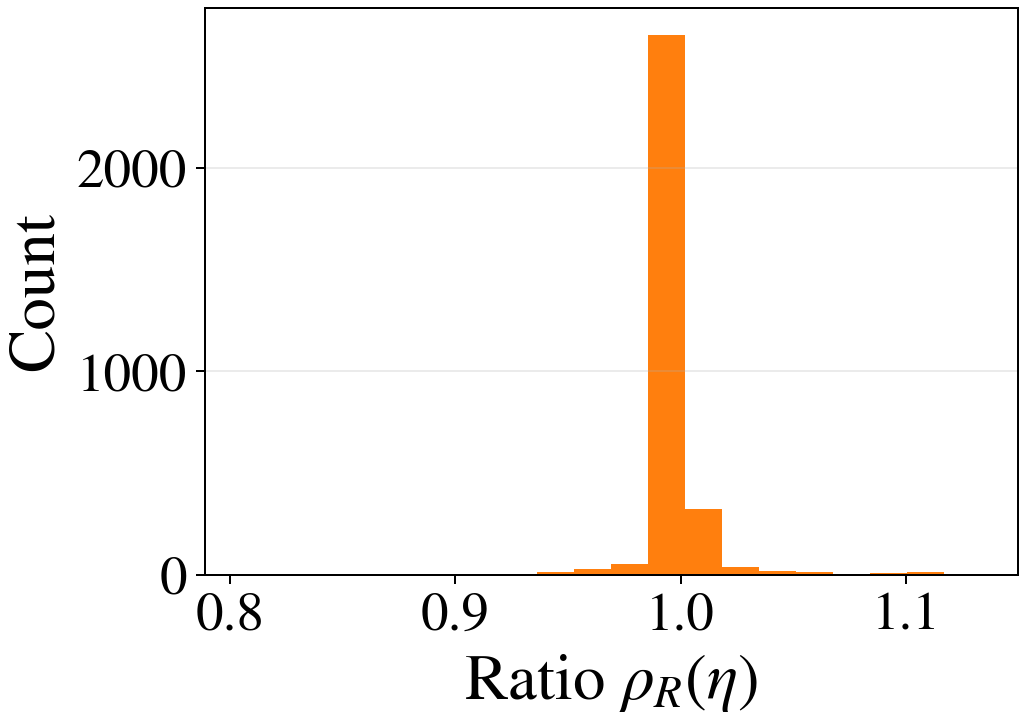}
        \caption{Signed \(R_{\mathrm{ref}}\) ratio (ReNeLLM).}
        \label{fig:jailbreak_ratio_renellm}
    \end{subfigure}
    \hfill
    \begin{subfigure}[t]{0.32\linewidth}
        \centering
        \includegraphics[width=\linewidth]{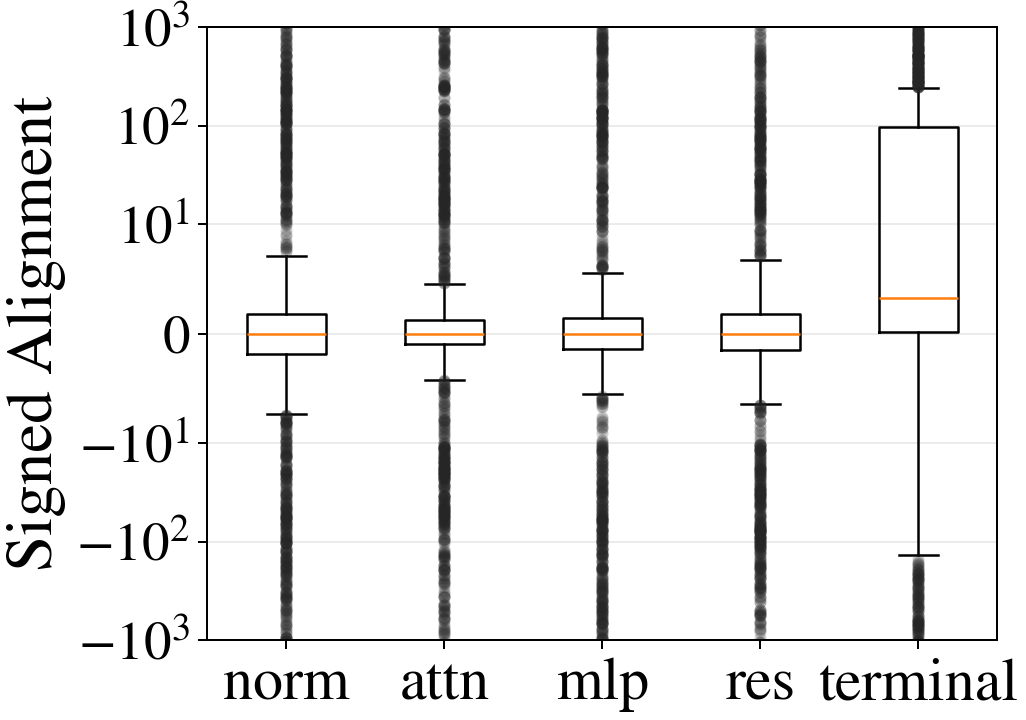}
        \caption{Signed operator-level alignment (ReNeLLM).}
        \label{fig:jailbreak_family_renellm}
    \end{subfigure}

    \caption{Attack-specific jailbreak analysis under the continuous input-transformation view. Results are reported separately for each attack method and aggregated over all samples and models within that attack method. In each row, the left panel shows the earliest jailbreak progress, the middle panel shows the signed \(R_{\mathrm{ref}}\) ratio, and the right panel shows the signed alignment between the transformation tangent and each operator-level contribution \(S_f^\Sigma\).}
    \label{fig:jailbreak_main_attack_specific}
\end{figure}

Figure~\ref{fig:jailbreak_main_attack_specific} shows the attack-specific results for Observation~2. The earliest-progress distributions reveal some differences across attack methods. For GCG, the successful-jailbreak progress values are more dispersed along the reference input transformation, which may reflect that gradient-guided nonsensical adversarial suffixes rely more strongly on the completed prompt construction. In contrast, AutoDAN, GPTFuzzer, TAP, and ReNeLLM are closer to the aggregate pattern in Section~\ref{jailbreak_analysis}: a substantial fraction of cases are judged successful in the first half of the transformation, with many occurring near the beginning.

Despite these differences in earliest progress, the remaining results are consistent across attack methods. The signed ratio \(\rho_R(\eta)\) remains close to \(1\), indicating that most of the local reference target-behavior change is accounted for by \(R_{\mathrm{ref}}\) under our reference construction. The signed operator-level alignments also show that terminal-source contributions are most consistently positive, whereas other operator-level contributions are more dispersed. Across attack methods, these results support the same interpretation as the aggregate result: successful jailbreaks can be understood as exploiting refusal-escape directions already visible at harmful inputs, especially the terminal-source side-channel, rather than as creating a separate mechanism along the reference input transformation.

\begin{figure}[t]
    \centering
    \begin{subfigure}[t]{0.32\linewidth}
        \centering
        \includegraphics[width=\linewidth]{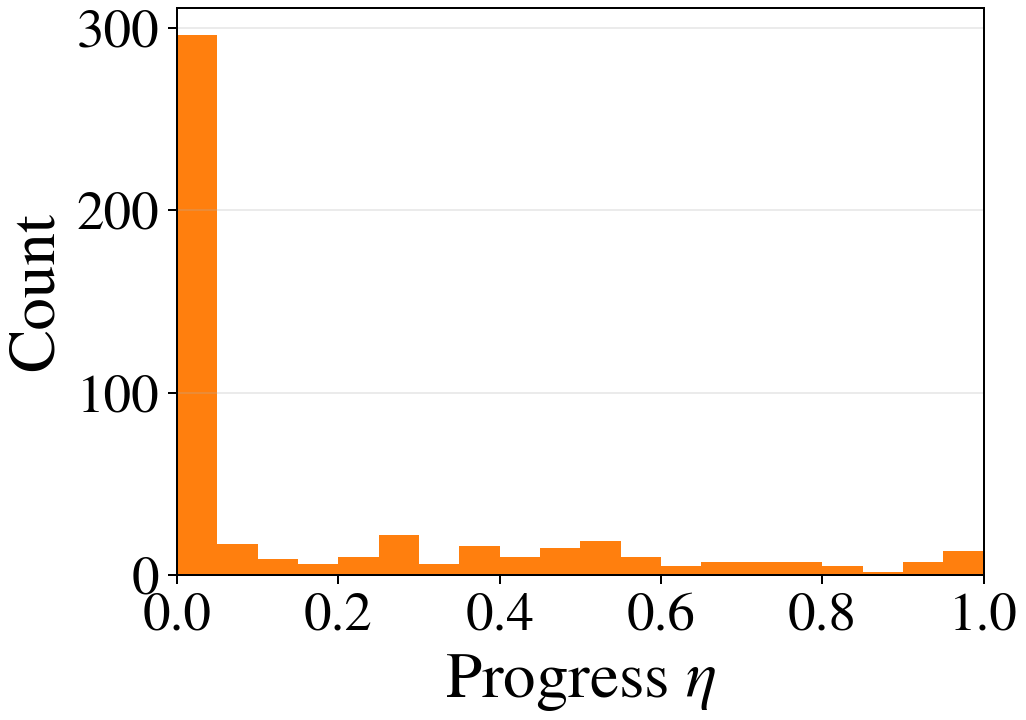}
        \caption{Earliest jailbreak progress (Qwen3-4B).}
        \label{fig:jailbreak_progress_qwen3-4b}
    \end{subfigure}
    \hfill
    \begin{subfigure}[t]{0.32\linewidth}
        \centering
        \includegraphics[width=\linewidth]{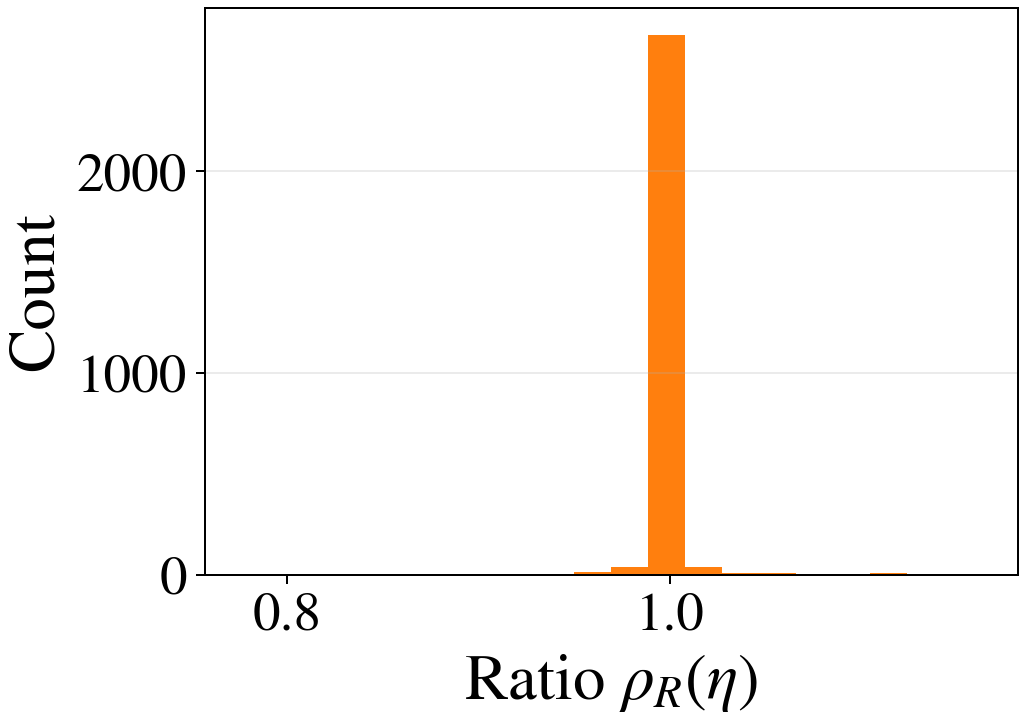}
        \caption{Signed \(R_{\mathrm{ref}}\) ratio (Qwen3-4B).}
        \label{fig:jailbreak_ratio_qwen3-4b}
    \end{subfigure}
    \hfill
    \begin{subfigure}[t]{0.32\linewidth}
        \centering
        \includegraphics[width=\linewidth]{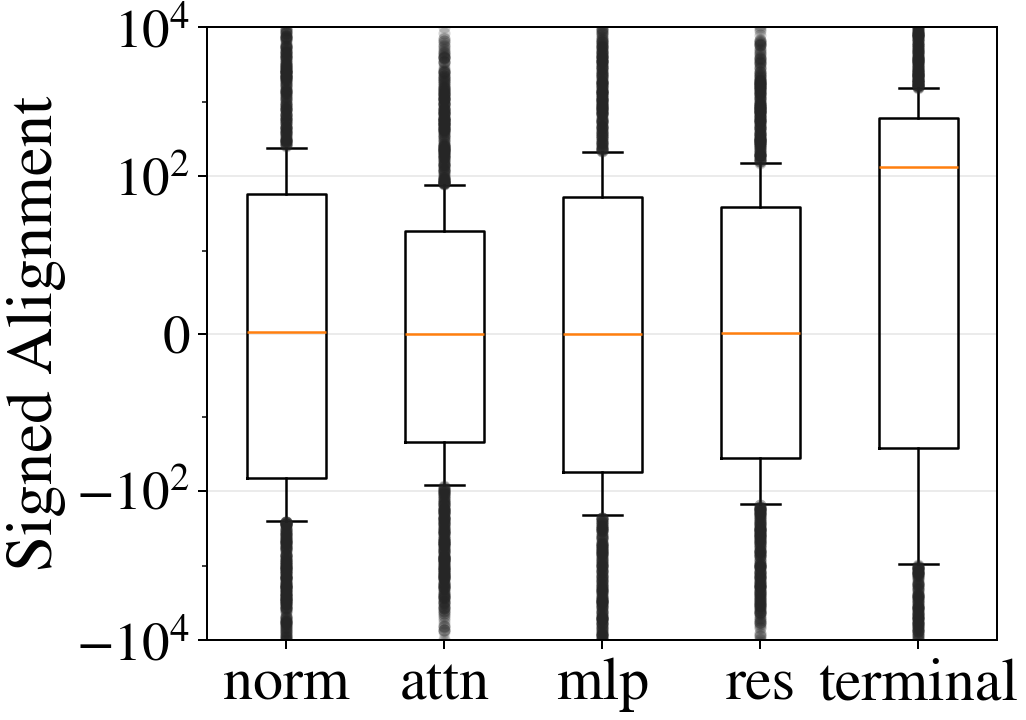}
        \caption{Signed operator-level alignment (Qwen3-4B).}
        \label{fig:jailbreak_family_qwen3-4b}
    \end{subfigure}

    \begin{subfigure}[t]{0.32\linewidth}
        \centering
        \includegraphics[width=\linewidth]{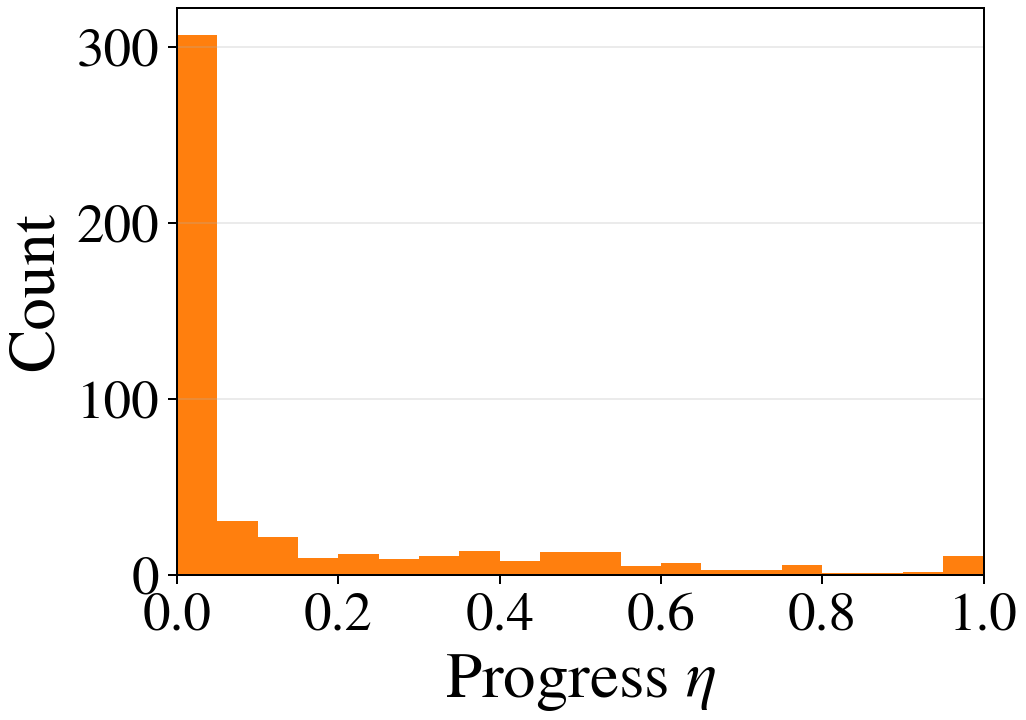}
        \caption{Earliest jailbreak progress (Qwen3-14B).}
        \label{fig:jailbreak_progress_qwen3-14b}
    \end{subfigure}
    \hfill
    \begin{subfigure}[t]{0.32\linewidth}
        \centering
        \includegraphics[width=\linewidth]{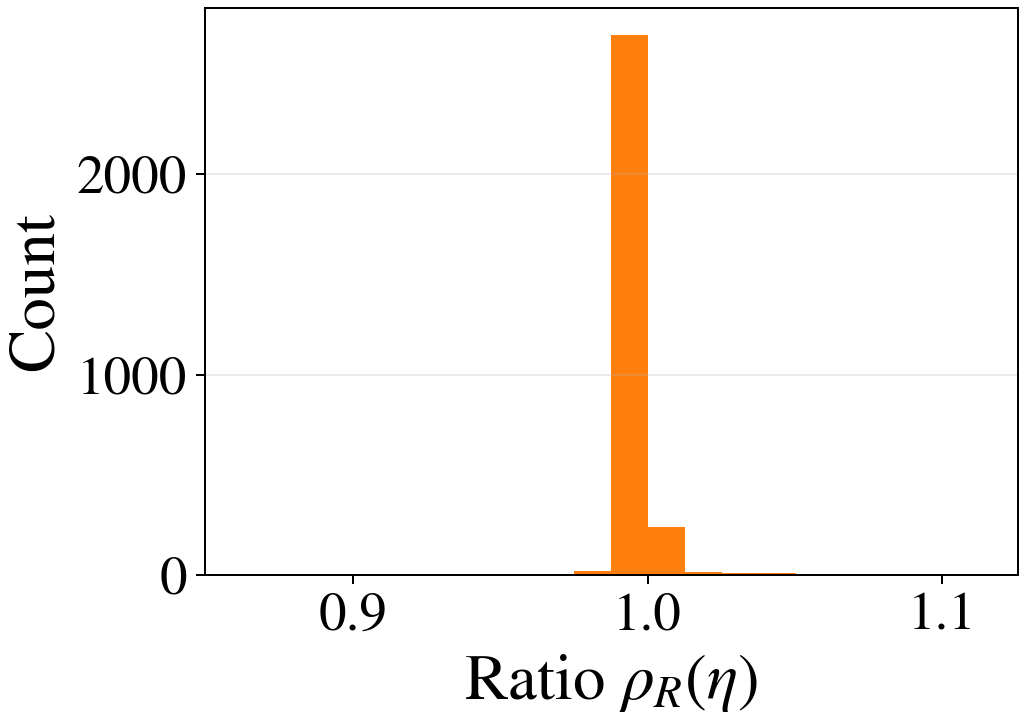}
        \caption{Signed \(R_{\mathrm{ref}}\) ratio (Qwen3-14B).}
        \label{fig:jailbreak_ratio_qwen3-14b}
    \end{subfigure}
    \hfill
    \begin{subfigure}[t]{0.32\linewidth}
        \centering
        \includegraphics[width=\linewidth]{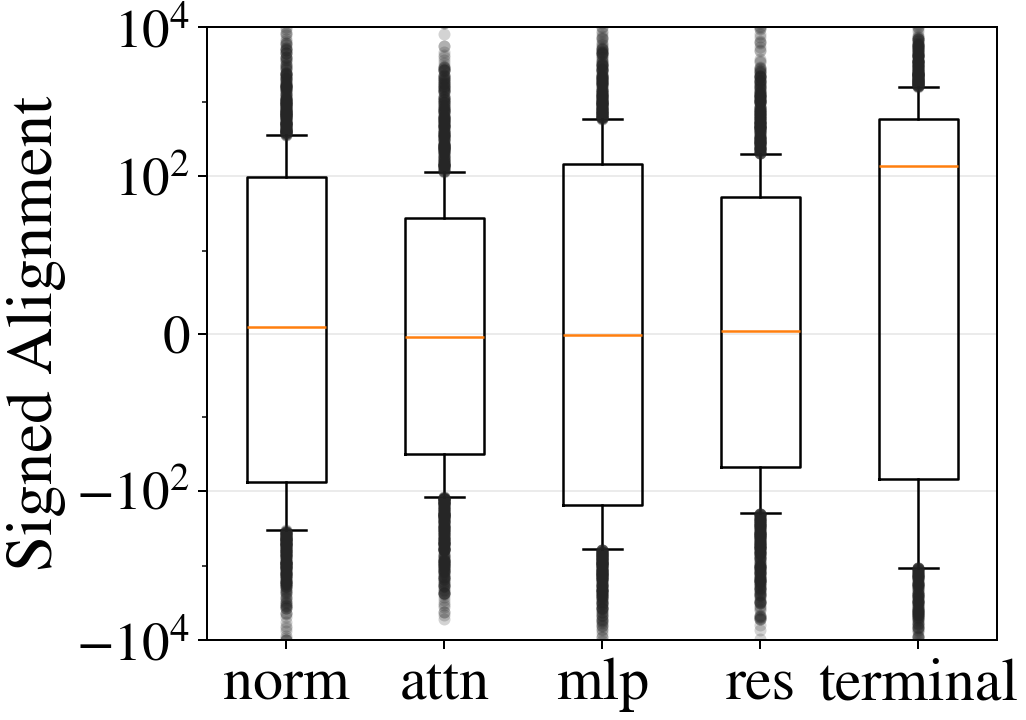}
        \caption{Signed operator-level alignment (Qwen3-14B).}
        \label{fig:jailbreak_family_qwen3-14b}
    \end{subfigure}

    \begin{subfigure}[t]{0.32\linewidth}
        \centering
        \includegraphics[width=\linewidth]{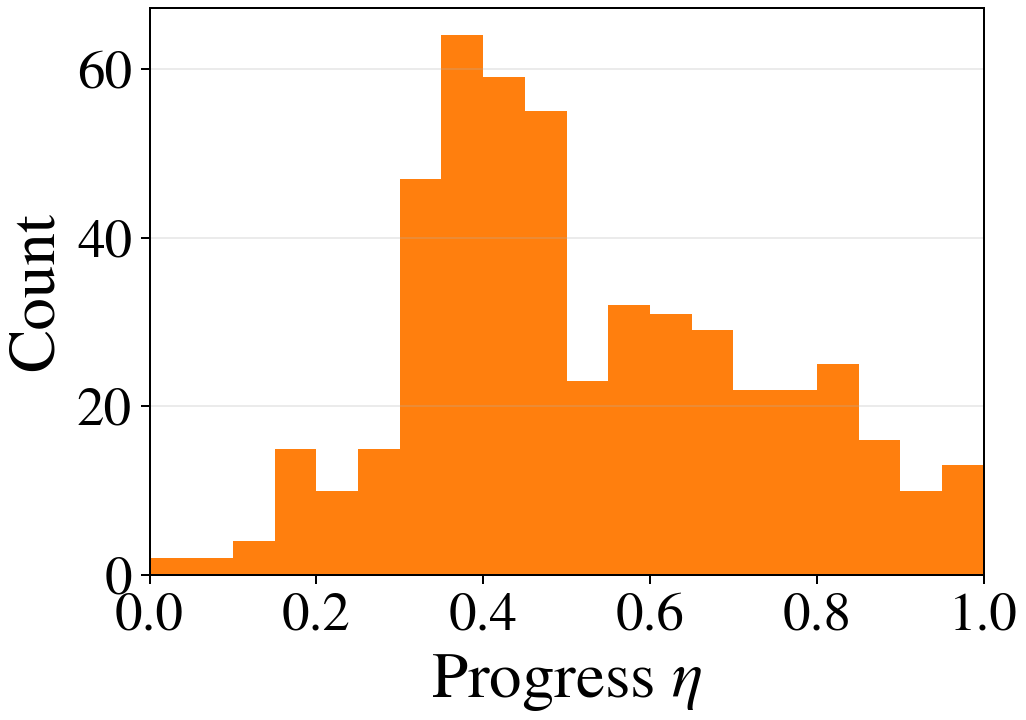}
        \caption{Earliest jailbreak progress (Llama-3.1-8B-Instruct).}
        \label{fig:jailbreak_progress_llama-3.1-8b-instruct}
    \end{subfigure}
    \hfill
    \begin{subfigure}[t]{0.32\linewidth}
        \centering
        \includegraphics[width=\linewidth]{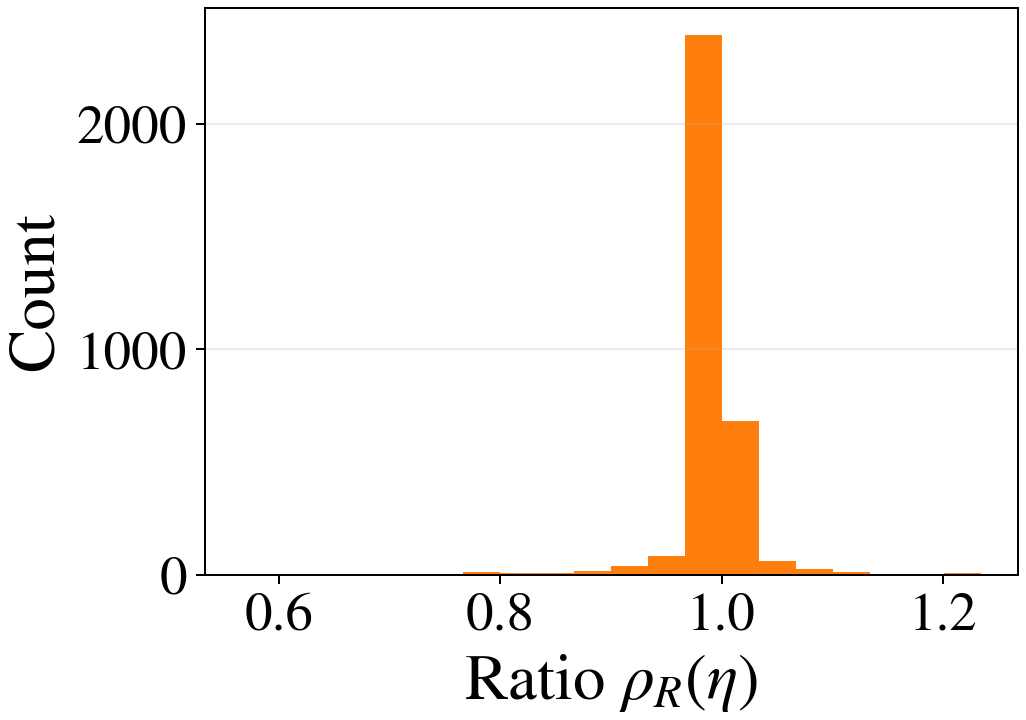}
        \caption{Signed \(R_{\mathrm{ref}}\) ratio (Llama-3.1-8B-Instruct).}
        \label{fig:jailbreak_ratio_llama-3.1-8b-instruct}
    \end{subfigure}
    \hfill
    \begin{subfigure}[t]{0.32\linewidth}
        \centering
        \includegraphics[width=\linewidth]{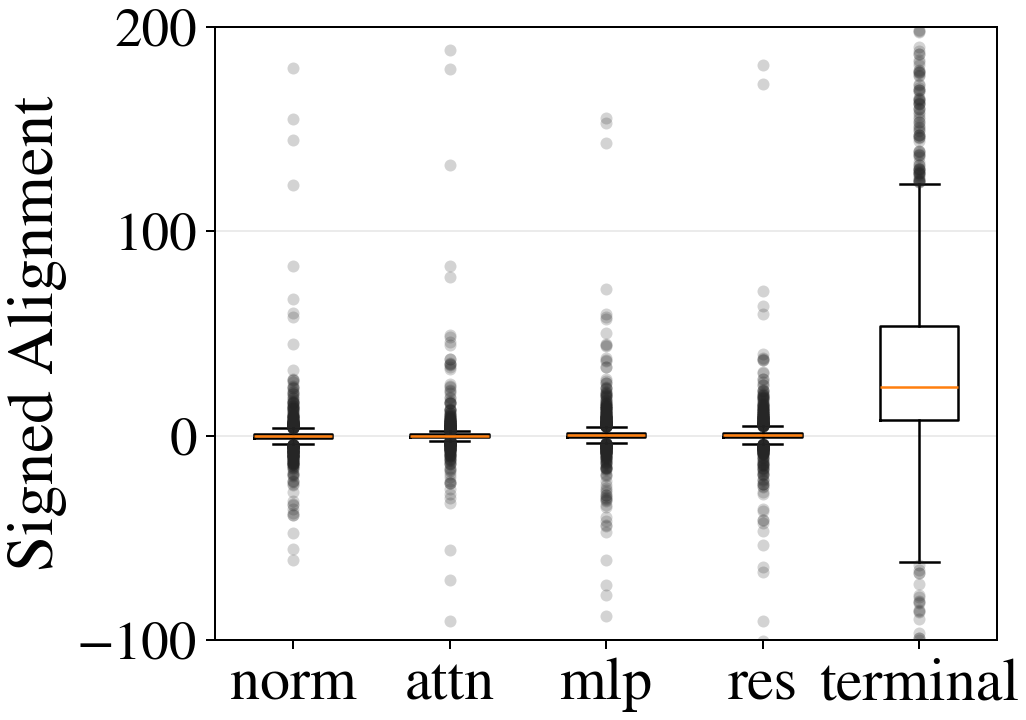}
        \caption{Signed operator-level alignment (Llama-3.1-8B-Instruct).}
        \label{fig:jailbreak_family_llama-3.1-8b-instruct}
    \end{subfigure}

    \begin{subfigure}[t]{0.32\linewidth}
        \centering
        \includegraphics[width=\linewidth]{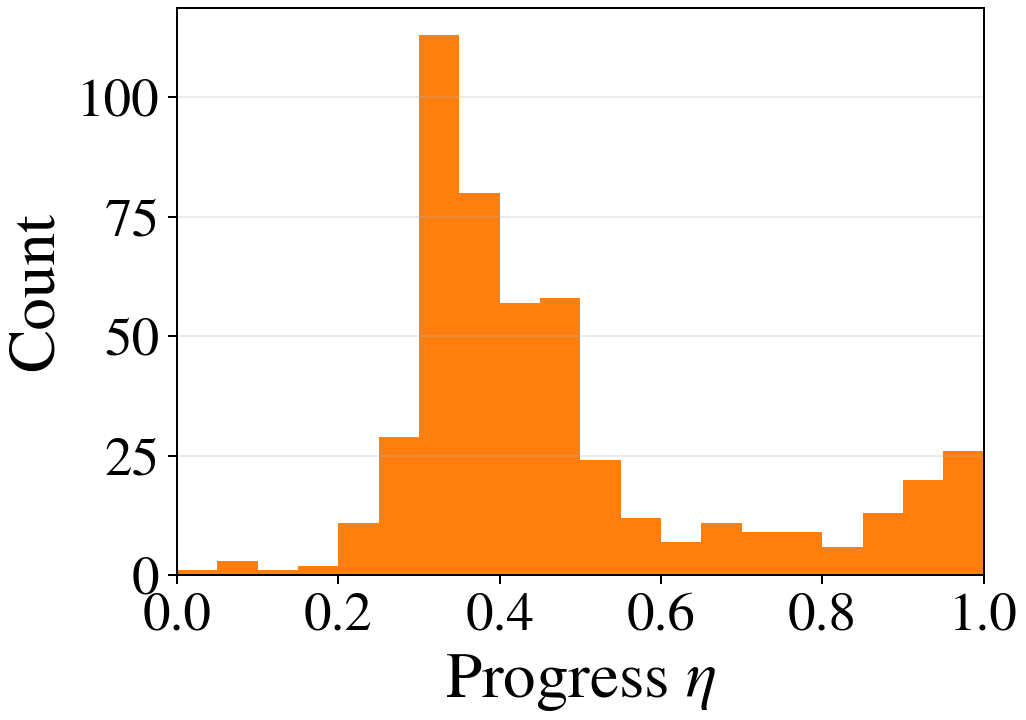}
        \caption{Earliest jailbreak progress (Gemma-3-4B-IT).}
        \label{fig:jailbreak_progress_gemma-3-4b-it}
    \end{subfigure}
    \hfill
    \begin{subfigure}[t]{0.32\linewidth}
        \centering
        \includegraphics[width=\linewidth]{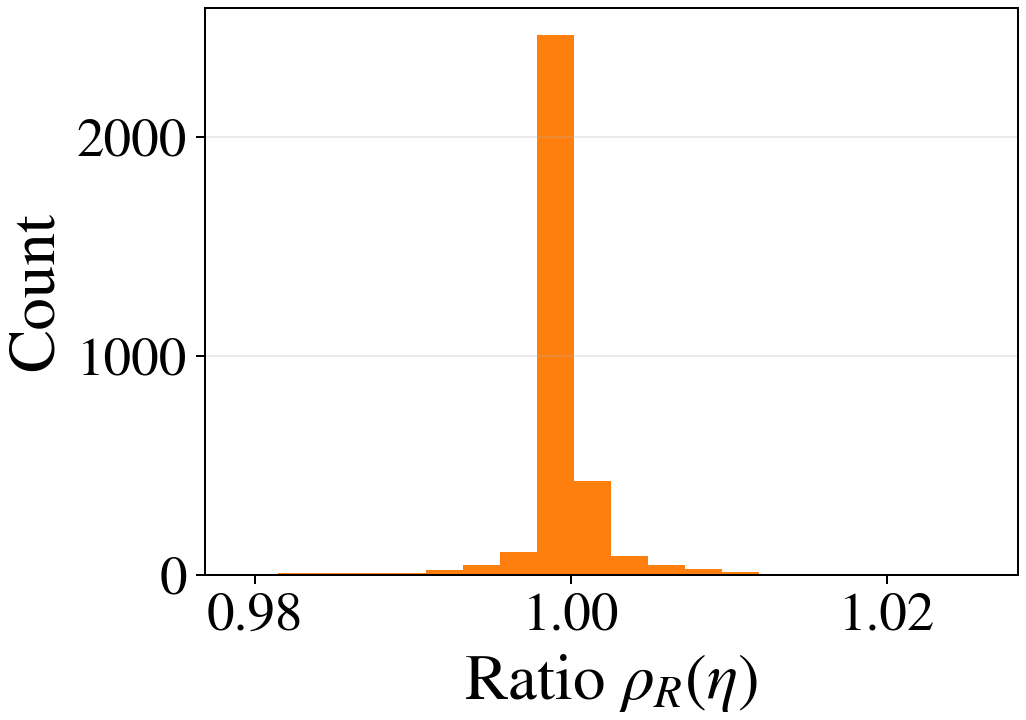}
        \caption{Signed \(R_{\mathrm{ref}}\) ratio (Gemma-3-4B-IT).}
        \label{fig:jailbreak_ratio_gemma-3-4b-it}
    \end{subfigure}
    \hfill
    \begin{subfigure}[t]{0.32\linewidth}
        \centering
        \includegraphics[width=\linewidth]{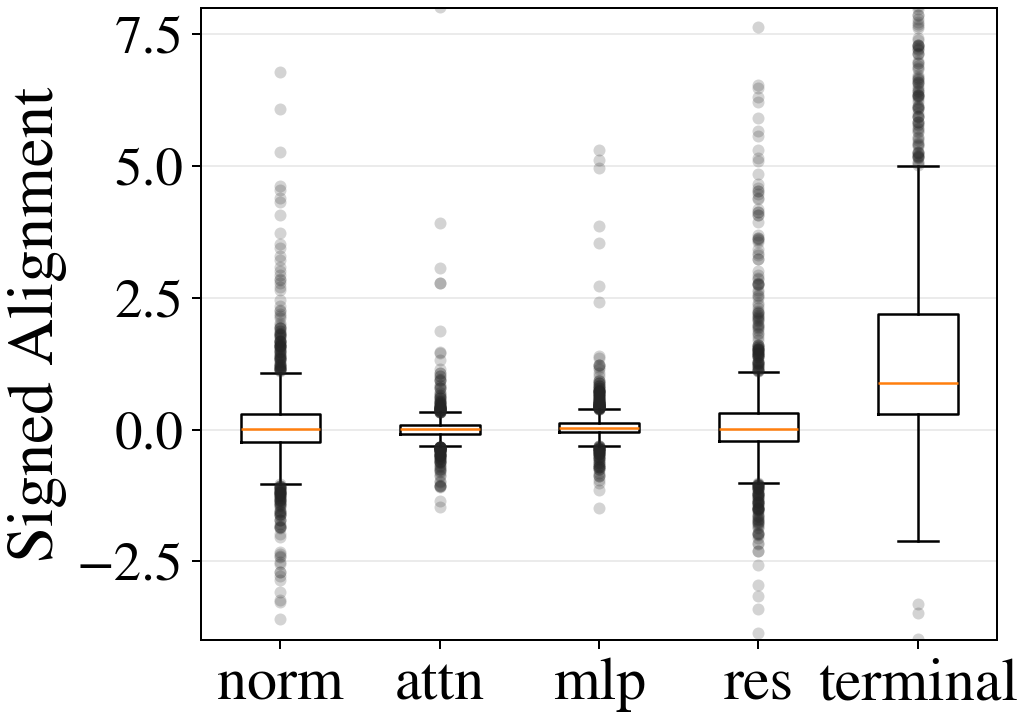}
        \caption{Signed operator-level alignment (Gemma-3-4B-IT).}
        \label{fig:jailbreak_family_gemma-3-4b-it}
    \end{subfigure}

    \begin{subfigure}[t]{0.32\linewidth}
        \centering
        \includegraphics[width=\linewidth]{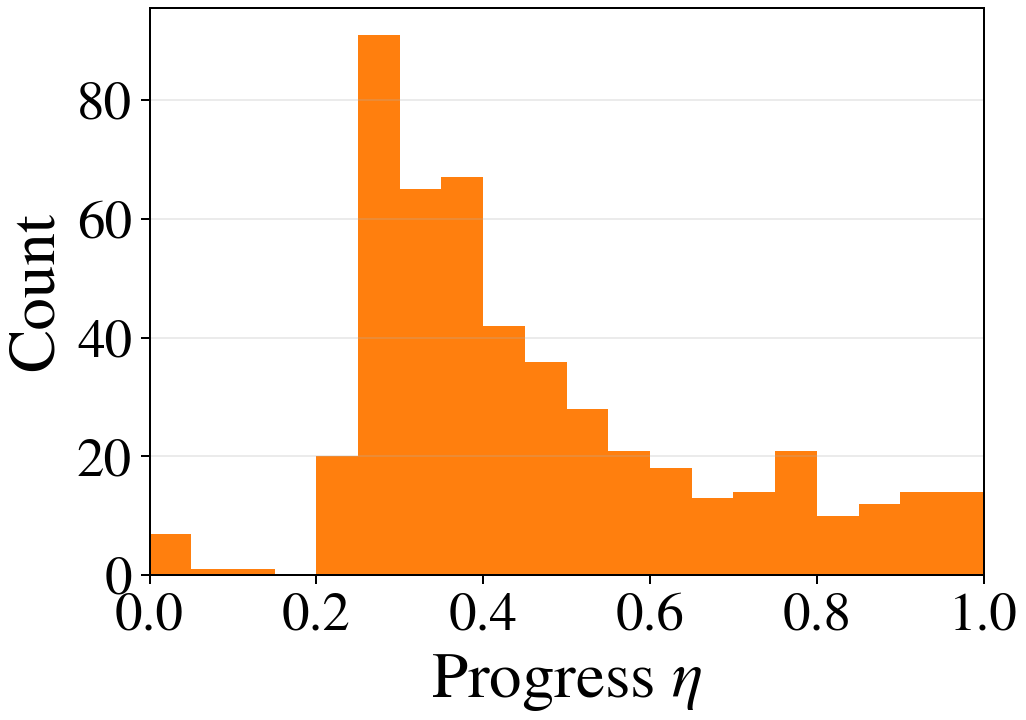}
        \caption{Earliest jailbreak progress (Gemma-3-12B-IT).}
        \label{fig:jailbreak_progress_gemma-3-12b-it}
    \end{subfigure}
    \hfill
    \begin{subfigure}[t]{0.32\linewidth}
        \centering
        \includegraphics[width=\linewidth]{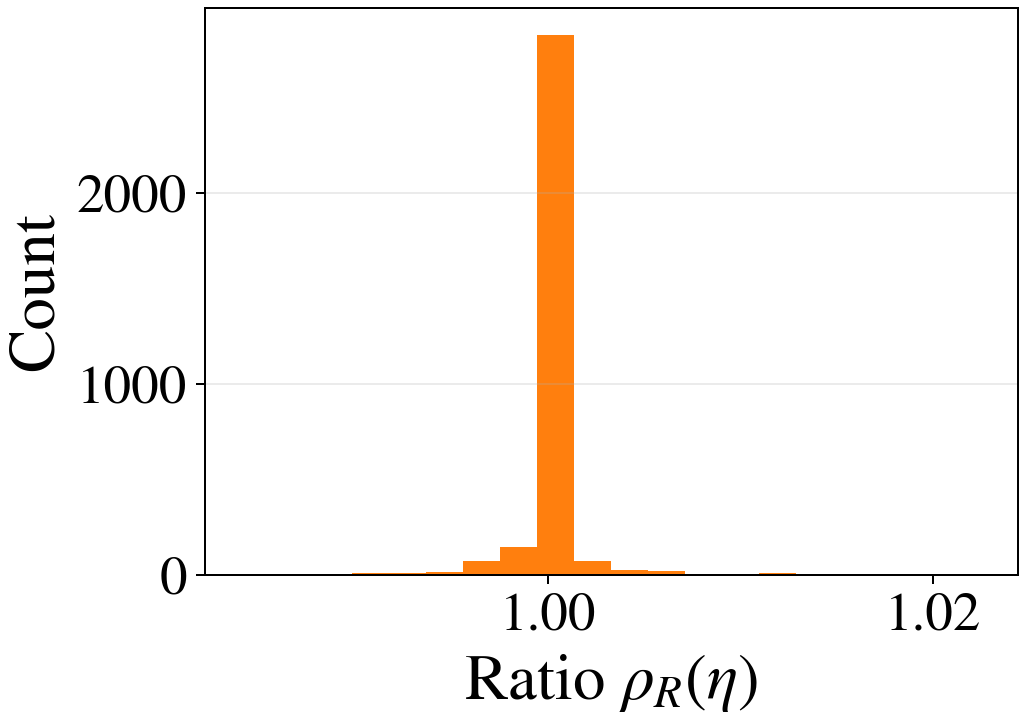}
        \caption{Signed \(R_{\mathrm{ref}}\) ratio (Gemma-3-12B-IT).}
        \label{fig:jailbreak_ratio_gemma-3-12b-it}
    \end{subfigure}
    \hfill
    \begin{subfigure}[t]{0.32\linewidth}
        \centering
        \includegraphics[width=\linewidth]{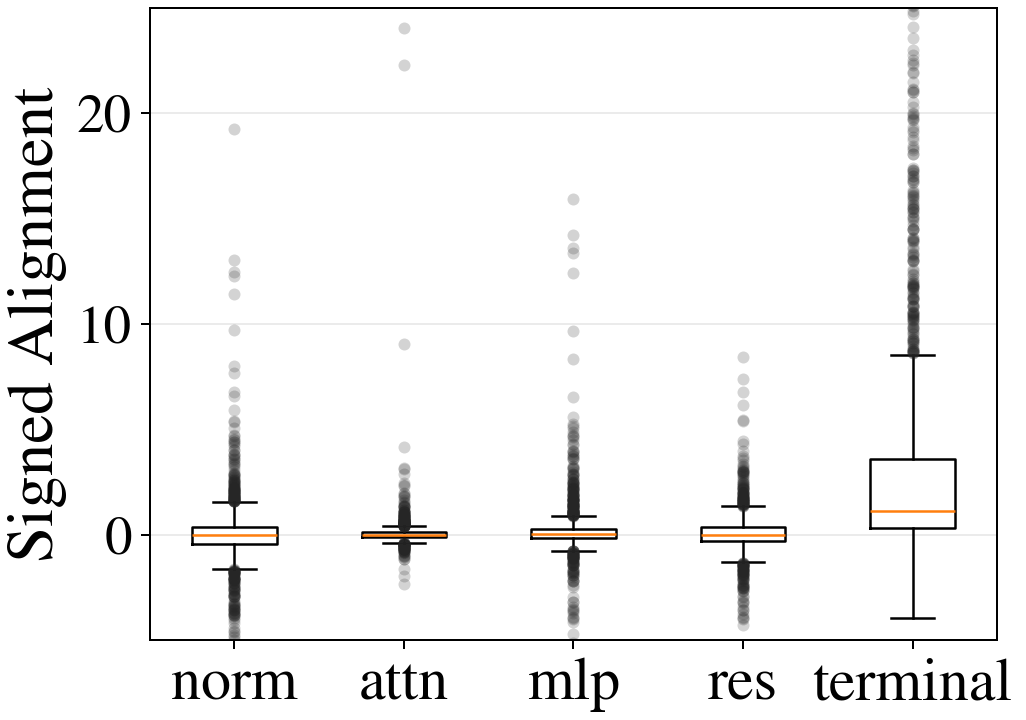}
        \caption{Signed operator-level alignment (Gemma-3-12B-IT).}
        \label{fig:jailbreak_family_gemma-3-12b-it}
    \end{subfigure}

    \caption{Model-specific jailbreak analysis under the continuous input-transformation view. Results are reported separately for each model and aggregated over all samples and attack methods within that model. In each row, the left panel shows the earliest jailbreak progress, the middle panel shows the signed \(R_{\mathrm{ref}}\) ratio, and the right panel shows the signed alignment between the transformation tangent and each operator-level contribution \(S_f^\Sigma\).}
    \label{fig:jailbreak_main_model_specific}
\end{figure}

Figure~\ref{fig:jailbreak_main_model_specific} shows the model-specific results for Observation~2. The earliest-progress results reveal noticeable differences across models. For Qwen3-4B and Qwen3-14B, many jailbreak transitions are judged successful within the first \(5\%\) of the reference input transformation, suggesting that their refusal-to-answer behavior can be changed by very local perturbations around the harmful input under our reference settings. In contrast, for Llama-3.1-8B-Instruct, Gemma-3-4B-IT, and Gemma-3-12B-IT, successful judgments typically appear later, often between \(20\%\) and \(60\%\) of the transformation. This suggests that, under the same reference setup, the safety alignment of Qwen3-4B and Qwen3-14B is more vulnerable to local jailbreak perturbations than that of Llama-3.1-8B-Instruct, Gemma-3-4B-IT, and Gemma-3-12B-IT.

The signed \(R_{\mathrm{ref}}\) ratios remain close to \(1\) across models, indicating that most of the local reference target-behavior change is accounted for by the reference refusal-escape direction. The operator-level alignments further help explain the model-specific difference above: Qwen3-4B and Qwen3-14B show larger signed alignment scales than Llama-3.1-8B-Instruct, Gemma-3-4B-IT, and Gemma-3-12B-IT, indicating stronger local operator-level contributions along the reference input transformation. This is consistent with their earlier jailbreak progress and suggests that their safety behavior is more sensitive to local RED-aligned perturbations. At the same time, the qualitative pattern remains consistent with the aggregate result in Section~\ref{jailbreak_analysis}: terminal-source contributions are most consistently aligned in the positive direction, whereas other operator-level contributions are more dispersed. These model-specific results support the observation that successful jailbreak transitions exhibit refusal-to-answer shifts largely aligned with RED, with terminal-source contributions forming a stable and prominent part of this shift.

\end{document}